\renewcommand{\email}[2][]{
\ifx\emails\@empty\relax\else{\g@addto@macro\emails{,\space}}\fi
\@ifnotempty{#1}{\g@addto@macro\emails{\textrm{(#1)}\space}}
\g@addto@macro\emails{#2}
}
\numberwithin{equation}{section}
\theoremstyle{plain}
\newtheorem{theorem}{Theorem}[section]
\newtheorem{proposition}[theorem]{Proposition}
\newtheorem{corollary}[theorem]{Corollary}
\newtheorem{lemma}[theorem]{Lemma}
\theoremstyle{definition}
\newtheorem{assumption}[theorem]{Assumption}
\newtheorem{remark}[theorem]{Remark}
\newtheorem{example}[theorem]{Example}
\newtheorem{definition}[theorem]{Definition}
\newcommand{\cl}{\textnormal{cl}}
\newcommand{\R}{\mathbb{R}}      
\newcommand{\QW}{\mathbb{Q}}
\newcommand{\E}{\mathbb{E}}
\newcommand{\tn}{\textnormal}
\newcommand{\ind}{\textbf{1}}
\newcommand{\PW}{\mathbb{P}}
\newcommand{\Norm}{\|\cdot\|}
\newcommand{\N}{\mathbb{N}}
\newcommand{\dom}{\textnormal{dom}}
\newcommand{\eps}{\varepsilon}
\newcommand{\Borel}{\mathbb{B}}
\newcommand{\CF}{\mathcal F}
\newcommand{\CH}{\mathcal H}
\newcommand{\CB}{\mathcal B}
\newcommand{\CX}{\mathcal X}
\newcommand{\CP}{\mathcal P}
\newcommand{\CC}{\mathcal C}
\newcommand{\CS}{\mathcal S}
\newcommand{\CY}{\mathcal Y}
\newcommand{\CM}{\mathcal M}
\newcommand{\CL}{\mathcal L}
\newcommand{\acc}{\mathcal A}
\newcommand{\price}{\mathfrak p}
\newcommand{\Rhoi}{\rho_i}
\newcommand{\Rho}{\rho_{\mathcal R}}
\newcommand{\Lip}{\mathbf{C}}
\newcommand{\mf}{\mathfrak}
\newcommand{\alloc}{\mathbb A}
\newcommand{\peq}{\preceq}
\newcommand{\Linfty}{L^\infty}
\newcommand{\cost}{\mathfrak c}
\newcommand{\augm}{\mathscr A_++\textnormal{ker}(\pi)}
\newcommand{\two}{\twoheadrightarrow}
\newcommand{\mbf}{\mathbf}
\title[Risk Sharing with Multidimensional Security Markets]{Risk Sharing for Capital Requirements with Multidimensional Security Markets}
\author{Felix-Benedikt Liebrich\quad\quad\quad Gregor Svindland}
\address{Department of Mathematics, University of Munich, Germany}
\email{liebrich@math.lmu.de, svindla@math.lmu.de}
\date{September 25, 2018}
\subjclass[2010]{91B16, 91B30, 91B32, 91B50.}
\begin{document}

\parindent 0em \noindent

\begin{abstract}
We consider the risk sharing problem for capital requirements induced by capital adequacy tests and security markets. The agents involved in the sharing procedure may be heterogeneous in that they apply varying capital adequacy tests and have access to different security markets. We discuss conditions under which there exists a representative agent. Thereafter, we study two frameworks of capital adequacy more closely, polyhedral constraints and distribution based constraints. We prove existence of optimal risk allocations and equilibria within these frameworks and elaborate on their robustness.  \\
\textbf{Keywords}: capital requirements, polyhedral acceptance sets, law-invariant acceptance sets, multidimensional security spaces, Pareto optimal risk allocations, equilibria, robustness of optimal allocations.
\end{abstract}

\maketitle


\section{Introduction}

In this paper we consider the risk sharing problem for capital requirements. Optimal capital and risk allocation among economic agents, or business units, has for decades been a predominant subject in the
respective academic and industrial research areas. Measuring financial risks with capital requirements goes back to the seminal paper by Artzner et al.\ \cite{Artzner}. There, risk measures are \textit{by definition} capital requirements determined by two primitives: the \textit{acceptance set} and the \textit{security market}.

The acceptance set, a subset of an ambient space of losses, corresponds to a \textit{capital adequacy test}. A loss is deemed adequately capitalised if it belongs to the acceptance set, and inadequately capitalised otherwise. If a loss does not pass the capital adequacy test, the agent has to take prespecified remedial actions: she can raise capital in order to buy a portfolio of securities in the security market which, when combined with the loss profile in question, results in an adequately capitalised secured loss. 

Suppose the security market only consists of one num\'eraire asset, liquidly traded at arbitrary quantities. After discounting, one obtains a so-called \textit{monetary risk measure}, which is characterised by satisfying the \textit{cash-additivity property}, that is $\rho(X+a)=\rho(X)+a$. Here, $\rho$ denotes the monetary risk measure, $X$ is a loss, and $a\in \R$ is a capital amount which is added to or withdrawn from the loss. Monetary risk measures have been widely studied, see F\"ollmer \& Schied~\cite{FoeSch} and the references therein. As observed in Farkas et al. \cite{FKM2013, FKM2014, FKM2015} and Munari \cite{Munari}, there are good reasons for revisiting the original approach to risk measures of Artzner et al.\ \cite{Artzner}:\begin{itemize}
\item[(1)] Typically, more than one asset is available in the security market. It is also less costly for the agent to invest in a portfolio of securities designed to secure a specific loss rather than restricting the remedial action to investing in a single asset independent of the loss profile.
\item[(2)] Even if securitisation is constrained to buying a single asset, discounting with this asset may be impossible because it is not a num\'eraire; c.f. Farkas et al.\ \cite{FKM2014}.  Also, as risk is measured \textit{after discounting}, the discounting procedure is implicitly assumed not to add additional risk, which is questionable in view of risk factors such as uncertain future interest rates. For a thorough discussion of this issue see El Karoui \& Ravenelli \cite{ElKarouiRavanelli}. Often, risk is determined purely in terms of the \textit{distribution} of a risky position, a paradigm we discuss in detail below. Therefore, instability of this crucial \textit{law-invariance property} of a risk measure under discounting is another objection. If the security is not riskless (i.e.\ equals cash), losses which originally were identically distributed may not share the same distribution any longer after discounting, while losses that originally display different laws may become identically distributed.
\item[(3)] Without discounting, if only a single asset is available in the security market, cash-additivity requires the security to be riskless, and it is questionable whether such a security is realistically available, at least for longer time horizons. This is a particularly nagging issue in the insurance context. 
\end{itemize}

In this paper we will follow the original ideas in \cite{Artzner} and study the risk sharing problem for risk measures induced by acceptance sets and possibly multidimensional security spaces. We consider a one-period market  populated by a finite number $n\geq 2$ of agents who seek to secure losses occurring at a fixed future date, say tomorrow. We attribute to each agent $i\in\{1,...,n\}$ an ordered vector space $\CX_i$ of \textit{losses net of gains} she may incur, an acceptance set $\acc_i\subset\CX_i$ as capital adequacy test, and a security market consisting of a subspace $\CS_i\subset\CX_i$ of security portfolios as well as observable prices of these securities given by a linear functional $\price_i:\CS_i\to\R$. As the securities in $\CS_i$ are deemed suited for hedging, the linearity assumptions on $\CS_i$ and $\price_i$ reflects that they are liquidly traded and their bid-ask spread is zero. The risk attitudes of agent $i$ are fully reflected by the resulting \textit{risk measure}
\begin{equation}\label{eq:RMprelim}\Rhoi(X):=\inf\{\price_i(Z)\mid Z\in\CS_i, X-Z\in\acc_i\},\quad X\in\CX_i,\end{equation}
that is the minimal capital required to secure $X$ with securities in $\CS_i$.

The problem we consider is how to reduce the aggregated risk in the system by means of redistribution. Formally, given a market loss $X\in\sum_{i=1}^n\CX_i$ the system in total incurs, we need to solve the optimisation problem
\begin{equation}\label{eq:risksharing}\sum_{i=1}^n\Rhoi(X_i)\to\min\quad\tn{subject to~}X_i\in\CX_i\tn{~and~}X_1+...+X_n=X.\end{equation}
A vector $\mathbf X=(X_1,...,X_n)$, a so-called allocation of $X$, which solves the optimisation problem and yields a finite optimal value is \textit{Pareto optimal}. However, this resembles  centralised redistribution which attributes to each agent a certain portion of the aggregate loss in an overall optimal way without considering individual well-being. Redistribution by agents trading portions of losses at a certain price while adhering to individual rationality constraints leads to the notion of \textit{equilibrium allocations} and \textit{equilibrium prices}, a variant of the risk sharing problem above. 

Special instances of this general problem have been extensively studied in the literature. Borch \cite{Borch}, Arrow \cite{Arrow} and Wilson \cite{Wilson} consider the problem for expected utilities. More recent are studies for convex monetary risk measures, starting with Barrieu \& El Karoui \cite{ElKaroui} and Filipovi\'c \& Kupper \cite{FilKup}. A key assumption which allows to prove existence of optimal risk sharing for convex monetary risk measures is \textit{law-invariance}, i.e. the measured risk is the same for all losses which share the same distribution under a benchmark probability model, see Jouini et al.\ \cite{JST}, Acciaio \cite{Fatou}, Acciaio \& Svindland \cite{Acciaio}, and Filipovi\'c \& Svindland \cite{Svindland08}. For a thorough discussion of the existing literature, we refer to Embrechts et al.\ \cite{Embrechts}.

\textbf{Main contributions.} In the following we summarise the four main contributions of this paper. 

First of all, we prove a representative agent formulation of the risk sharing problem: the behaviour of the interacting agents in the market is, under mild assumptions, captured by a \textit{market capital requirement} of type \eqref{eq:RMprelim}, namely 
\begin{center}$\Lambda(X)=\inf\{\pi(Z)\mid Z\in\CM,~X-Z\in\mathscr A_+\}$,\end{center} 
where $\Lambda(X)$ is the infimal level of aggregated risk realised by redistribution of $X$ as in \eqref{eq:risksharing}, $\mathscr A_+$ is a market acceptance set, and $(\CM,\pi)$ is a global  security market.

Secondly, we study two prominent cases, mostly characterised by the involved notions of acceptability, for which we prove that the risk sharing problem \eqref{eq:risksharing}, including the quest for equilibria,  admits solutions. In the first instance, individual losses are --- in the widest sense --- contingent on scenarios of the future state of the economy. A loss is deemed acceptable if certain capital thresholds are not exceeded under a fixed finite set of linear aggregation rules which may vary from agent to agent. The reader may think of a combination of finitely many valuation and stress test rules as studied in Carr et al. \cite{Madan}, see also \cite[Section 4.8]{FoeSch}. The resulting acceptance sets will thus be \textit{polyhedral}. In the second class of acceptance sets under consideration, whether or not a loss is deemed adequately capitalised only depends on its distributional properties under a fixed reference probability measure, not on scenariowise considerations: acceptability is a statistical notion. More precisely, losses are modelled as random variables on a probability space $(\Omega,\CF,\PW)$, and the respective individual acceptance sets $\acc_i$, $i\in\{1,...,n\}$, will be \textit{law-invariant}. However, in non-trivial cases the security spaces $\mathcal S_i$ will not be law-invariant, hence securitisation depends on the potentially varying joint distribution of the loss and the security and is thus statewise rather than distributional. This both reflects the practitioner's reality and is mathematically interesting as the resulting capital requirements $\Rhoi$ are far from law-invariant. Only if the security spaces are spanned by the cash asset, i.e.\ $\CS_i=\R$, law-invariance of the acceptance set implies law-invariance of the corresponding monetary risk measure. In that case the risk sharing problem has been solved, c.f. \cite{Svindland08, JST}. We will utilise these results, but like to emphasise that reducing the general problem to the law-invariant cash-additive case is impossible.

As a third contribution, we not only prove the existence of solutions, i.e.\ of \textit{optimal risk allocations}, but carefully study continuity properties of the set-valued map assigning to an aggregated loss its optimal risk allocations. These reflect that optimality is robust under misspecification of the input, the aggregated loss profile. If the map is \textit{upper hemicontinuous}, a slight miscalculation of the aggregated loss does not change the set of such optimal risk allocations drastically. It is also useful from a constructive point of view: one can solve the risk sharing problem of a complex loss by approximating it with simpler losses and calculating solutions of these instead. \textit{Lower hemicontinuity}, on the other hand, guarantees that a given optimal risk allocation stays close to optimal under a slight perturbation of the underlying aggregated loss. 

At last, we study optimal splitting problems in the spirit of Tsanakas \cite{Tsanakas} and Wang \cite{Wang} as an application of the general theory: Under the presence of market frictions such as transaction costs, can a financial institution split an aggregated loss optimally by introducing subsidiaries subject to potentially varying regulatory regimes and having access to potentially varying security markets?

\textbf{Structure of the paper.} In Section~\ref{sec:model} we rigorously introduce risk measurement in terms of capital requirements, agent systems, optimal allocations, and equilibria.  
Section~\ref{sec:maths} presents the representative agent formulation of the risk sharing problem and proves useful meta results. These are key to the discussion of risk sharing involving polyhedral acceptance sets in Section~\ref{sec:polyagent} and law-invariant acceptance sets in Section \ref{sec:lawinvacc}, as well as optimal portfolio splits in Section \ref{sec:regarbitrage}.  
Technical supplements are relegated to the appendix.

\section{Agent systems and optimal allocations}\label{sec:model}

\subsection{Preliminaries}

In a first step of modelling, we assume that the attitude of individual agents towards risk is given by a \textit{risk measurement regime} and corresponding {\em risk measure}. 
\begin{definition}\label{def:RM}Let $(\CX,\peq)$ be an ordered vector space, $\CX_+:=\{X\in\CX\mid 0\peq X\}$ be its positive cone, and $\CX_{++}:=\CX_+\backslash\{0\}$. 
\begin{itemize}
\item An \textsc{acceptance set} is a non-empty proper and convex subset $\acc$ of $\CX$ which is monotone, i.e. $\acc-\CX_+\subseteq \acc$.\footnote{ Here and in the following, given subsets $A$ and $B$ of a vector space $\CX$, $A+B$ denotes their Minkowski sum $\{a+b\mid a\in A, b\in B\}$, and $A-B:=A+(-B)$.}
\item A \textsc{security market} is a pair $(\CS,\price)$ consisting of a finite-dimensional linear subspace $\CS\subset\CX$ and a positive linear functional $\price:\CS\to\R$ such that there is $U\in\CS\cap\CX_{++}$ with $\price(U)=1$. The elements $Z\in\mathcal S$ are called security portfolios or simply securities, and $\CS$ is the \textsc{security space}, whereas $\price$ is called \textsc{pricing functional}.
\item A triple $\mathcal R:=(\acc, \CS, \price)$ is a \textsc{risk measurement regime} if $\acc$ is an acceptance set and $(\mathcal S,\price)$ is a security market such that the following no-arbitrage condition holds:
\begin{equation}\label{eq:regime}\forall\,X\in \CX:\quad\sup\{\price(Z)\mid Z\in\CS, X+Z\in\acc\}<\infty.\end{equation}
\item The \textsc{risk measure} associated to a risk measurement regime $\mathcal R$ is the functional 
\begin{equation}\label{defRM}\Rho: \CX\rightarrow(-\infty,\infty],\quad X\mapsto\inf\left\{\price(Z)\mid Z\in\mathcal S, X-Z\in\mathcal A\right\}. \end{equation}
\end{itemize}
$\Rho$ is \textsc{normalised} if $\Rho(0)=0$, or equivalently $\sup_{Z\in\acc\cap\mathcal S}\price(Z)=0$. It is \textsc{lower semicontinuous} (l.s.c.)\,with respect to some vector space topology $\tau$ on $\CX$ provided every lower level set $\{X\in\CX\mid \Rho(X)\leq c\}$, $c\in\R$, is $\tau$-closed.
\end{definition}

Immediate consequences of the definition of $\Rho$ are the following properties:

\begin{itemize}
\item $\Rho$ is a proper function\footnote{ Given a non-empty set $M$, a function $f:M\to[-\infty,\infty]$ is proper if $f^{-1}(\{-\infty\})=\emptyset$ and $f\not\equiv \infty$.} by \eqref{eq:regime} and $\Rho(Y)\leq 0$ for any choice of $Y\in\acc$. Moreover, it is convex, i.e.\ $\Rho(\lambda X+ (1-\lambda)Y)\leq \lambda \Rho(X)+ (1-\lambda)\Rho(Y)$ holds for all choices of $\lambda\in [0,1]$ and $X,Y\in \CX$;
\item  \textsc{$\peq$-monotonicity}, i.e.\ $X\peq Y$ implies $\Rho(X)\leq \Rho(Y)$;
\item  \textsc{$\CS$-additivity}, i.e\ $\Rho(X+Z)=\Rho(X)+\price(Z)$ for all $X\in \CX$ and all $Z\in\CS$.
\end{itemize}

Note that risk measures as in \eqref{defRM} evaluate the risk of \textit{losses net of gains} $X\in \CX$. The positive cone $\CX_+$ corresponds to pure losses. Therefore, $\Rho$ is non-decreasing with respect to $\preceq$, not non-increasing as in most of the literature on risk measures where the risk of \textit{gains net of losses} is measured. In the security market, however, we consider the usual monotonicity, i.e. a security $Z^*\in\CS$ is better than $Z\in\CS$ if $Z\peq Z^*$. This also explains positivity of the pricing functional $\price:\CS\to\R$. Combining these two viewpoints, the impact of a security $Z\in\CS$ on a loss profile $X\in\CS$ is given by $X-Z$, and $\Rho(X)$ is the infimal price that has to be paid for a security $Z$ in the security market with loss profile $-Z$ in order to reduce the risk of $X$ to an acceptable level. The no-arbitrage condition \eqref{eq:regime} means that one cannot short arbitrary valuable securities and stay acceptable. 

There is a close connection between capital requirements defined by \eqref{defRM} and superhedging. Given a risk measurement regime $\mathcal R=(\acc,\CS,\price)$ on an ordered vector space $(\CX,\peq)$, let $\ker(\price):=\{N\in\CS\mid \price(N)=0\}$ denote the kernel of the pricing functional, i.e. the set of fully leveraged security portfolios available at zero cost. Moreover, fix an arbitrary $U\in\CS\cap\CX_+$ such that $\price(U)=1$. Each $Z\in\CS$ can be written as $Z=\price(Z)U+(Z-\price(Z)U)$, and $Z-\price(Z)U\in\ker(\price)$. Hence,  $X\in\CX$ and $Z\in\CS$ satisfy $X-Z\in\acc$ if, and only if, for $r:=\price(Z)\in\R$ we can find $N\in\ker(\price)$ such that $rU+N+(-X)\in-\acc$. The risk $\Rho(X)$ may thus be expressed as 
\begin{align*}\Rho(X)&=\inf\{\price(Z)\mid Z\in\CS, X-Z\in\acc\}\\
&=\inf\{r\in\R\mid \exists\, N\in\ker(\price):~N+rU+(-X)\in-\acc\},
\end{align*}
The set $-\acc$ is the set of acceptable \textit{gains net of losses}, and $-X$ is the payoff associated to the loss profile $X$. The elements in $\ker(\price)$ are \textit{zero cost investment opportunities}. If we conservatively choose the acceptance set $\acc=-\CX_+$,
\[\Rho(X)=\inf\{r\in\R\mid \exists\, N\in\ker(\price):~N+rU+(-X)\succeq 0\},\]
that is we recover by $\Rho(X)$ the superhedging price of the payoff $-X$. A general risk measurement regime thus leads to a superhedging functional involving the relaxed notion of superhedging $N+rU-X\in-\acc$. In the terminology of superhedging theory, $\Rho(X)$ is the infimal amount of cash that needs to be invested in the security $U$ such that $X$ can be superhedged when combined with a suitable zero cost trade in the (security) market. Such relaxed superhedging functionals have been recently studied by, e.g., Cheridito et al. \cite{Cheridito}. The separation between $U$ and $\ker(\price)$ introduced above will be useful throughout the paper. 

\subsection{Agent systems}
In order to introduce the risk sharing problem in precise terms, a notion of the interplay of the individual agents and their respective capital requirements is required; for terminology concerning ordered vector spaces, we refer to \cite[Chapters 8--9]{Ali}. We consider an abstract one-period market which incurs aggregated losses net of gains modelled by a Riesz space $(\CX,\peq)$. The market is comprised of $n\geq 2$ agents, and throughout the paper we identify each individual agent with a natural number $i$ in the set $\{1,...,n\}$, which we shall denote by $[n]$ for the sake of brevity.
The agents might have rather heterogeneous assessments of risks. This is firstly reflected by the assumption that each agent operates on an \textit{(order) ideal}\footnote{ An ideal $\mathcal Y$ of a Riesz space $(\CX,\peq)$ is a subspace in which the inclusion $\{Z\in\CX\mid |Z|\peq |Y|\}\subset\CY$ holds for all $Y\in\mathcal Y$.}  $\CX_i\subset\CX$, $i\in[n]$, which may be a proper subset of $\CX$. Without loss of generality we shall impose $\CX=\CX_1+...+\CX_n$. Within each ideal, and thus for each agent, adequately capitalised losses are encoded by an acceptance set $\acc_i\subset\CX_i$. Agent $i\in[n]$ is allowed to secure losses she may incur with securities from a security market $(\CS_i,\price_i)$, where $\CS_i\subsetneq\CX_i$. We shall impose that each $\mathcal R_i:=(\acc_i,\CS_i,\price_i)$ is a risk measurement regime on $(\CX_i,\peq|_{\CX_i\times\CX_i})$, $i\in[n]$. 
In sum, the individual risk assessments are fully captured by the $n$-tuple of risk measurement regimes $(\mathcal R_1,...,\mathcal R_n)$.
\begin{definition}\label{def:agentsystem}
An $n$-tuple $(\mathcal R_1,...,\mathcal R_n)$, where, for each $i\in[n]$, $\mathcal R_i$ is a risk measurement regime on $\CX_i$, is called an \textsc{agent system} if 
\begin{itemize}
\item[($\star$)] For all $i,j\in[n]$, the pricing functionals $\price_i$ and $\price_j$ agree on $\CS_i\cap \CS_j$. Moreover, if we set $i\sim j$ if $i\neq j$ and $\price_i$ is non-trivial on $\CS_i\cap\CS_{j}$, the resulting graph 
\begin{center}$G=([n],\{\{i,j\}\subset [n]\mid i\sim j\})$\end{center}
is connected. 
\end{itemize}
\end{definition}

Axiom ($\star$) clarifies the nature of the interaction of the involved agents: prices for securities accepted by more than one agent have to agree, and any two agents may interact and exchange securities by potentially  invoking other agents as intermediaries. Throughout this paper we will assume that the agents $[n]$ form an agent system. Such a situation is not too far-fetched: 
\begin{definition} The space of \textsc{jointly accepted securities} is $\check{\CS}:=\bigcap_{i=1}^n\CS_i$, whereas $\mathcal M :=\CS_1+...+\CS_n$ is the \textsc{global security space}.
\end{definition}
If, besides agreement of prices, $\check{\CS}\neq\{0\}$ and $\price_i|_{\check{\CS}}\neq 0$ for some and thus all $i\in[n]$, then assumption ($\star$) is met. The resulting graph is the complete graph on $n$ vertices. Moreover, if all agents operate on one and the same space $\CX_i=\CX$, $i\in[n]$, and the available security markets are unanimous and given by $\CS_i=\R\cdot U$, $i\in[n]$, for some $U\in\CX_{++}$ and $\price_i(rU)=r$, $r\in\R$, $(\mathcal R_1,...,\mathcal R_n)$ is an agent system. If we further specify $\CX$ to be a sufficiently rich space of random variables and $U=\ind$ is the constant random variable with value 1, the results for risk sharing with convex monetary risk measures can be embedded in our setting of agent systems; c.f. \cite{Fatou,Acciaio,Svindland08,JST}.

In the following we write $\Rhoi$ instead of $\rho_{\mathcal R_i}$ for the sake of brevity. Aggregated losses in $\CX$ will be denoted by $X$ or $W$, securities by $Z$, $U$ or $N$ throughout the paper. In order to introduce the risk sharing associated with $(\mathcal R_1,...,\mathcal R_n)$, we need the notion of attainable and security allocations:
\begin{definition}\label{def:allocation} A vector $\mbf X=(X_1,...,X_n)\in\prod_{i=1}^n\CX_i$ is an \textsc{attainable allocation} of an aggregated loss $W\in\CX$ if $W=X_1+...+X_n$. We denote the set of all attainable allocations of $W\in\CX$ by $\alloc_W$. \\
Given a global security $Z\in\CM$, we denote by $\alloc_Z^s:=\alloc_Z\cap\prod_{i=1}^n\CS_i$ the set of \textsc{security allocations} of $Z$. 
\end{definition}

\subsection{The risk sharing problem and its solutions}

Given a set $S\neq\emptyset$ and a function $f:S\to[-\infty,\infty]$, we set $\dom(f):=\{s\in S\mid f(s)<\infty\}$ to be the \textsc{effective domain} of $f$. We will also abbreviate its lower level sets by $\mathcal L_c(f):=\{s\in S\mid f(s)\le c\}$, $c\in\R$. 

We are now prepared to introduce the risk sharing problem. Its objective is to minimise the aggregated risk within the system. The allowed remedial action is reallocating an aggregated loss $W\in \CX$ among the agents involved:
\begin{equation}\label{eq:RS1}\sum_{i=1}^n\Rhoi(X_i)\to\min\quad\tn{subject to }\mbf X\in\alloc_W.\end{equation}
Obviously, the optimal value in \eqref{eq:RS1} is less than $+\infty$ if, and only if, $W\in\sum_{i=1}^n\dom(\Rhoi)$. It is furthermore well-known that \eqref{eq:RS1} is closely related to certain notions of economically optimal allocations which we define in the following.

\begin{definition}\label{def:Pareto} Let $(\mathcal R_1,...,\mathcal R_n)$ be an agent system on an ordered vector space $(\CX,\peq)$, let $W\in\CX$ be an aggregated loss, and let $\mbf W\in\prod_{i=1}^n\CX_i$ be a vector of initial loss endowments. 
\begin{enumerate}[(1)]
\item An attainable allocation $\mbf X\in\alloc_W$ is \textsc{Pareto optimal} if $\Rhoi(X_i)<\infty$, $i\in [n]$, and for any $\mbf Y\in\alloc_W$ with the property $\Rhoi(Y_i)\leq \Rhoi(X_i)$, $i\in[n]$, in fact $\Rhoi(X_i)=\Rhoi(Y_i)$ has to hold for all $i\in[n]$.
\item Suppose $\CX$ additionally carries a vector space topology $\tau$ such that $(\CX,\peq,\tau)$ is a topological Riesz space. A tuple $(\mbf X,\phi)$ is an \textsc{equilibrium} of $\mbf W$ if 
\begin{itemize}
\item $\mbf X\in\alloc_{W_1+...+W_n}$, 
\item $\phi\in\CX^*$ is positive with $\phi|_{\CS_i}=\price_i$, $i\in [n]$,
\item the budget constraints $\phi(-X_i)\leq \phi(-W_i)$, $i\in [n]$, hold,\footnote{ Note that the minus sign in the budget constraints is due to the fact that the elements in $\CX_i$ model losses, whereas $\phi$ prices payoffs.}  
\item and $\Rhoi(X_i)=\inf\{\Rhoi(Y)\mid Y\in\CX_i,~\phi(-Y)\leq \phi(-W_i)\}$ for all $i\in[n]$. 
\end{itemize} In that case, $\mbf X$ is called \textsc{equilibrium allocation} and $\phi$ \textsc{equilibrium price}.
\end{enumerate}
\end{definition}

Our first result links Pareto optima, equilibria, and solutions to the risk sharing problem \eqref{eq:RS1}. Proposition~\ref{prop:equivalence}(2) is indeed the First Fundamental Theorem of Welfare Economics adapted to our setting. 
\begin{proposition}\label{prop:equivalence} Under the assumptions of Definition \ref{def:Pareto}, the following holds:
\begin{enumerate}
\item[\tn{(1)}] If $W\in\sum_{i=1}^n\dom(\Rhoi)$, then $\mbf X\in\alloc_W$ is a Pareto optimal attainable allocation of $W$ if, and only if, 
\begin{equation}\label{eq:exact}\sum_{i=1}^n\Rhoi(X_i)=\inf_{\mbf Y\in\alloc_W}\sum_{i=1}^n\Rhoi(Y_i).\end{equation}
\item[\tn{(2)}] Any equilibrium allocation is Pareto optimal. 
\end{enumerate}
\end{proposition}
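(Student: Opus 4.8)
The plan is to prove the two parts of Proposition~\ref{prop:equivalence} separately by elementary means, using only the properties of each $\Rhoi$ recorded right after Definition~\ref{def:RM} --- convexity, $\peq$-monotonicity, and above all $\CS_i$-additivity, $\Rhoi(X+Z)=\Rhoi(X)+\price_i(Z)$ for $Z\in\CS_i$ --- together with properness of each $\Rhoi$ and the structural Axiom~($\star$). Part~(1) is a ``transfer of value'' argument, while part~(2) is the classical First Welfare Theorem adapted to capital requirements.

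For part~(1), the implication ``$\mbf X\in\alloc_W$ attains the optimal value of \eqref{eq:RS1} $\Rightarrow$ $\mbf X$ Pareto optimal'' is immediate: since $W\in\sum_i\dom(\Rhoi)$ the optimal value $m$ satisfies $m<\infty$, so attaining it gives $\sum_i\Rhoi(X_i)=m<\infty$ and, by properness, $\Rhoi(X_i)\in\R$ for every $i$; then any $\mbf Y\in\alloc_W$ with $\Rhoi(Y_i)\le\Rhoi(X_i)$ for all $i$ satisfies $\sum_i\Rhoi(Y_i)\le\sum_i\Rhoi(X_i)=m\le\sum_i\Rhoi(Y_i)$, which forces coordinatewise equality because all terms are finite. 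For the converse, I argue by contradiction: suppose $\mbf X\in\alloc_W$ is Pareto optimal but $\sum_i\Rhoi(X_i)>m$, and choose $\mbf Y\in\alloc_W$ with $\sum_i\Rhoi(Y_i)<\sum_i\Rhoi(X_i)$, so that every $\Rhoi(Y_i)$ is finite and $\sum_i d_i>0$, where $d_i:=\Rhoi(X_i)-\Rhoi(Y_i)$. Pick scalars $c_i<d_i$ with $\sum_i c_i=0$ (possible since $\sum_i d_i>0$). The crucial step is to realise this numerical redistribution inside the security markets, namely to find $N_i\in\CS_i$ with $\sum_i N_i=0$ in $\CX$ and $\price_i(N_i)=c_i$: this is where Axiom~($\star$) is used. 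Since $\price_i$ and $\price_j$ agree on $\CS_i\cap\CS_j$ and the graph $G$ is connected, I fix a spanning tree of $G$, choose for each tree edge $\{i,j\}$ a security $V_{ij}\in\CS_i\cap\CS_j$ with $\price_i(V_{ij})=\price_j(V_{ij})=1$, and route along the tree the unique flow with divergence $c_i$ at vertex $i$, which exists precisely because $\sum_i c_i=0$; the signed sums of the weighted securities $V_{ij}$ at each vertex give the desired $N_i$. Setting $\mbf Z:=(Y_i+N_i)_{i=1}^n\in\alloc_W$, $\CS_i$-additivity yields $\Rhoi(Z_i)=\Rhoi(Y_i)+c_i<\Rhoi(Y_i)+d_i=\Rhoi(X_i)$ for every $i$, contradicting Pareto optimality of $\mbf X$. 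Hence \eqref{eq:exact} holds.

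For part~(2), let $(\mbf X,\phi)$ be an equilibrium of $\mbf W$, write $W:=W_1+\dots+W_n$, and let $B_i:=\{Y\in\CX_i\mid\phi(-Y)\le\phi(-W_i)\}$ denote the individual budget sets. I first observe that $\Rhoi(X_i)<\infty$ for every $i$: fixing $U_i\in\CS_i$ with $\price_i(U_i)=1$ and any $Y_0\in\dom(\Rhoi)$ (nonempty by properness), $\CS_i$-additivity gives $Y_0+kU_i\in\dom(\Rhoi)$ for all $k\ge0$, while $\phi(-(Y_0+kU_i))=\phi(-Y_0)-k\to-\infty$ because $\phi|_{\CS_i}=\price_i$; hence $Y_0+kU_i\in B_i$ for large $k$, and therefore $\Rhoi(X_i)=\inf_{Y\in B_i}\Rhoi(Y)<\infty$. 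Now take any $\mbf Y\in\alloc_W$ with $\Rhoi(Y_i)\le\Rhoi(X_i)$ for all $i$; I must show equality throughout. Since $\mbf Y\in\alloc_W$ and $\mbf W$ has the same aggregate, $\sum_i\phi(-Y_i)=\phi(-W)=\sum_i\phi(-W_i)$. If $Y_j\notin B_j$ for some $j$, i.e.\ $\phi(-Y_j)>\phi(-W_j)$, then $\phi(-Y_k)<\phi(-W_k)$ for some $k\neq j$, and for small $t>0$ the point $Y_k-tU_k$ lies in $B_k$ while $\rho_k(Y_k-tU_k)=\rho_k(Y_k)-t<\rho_k(Y_k)\le\rho_k(X_k)=\inf_{Y\in B_k}\rho_k(Y)$ --- a contradiction. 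So $Y_i\in B_i$ for all $i$, whence $\Rhoi(Y_i)\ge\inf_{Y\in B_i}\Rhoi(Y)=\Rhoi(X_i)$ and thus $\Rhoi(Y_i)=\Rhoi(X_i)$ for all $i$. Together with $\Rhoi(X_i)<\infty$, this is precisely Definition~\ref{def:Pareto}(1), so $\mbf X$ is Pareto optimal.

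The step I expect to be the main obstacle is the value-transfer in part~(1): converting the scalar vector $(c_i)_i$ with $\sum_i c_i=0$ into genuine security portfolios $N_i\in\CS_i$ that sum to $0$ in $\CX$ and are priced by $c_i$. This is the only place where the connectedness of $G$ and the compatibility of the pricing functionals in Axiom~($\star$) are genuinely needed; the rest reduces to bookkeeping with $\CS_i$-additivity. The case $m=-\infty$ requires no separate treatment --- the same transfer argument then shows that no Pareto optimal allocation exists, so~(1) holds vacuously.
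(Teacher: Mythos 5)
Your proof is correct, but it is organised rather differently from the paper's, and the comparison is instructive. For part (1) the paper does not argue by direct contradiction: it invokes the Negishi-weight characterisation of Pareto optima (Lemma~\ref{lem:characPareto}, cited from the literature), and then uses axiom ($\star$) only to show that the weights must all coincide --- for each edge $j\sim k$ it perturbs a given Pareto optimum by $\tfrac{t}{p}Ze_j-\tfrac{t}{p}Ze_k$ with $Z\in\CS_j\cap\CS_k$, $\price_j(Z)=\price_k(Z)=p\neq 0$, and lets $t$ range over $\R$ to force $\lambda_j=\lambda_k$; connectedness of $G$ then propagates equality of all weights. Your argument uses exactly the same underlying mechanism (pairwise exchange of identically priced securities along the edges of $G$), but packages it as a spanning-tree flow that converts any scalar vector $(c_i)$ with $\sum_i c_i=0$ into portfolios $N_i\in\CS_i$ with $\sum_i N_i=0$ and $\price_i(N_i)=c_i$, and thereby manufactures an explicit strict Pareto improvement from any allocation with smaller aggregate risk. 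This buys you a self-contained, purely elementary proof that does not rely on the external Lemma~\ref{lem:characPareto}; what it loses is the paper's conceptual by-product, namely that ($\star$) \emph{dictates} the Negishi weights to be equal, a fact the authors highlight and reuse. For part (2) the paper shows that the equilibrium allocation minimises $\sum_i\Rhoi(Y_i)$ over $\alloc_W$ (by shifting any competitor $Y_i$ into the budget set with a security and telescoping the price terms) and then appeals to part (1), whereas you verify Definition~\ref{def:Pareto}(1) directly by showing that any coordinatewise-improving allocation must itself be budget-feasible; both are valid, and you are right to supply the finiteness check $\Rhoi(X_i)<\infty$, which the paper leaves implicit. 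Your handling of the degenerate case $\inf_{\mbf Y\in\alloc_W}\sum_i\Rhoi(Y_i)=-\infty$ and of the finiteness of the $\Rhoi(Y_i)$ in the contradiction step is also correct, since properness of each $\Rhoi$ rules out the value $-\infty$ throughout.
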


The proof requires the following well-known characterisation of Pareto optima; see, e.g., \cite[Proposition 3.2]{Ravanelli}.
\begin{lemma}\label{lem:characPareto}
Let $W\in\sum_{i=1}^n\dom(\Rhoi)$. If $\mbf X$ is a Pareto optimal attainable allocation of $W$, there are so-called Negishi weights $\lambda_i\geq 0$, $i\in[n]$, not all equal to zero, such that 
\begin{equation}\label{eq:Pareto}\sum_{i=1}^n\lambda_i\Rhoi(X_i)=\inf_{\mbf Y\in\alloc_W}\sum_{i=1}^n\lambda_i\Rhoi(Y_i).\end{equation}
Conversely, if $\mbf X\in\alloc_X$ satisfies \eqref{eq:Pareto} for a set of strictly positive weights $\lambda_i>0$, $i\in[n]$, then $\mbf X$ is a Pareto optimal attainable allocation.
\end{lemma}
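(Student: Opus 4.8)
The statement to prove is Lemma~\ref{lem:characPareto}, the characterisation of Pareto optima via Negishi weights. I would prove the two directions separately, treating the forward direction (existence of weights) via a separation argument and the converse via a direct elementary argument.

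For the converse direction, suppose $\mbf X\in\alloc_W$ satisfies \eqref{eq:Pareto} for strictly positive weights $\lambda_i>0$. I would argue by contradiction: if $\mbf X$ were not Pareto optimal, there would exist $\mbf Y\in\alloc_W$ with $\Rhoi(Y_i)\le\Rhoi(X_i)$ for all $i\in[n]$ and strict inequality for at least one index $j$. Multiplying by the strictly positive weights and summing then yields $\sum_i\lambda_i\Rhoi(Y_i)<\sum_i\lambda_i\Rhoi(X_i)$, contradicting that $\mbf X$ attains the infimum in \eqref{eq:Pareto}. (One also needs $\Rhoi(X_i)<\infty$ for all $i$, which follows since $W\in\sum_i\dom(\Rhoi)$ forces the infimum in \eqref{eq:Pareto} to be finite, so the minimising $\mbf X$ has every coordinate in $\dom(\Rhoi)$.) This direction is routine.

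The forward direction is the main obstacle and where the real work lies. Given a Pareto optimal $\mbf X$, I would introduce the attainable value set and apply a convex separation theorem in $\R^n$. Concretely, consider the set
\begin{equation*}
V:=\Big\{\mbf v\in\R^n\mid \exists\,\mbf Y\in\alloc_W\tn{ with }\Rhoi(Y_i)\le v_i\tn{ for all }i\in[n]\Big\}.
\end{equation*}
Since each $\Rhoi$ is convex and $\alloc_W$ is an affine (hence convex) subset of $\prod_i\CX_i$, the set $V$ is convex, and by construction it is upward closed (comprehensive): $V+\R^n_+\subseteq V$. The vector $\mbf p:=(\Rho_1(X_1),\dots,\Rho_n(X_n))$ lies in $V$, and Pareto optimality says precisely that no point of $V$ dominates $\mbf p$ strictly in every coordinate while being weakly below it --- equivalently, the open lower orthant $\mbf p-\R^n_{++}$ (the points strictly below $\mbf p$ in every coordinate) does not meet $V$. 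I would then separate the convex set $V$ from the convex set $\mbf p-\R^n_{++}$ by a hyperplane: there exist $\boldsymbol\lambda\in\R^n\setminus\{0\}$ and $c\in\R$ with $\langle\boldsymbol\lambda,\mbf v\rangle\ge c\ge\langle\boldsymbol\lambda,\mbf w\rangle$ for all $\mbf v\in V$ and $\mbf w\in\mbf p-\R^n_{++}$. Upward closedness of $V$ forces $\boldsymbol\lambda\ge 0$ componentwise (otherwise the left inequality is violated by sending one coordinate of $\mbf v$ to $+\infty$), giving the nonnegativity and nontriviality of the weights. Finally, since $\mbf p\in V$ and $\mbf p$ is the boundary limit of $\mbf p-\R^n_{++}$, one gets $\langle\boldsymbol\lambda,\mbf p\rangle=c$, and the separation then reads $\langle\boldsymbol\lambda,\mbf p\rangle\le\langle\boldsymbol\lambda,\mbf v\rangle$ for every $\mbf v\in V$. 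Unwinding the definition of $V$ and using monotonicity of the inner product in the nonnegative weights, this is exactly $\sum_i\lambda_i\Rhoi(X_i)\le\sum_i\lambda_i\Rhoi(Y_i)$ for all $\mbf Y\in\alloc_W$, which is \eqref{eq:Pareto}.

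The delicate points I would check carefully are that $V$ is genuinely convex (it is the ``epigraph-type'' projection of the jointly convex map $\mbf Y\mapsto(\Rho_1(Y_1),\dots,\Rho_n(Y_n))$ along the affine constraint, so convexity follows from convexity of each $\Rhoi$) and that the separation yields nonnegative rather than merely nonzero $\boldsymbol\lambda$ --- this is where comprehensiveness of $V$ is essential. Since the separation takes place in the finite-dimensional space $\R^n$, no topological assumptions on $\CX$ are needed and the standard finite-dimensional separation theorem for a convex set and a disjoint open convex set suffices; this is why the result is attributed to the cited reference as ``well-known''. I would state it as a direct application of that separation theorem rather than reprove the separation itself.
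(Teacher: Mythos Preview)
The paper does not actually prove Lemma~\ref{lem:characPareto}; it states it as a well-known characterisation and cites \cite[Proposition 3.2]{Ravanelli} for a proof. Your argument is the standard one that appears in such references: a finite-dimensional separation of the comprehensive value set $V\subset\R^n$ from the open lower orthant $\mbf p-\R^n_{++}$ for the forward direction, and a direct contradiction for the converse. Both directions are handled correctly, including the key observation that upward closedness of $V$ forces $\boldsymbol\lambda\ge 0$, so your proposal is a valid self-contained proof of what the paper merely cites.
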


The proof of Proposition \ref{prop:equivalence} shows that the agent system property ($\star$) dictates the values of the Negishi weights.

\begin{proof}[Proof of Proposition \ref{prop:equivalence}]
(1) By Lemma~\ref{lem:characPareto} any solution to \eqref{eq:exact} is Pareto optimal. Conversely, let $W\in\sum_{i=1}^n\dom(\Rhoi)$ and let $\mbf X\in\alloc_W$ be a Pareto optimal attainable allocation. 
Let $\lambda\in\R^n_{++}$ be any vector of Negishi weights such that $\mbf X$ is a solution to \eqref{eq:Pareto}. 
Recall the symmetric relation $\sim$ in ($\star$) and consider $j,k\in[n]$ such that $j\sim k$. By definition, we may find $Z\in\CS_j\cap\CS_k$ such that $p:=\price_j(Z)=\price_k(Z)\neq 0$. For $t\in\R$ let 
\begin{center}$\mbf X^t:=\mbf X+\frac{t}{p}Ze_j-\frac t pZe_k\in\alloc_X.$\end{center}
Here, $Ze_j$ is the vector whose $j$-th entry is $Z$, whereas all other entries are $0$. Analogously, we define $Ze_k$. By $\CS_i$-additivity of all $\Rhoi$'s, we infer
$$-\infty<\sum_{i=1}^n\lambda_i\Rhoi(X_i)\leq\inf_{t\in\R}\sum_{i=1}^n\lambda_i\Rhoi(X_i^t)=\sum_{i=1}^n\lambda_i\Rhoi(X_i)+\inf_{t\in\R}t(\lambda_j-\lambda_k).$$
This is only possible if $\lambda_j=\lambda_{k}$. Using that the graph $G$ in ($\star$) is connected, one inductively shows $\lambda_1=...=\lambda_n$. Dividing both sides of \eqref{eq:Pareto} by $\lambda_1$ yields
\[\sum_{i=1}^n\Rhoi(X_i)=\inf_{\mbf Y\in\alloc_X}\sum_{i=1}^n\Rhoi(Y_i).\]

(2) Suppose that $\mbf W$ is an initial loss endowment with associated equilibrium $(\mbf X,\phi)$. Then $\phi(X_i)=\phi(W_i)$ holds by monotonicity. Given $Z_i\in \CS_i$ such that $\price_i(Z_i)=1$ and arbitrary $Y_i\in \CX_i$,
\[\phi(Y_i+(\phi(X_i)-\phi(Y_i))Z_i)=\phi(X_i)=\phi(W_i)\] 
holds as $\phi=\price_i$ on $\CS_i$. Thus the budget constraint is satisfied, and hence $$\Rhoi(X_i)\leq \Rhoi(Y_i + \phi( X_i-Y_i)Z_i) = \Rhoi(Y_i)+ \phi( X_i) -\phi(Y_i).$$ If we set $W:=W_1+...+W_n$, for any other allocation $\mbf Y\in \alloc_W$ we obtain $$\sum_{i=1}^n \Rhoi(X_i)\leq \sum_{i=1}^n \Rhoi(Y_i)+ \phi( X_i)-\phi(Y_i)= \sum_{i=1}^n \Rhoi(Y_i) $$ since $\sum_{i=1}^n\phi( X_i) = \sum_{i=1}^n\phi( Y_i)= \phi(W)$. By (1), $\mbf X$ is Pareto optimal.
\end{proof}


\section{Infimal convolutions and the representative agent}\label{sec:maths}

This section is comprised of the formal mathematical treatment of the risk sharing problem on ideals of a Riesz space as introduced in Section~\ref{sec:model}. We shall link risk sharing to the infimal convolution of the individual risk measures, prove its representation as a capital requirement for the market, i.e.\ a \textit{representative agent}, and find powerful sufficient conditions for its solvability. 

Beforehand, however, we need to introduce further axioms that an agent system may satisfy in addition to ($\star$). We shall refer to them at various stages of the paper, they are however not assumed to be met throughout. For $n\geq 2$ let $(\mathcal R_1,...,\mathcal R_n)$ be an agent system.
\begin{itemize}
\item[(NSA)]\textsc{No security arbitrage:} For some $j\in[n]$ it holds that 
\begin{center}$\left(\sum_{i\neq j}\ker(\price_i)\right)\cap\CS_j\subset\ker(\price_j)$;\end{center}
\item[(NR)]\textsc{Non-redundance of the joint security market:} There is $Z\in\check{\CS}$ and $\mbf Z\in\alloc_Z^s$ such that $\sum_{i=1}^n\price_i(Z_i)\neq 0.$
\item[(SUP)] \textsc{Supportability:} Given a locally convex Hausdorff topological Riesz space $(\CX,\preceq,\tau)$ with dual space $\CX^*$, there is some $\phi_0\in\CX^*_+$ and a constant $\gamma\in\R$ such that
\begin{enumerate}[(i)]
\item for all $\mbf Z\in\prod_{i=1}^n\CS_i$ with $\sum_{i=1}^n\price_i(Z_i)=0$ we have $\phi_0(Z_1+...+ Z_n)=0$ and for some $\tilde{\mbf Z}\in\prod_{i=1}^n\CS_i$ with $\sum_{i=1}^n\price_i(\tilde Z_i)\neq 0$ we have $\phi_0(\tilde Z_1+...+\tilde Z_n)\neq 0$;
\item for all $\mbf Y\in\prod_{i=1}^n\acc_i$ we have $\phi_0(Y_1+...+Y_n)\leq \gamma$.
\end{enumerate}
\item[(SUP$_\infty$)] \textsc{Infinite supportability:} $(\mathcal R_i)_{i\in\N}$ is a sequence of risk measurement regimes on one and the same locally convex Hausdorff topological Riesz space $(\CX,\preceq,\tau)$ such that $(\mathcal R_1,...,\mathcal R_n)$ satisfies ($\star$) for all $n\in\N$ and such that there is some $\phi_0\in\CX^*$ with 
\begin{center}$\sum_{i\in\N}\sup_{Y\in\acc_i}\phi_0(Y)<\infty\text{ and }\phi_0|_{\CS_i}=\price_i,~i\in\N.$\end{center}
\end{itemize}

Condition (NSA) is violated if each agent would be able to obtain arbitrarily valuable securitisation from the other agents, who can provide it at zero cost. That would reveal a mismatch of security markets leading to hypothetical infinite wealth for all agents. Non-redundance of the joint security market is in particular satisfied if there is $Z\in \check{\CS}$ such that $\price_i(Z)\neq 0$ for some $i\in [n]$, and thus by the defining property ($\star$) of an agent system for all $i\in[n]$. Hence, under (NR) there is a jointly accepted security valuable for the market. As regards condition (SUP), think of $\phi_0$ as a pricing functional. (i) is a consistency requirement between $\phi_0$ and the individual prices $\price_i$. (ii) reads as the impossibility to decompose a loss $X$ acceptably for all agents if $X$ is sufficiently poor, that is the value $\phi_0(-X)$ of the corresponding payoff $-X$ under $\phi_0$ is less than a certain level $-\gamma$. (SUP$_\infty$) is a strengthening of (SUP) for all finite subsystems of $(\mathcal R_i)_{i\in\N}$.

\subsection{The risk sharing functional and the representative agent} 

Proposition \ref{prop:equivalence} motivates the definition of the \textsc{risk sharing functional}:
\[\Lambda:\CX\to[-\infty,\infty],\quad X\mapsto\inf_{\mbf X\in\alloc_X}\sum_{i=1}^n\Rhoi(X_i).\]
It corresponds to the so-called \textsc{infimal convolution} of the risk measures $\rho_1,...,\rho_n$, and thus inherits properties like $\peq$-monotonicity and convexity. We refer to Appendix~\ref{sec:infimal:convolution}, in particular Lemma~\ref{lem:properties}, for a brief summary of these facts.

Our next result implies that, if proper, $\Lambda$ is again a risk measure of type \eqref{defRM}: the shared risk level is the minimal price the market has to pay for a cumulated security that renders market acceptability. Thus, market behaviour may be seen as the behaviour of a \textit{representative agent} operating on $\CX$. Recall from Definition \ref{def:allocation} that $\alloc_Z^s$ denotes the set of security allocations of $Z\in\CM$.

\begin{proposition}\label{prop:operational} Define $\pi(Z):=\inf_{\mbf Z\in\alloc_Z^s}\sum_{i=1}^n\price_i(Z_i)$, $Z\in\CM$.
\begin{enumerate}
\item[\tn{(1)}] For any $Z\in\CM$ and arbitrary $\mbf Z\in\alloc_Z^s$, $\pi(Z)$ may be represented as
\[\pi(Z)=\sum_{i=1}^n\price_i(Z_i)+\pi(0).\]
Either $\pi(0)=0$ or $\pi(0)=-\infty$. $\pi(0)=0$ is equivalent to \tn{(NSA)}, and in that case $\pi$ is real-valued, linear, and satisfies $\pi|_{\CS_i}=\price_i$, $i\in [n]$. Otherwise $\pi\equiv -\infty$.
\item[\tn{(2)}] $\Lambda$ can be represented as 
\[\Lambda(X)=\inf\left\{\pi(Z)\mid Z\in\mathcal M,~X-Z\in\mathscr B\right\},\quad X\in\CX,\]
for any monotone and convex set $\mathscr B\subset\CX$ satisfying $\mathscr A_+\subset\mathscr B\subset\mathcal L_0(\Lambda)$. Here,  
\begin{center}$\mathscr A_{+}:=\sum_{i=1}^n\acc_i.$\end{center}
\item[\tn{(3)}] If \tn{(NSA)} and \tn{(SUP)} hold, $\Lambda$ is proper.
\item[\tn{(4)}] If $\Lambda$ is proper, then \tn{(NSA)} holds, i.e. $\pi(0)=0$, and $\pi$ is positive. In that case, $(\mathscr A_+,\CM,\pi)$ is a risk measurement regime on $\CX$ and $\Lambda$ is the associated risk measure.
\end{enumerate}
\end{proposition}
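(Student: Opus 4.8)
\textbf{Proof strategy for Proposition~\ref{prop:operational}.}

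The plan is to treat the four parts in the order (1), (2), (4), (3), since (4) only reassembles the pieces of (1) and (2), whereas (3) is the genuinely analytic statement.

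For part (1), I would start from the observation that the set $\alloc_Z^s$ of security allocations of a fixed $Z\in\CM$ is, after subtracting one reference allocation $\mbf Z$, a translate of $\alloc_0^s$; and $\alloc_0^s = \{\mbf N\in\prod_i\CS_i \mid N_1+\dots+N_n=0\}$. Since each $\price_i$ is linear, $\sum_i\price_i(Z_i+N_i)=\sum_i\price_i(Z_i)+\sum_i\price_i(N_i)$, which immediately gives $\pi(Z)=\sum_i\price_i(Z_i)+\pi(0)$. For the dichotomy $\pi(0)\in\{0,-\infty\}$: $\pi(0)=\inf\{\sum_i\price_i(N_i)\mid \mbf N\in\alloc_0^s\}$ is the infimum of a linear functional over a linear subspace, hence is $0$ if that functional vanishes on the subspace and $-\infty$ otherwise. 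So I must show that the functional vanishes on $\alloc_0^s$ precisely when (NSA) holds. If $\mbf N\in\alloc_0^s$, then $N_j=-\sum_{i\neq j}N_i\in\CS_j\cap\sum_{i\neq j}\CS_i$; (NSA) applied with this $j$ (after using ($\star$) to see the non-trivial price-agreement structure transfers the kernel condition to every index) forces $N_j\in\ker(\price_j)$, and then $\sum_i\price_i(N_i)=\price_j(N_j)+\sum_{i\neq j}\price_i(N_i)$; iterating/telescoping with ($\star$) yields the sum is $0$. Conversely, if (NSA) fails, one produces $\mbf N\in\alloc_0^s$ with $\sum_i\price_i(N_i)\neq 0$, whence $\pi(0)=-\infty$ by scaling. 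When $\pi(0)=0$, the formula $\pi(Z)=\sum_i\price_i(Z_i)$ is well-defined (independent of the chosen allocation, precisely because it is $0$ on $\alloc_0^s$), manifestly linear in $Z$, real-valued on $\CM$, and restricting to $\CS_i$ (take the allocation concentrated in coordinate $i$) gives $\pi|_{\CS_i}=\price_i$.

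For part (2), the key identity is $\alloc_X = \bigl\{\mbf Y + \mbf Z \mid \mbf Y\in\prod_i\acc_i,\ \mbf Z\in\alloc_{X-(Y_1+\dots+Y_n)}^s\bigr\}$ traversed appropriately; more precisely, I would unfold the definition of $\Lambda$ and use $\CS_i$-additivity of each $\Rhoi$. Given $\mbf X\in\alloc_X$, write each $X_i=Y_i+Z_i$ with $Z_i\in\CS_i$ and $Y_i=X_i-Z_i$; then $\Rhoi(X_i)=\Rhoi(Y_i)+\price_i(Z_i)$, and $\Rhoi(Y_i)\le 0$ iff $Y_i\in\acc_i$ is \emph{not} quite what we want, but minimising over $Z_i$ inside $\Rhoi$ and then over the split already encodes the infimal convolution. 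Concretely: $\Lambda(X)=\inf\{\sum_i\Rhoi(X_i)\mid \mbf X\in\alloc_X\}=\inf\{\sum_i[\price_i(Z_i)+(\text{acceptability slack})]\}$. Chasing the definitions, one gets $\Lambda(X)=\inf\{\sum_i\price_i(Z_i)\mid Z_i\in\CS_i,\ X-(Z_1+\dots+Z_n)\in\sum_i\acc_i\}=\inf\{\pi(Z)\mid Z\in\CM,\ X-Z\in\mathscr A_+\}$, using part (1) to pass from $\sum_i\price_i(Z_i)$ to $\pi(Z)$. This proves the formula for $\mathscr B=\mathscr A_+$. For a general $\mathscr B$ with $\mathscr A_+\subset\mathscr B\subset\mathcal L_0(\Lambda)$: enlarging the acceptance set can only decrease the infimum, so $\inf\{\pi(Z)\mid X-Z\in\mathscr B\}\le \inf\{\pi(Z)\mid X-Z\in\mathscr A_+\}=\Lambda(X)$; conversely if $X-Z\in\mathscr B\subset\mathcal L_0(\Lambda)$ then $\Lambda(X-Z)\le 0$, and $\CM$-additivity of $\Lambda$ (which follows from $\CS_i$-additivity of the $\Rhoi$, or directly from the representation just obtained) gives $\Lambda(X)=\Lambda(X-Z)+\pi(Z)\le\pi(Z)$, so $\Lambda(X)\le\inf\{\pi(Z)\mid X-Z\in\mathscr B\}$.

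For part (4): if $\Lambda$ is proper it is in particular not identically $-\infty$, so by part (1) we cannot have $\pi\equiv-\infty$ (else the representation in (2) would force $\Lambda\equiv-\infty$ wherever $\mathscr A_+\neq\emptyset$, which it is, being a sum of non-empty acceptance sets, and $\CM\neq\{0\}$ since each $\CS_i$ contains a $U_i$ with $\price_i(U_i)=1$); hence $\pi(0)=0$, i.e.\ (NSA) holds, and $\pi$ is linear and positive because each $\price_i$ is positive and a positive element of $\CM$ need not split into positive pieces --- here I should be slightly careful, but positivity of $\pi$ on $\CM\cap\CX_+$ can be extracted from $\pi|_{\CS_i}=\price_i\ge 0$ together with the fact that $\Lambda$ is $\peq$-monotone (Lemma~\ref{lem:properties}) and $\CM$-additive, forcing $\pi$ to agree with $\Lambda$'s ``cash direction'' and be monotone. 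Then $\mathscr A_+$ is a non-empty, proper (properness of $\Lambda$ gives properness of $\mathscr A_+$), convex, monotone set, $(\CM,\pi)$ is a security market (finite-dimensional as a finite sum of finite-dimensional spaces; contains $U_1$ with $\pi(U_1)=1$), and the no-arbitrage condition $\sup\{\pi(Z)\mid X+Z\in\mathscr A_+\}<\infty$ is exactly $-\Lambda(-X)<\infty$ from the representation in (2), which holds since $\Lambda>-\infty$ everywhere by properness (convexity plus one finite value). Thus $(\mathscr A_+,\CM,\pi)$ is a risk measurement regime with associated risk measure $\Lambda$ by part (2) with $\mathscr B=\mathscr A_+$.

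For part (3), this is where the real work lies. Assume (NSA) and (SUP), fix the functional $\phi_0\in\CX_+^*$ and constant $\gamma$ from (SUP). By (NSA) and part (1), $\pi$ is real-valued and linear with $\pi|_{\CS_i}=\price_i$. The goal is to bound $\Lambda$ from below. Using the representation $\Lambda(X)=\inf\{\pi(Z)\mid Z\in\CM,\ X-Z\in\mathscr A_+\}$: for any admissible $Z$, write $X-Z=Y_1+\dots+Y_n$ with $Y_i\in\acc_i$, and also choose a security allocation $\mbf Z\in\alloc_Z^s$, so $\pi(Z)=\sum_i\price_i(Z_i)$. By (SUP)(i), $\phi_0$ is consistent with the prices in the sense that $\phi_0(Z_1+\dots+Z_n)$ depends only on $\sum_i\price_i(Z_i)=\pi(Z)$ --- more precisely, comparing with the ``reference'' allocation $\tilde{\mbf Z}$ with $\sum_i\price_i(\tilde Z_i)=:a\neq 0$ and $\phi_0(\sum_i\tilde Z_i)=:b\neq 0$, linearity and the kernel condition in (SUP)(i) give $\phi_0(\sum_i Z_i)=\frac{b}{a}\,\pi(Z)$. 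Then
\[
\phi_0(X) = \phi_0(Z) + \phi_0(X-Z) = \tfrac{b}{a}\,\pi(Z) + \phi_0(Y_1+\dots+Y_n) \le \tfrac{b}{a}\,\pi(Z) + \gamma,
\]
using (SUP)(ii). After checking the sign of $b/a$ is positive (which one arranges / deduces from positivity of $\phi_0$ and the $U_i$'s, possibly replacing $\tilde{\mbf Z}$), this rearranges to $\pi(Z)\ge \frac{a}{b}\bigl(\phi_0(X)-\gamma\bigr)$, a bound uniform over all admissible $Z$; taking the infimum, $\Lambda(X)\ge \frac{a}{b}(\phi_0(X)-\gamma)>-\infty$. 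Finally $\Lambda\not\equiv+\infty$: picking any $\mbf Y\in\prod_i\acc_i$ (non-empty) gives $\Lambda(Y_1+\dots+Y_n)\le\sum_i\Rhoi(Y_i)\le 0<\infty$. Hence $\Lambda$ is proper. The main obstacle throughout is the bookkeeping around $\pi$ being well-defined and the precise way (SUP)(i) pins down $\phi_0$ on $\CM$ proportionally to $\pi$; once that linear-algebra lemma is isolated, the rest is routine.
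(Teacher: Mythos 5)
Your overall architecture --- translate $\alloc_Z^s$ by a reference allocation, reduce (1) to a dichotomy for a linear functional on the subspace $\alloc_0^s$, sandwich $\mathscr B$ between $\mathscr A_+$ and $\mathcal L_0(\Lambda)$ in (2), read off (4) from (1)--(2) plus $\peq$-monotonicity and $\CM$-additivity of $\Lambda$, and prove (3) by showing that (SUP)(i) forces $\phi_0|_{\CM}$ to be a positive multiple of $\pi$ --- is exactly the paper's. Parts (2), (3) and (4) are essentially correct as sketched; the only omission there is that you assert $\mathscr A_+$ is monotone and convex without proof, and monotonicity is not automatic because each $\acc_i$ is only monotone inside its own ideal $\CX_i$: the paper invokes the Riesz Decomposition Property to split $Y-X\succeq 0$ into pieces $W_i\peq|Y_i-X_i|$ that land in the respective $\CX_i$.

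The one step that genuinely fails is in (1), in the implication (NSA) $\Rightarrow\pi(0)=0$. You take $\mbf N\in\alloc_0^s$, note $N_j=-\sum_{i\neq j}N_i\in\CS_j\cap\sum_{i\neq j}\CS_i$, and apply (NSA) to conclude $N_j\in\ker(\price_j)$. But (NSA) only constrains $\CS_j\cap\sum_{i\neq j}\ker(\price_i)$, and the components $N_i$, $i\neq j$, of an arbitrary zero-allocation need not lie in the kernels. Concretely, with $n=2$, $\CS_1=\CS_2=\R\cdot U$ and $\price_1(U)=\price_2(U)=1$, the allocation $(rU,-rU)\in\alloc_0^s$ has $N_1\notin\ker(\price_1)$ for $r\neq 0$ even though (NSA) holds; the sum of prices is still $0$, but not for the reason you give, and no telescoping repairs this. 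The paper instead passes to the subspace $\mathcal V:=\{(\price_i(N_i))_{i\in[n]}\mid\mbf N\in\alloc_0^s\}\subset\R^n$ and observes that $\pi(0)=\inf_{x\in\mathcal V}\langle\ind,x\rangle$: (NSA) is equivalent to $e_j\notin\mathcal V$ for some $j$, i.e.\ to $\dim\mathcal V<n$; property $(\star)$ gives $e_j-e_k\in\mathcal V$ whenever $j\sim k$, so by connectedness of $G$ every nonzero $\lambda\in\mathcal V^\perp$ is a multiple of $\ind$, whence $\ind\in\mathcal V^\perp$ and $\pi(0)=0$. Your ``iterating/telescoping with $(\star)$'' needs to be replaced by exactly this orthogonal-complement argument; with that substitution the rest of your proof goes through.
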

\begin{proof}
(1) Let $Z\in\mathcal M$ and let $\mbf Z\in\alloc_Z^s$ be arbitrary, but fixed. The identity $\alloc_Z^s=\mbf Z+\alloc_0^s$ implies
\[\pi(Z)=\sum_{i=1}^n\price_i(Z_i)+\inf_{\mbf N\in\alloc_0^s}\sum_{i=1}^n\price_i(N_i)=\sum_{i=1}^n\price_i(Z_i)+\pi(0).\]
Consider $\mathcal V:=\{(\price_i(N_i))_{i\in [n]}\mid\mbf N\in\alloc_0^s\}$, which is a subspace of $\R^n$. In the following, we denote by $e_l$ the $l$-th unit vector of $\R^n$. Note that $\pi(0)=0$ if, and only if, dim$(\mathcal V)<n$. Indeed, let  $\ind=(1,1,...,1)\in\R^n$ and observe that $\pi(0)=\inf_{x\in \mathcal V}\langle \ind, x \rangle$ which is $-\infty$ in case dim$(\mathcal V)=n$. Suppose that dim$(\mathcal V)<n$, i.e. $\mathcal V^\perp\neq \{0\}$, and let $0\neq \lambda\in\mathcal V^\perp$. As in the proof of Proposition \ref{prop:equivalence}(1), $e_j-e_{k}\in\mathcal V$ holds for all $j,k\in[n]$ such that $j\sim k$, which implies $\lambda_j=\lambda_k$. As the relation $\sim$ induces a connected graph, $\lambda\in\R\cdot\ind=\mathcal V^\perp$. Hence, we obtain that  $\langle \ind, x \rangle=0$ for all $x\in \mathcal V$ which implies $\pi(0)=0$, so we have proved equivalence of $\pi(0)=0$ and dim$(\mathcal V)<n$.  But dim$(\mathcal V)<n$ is equivalent to the fact that there is a $j\in[n]$ such that $e_j\notin\mathcal V$, which in turn is equivalent to (NSA): whenever $Z\in\CS_j$ lies in the Minkowski sum $\sum_{i\neq j}\ker(\price_i)$, $\price_j(Z)=0$ has to hold.

(2) We first note that $\mathscr A_+$ is convex and monotone. Indeed, let $X,Y\in\CX$ such that $Y\in\mathscr A_+$ and $X\peq Y$. Fix $\mbf Y\in\alloc_Y$ such that $Y_i\in\acc_i$, $i\in[n]$, and $\mbf X\in\alloc_X$ arbitrary. By the Riesz Decomposition Property (c.f. \cite[Section 8.5]{Ali}), there are $W_1,...,W_n\in \CX_+$ such that $Y-X=\sum_{i=1}^n W_i$ and $W_i\peq |Y_i-X_i|$, which means $\mbf W\in\alloc_{Y-X}$. Hence, for all $i\in[n]$, we obtain $Y_i-W_i\in\acc_i$ by monotonicity of $\acc_i$, and thus $X=\sum_{i=1}^nY_i-W_i\in\mathscr A_+$. Moreover, $\CL_0(\Lambda)$ is monotone and convex as well, which follows from the corresponding properties of $\Lambda$.

For $\mathscr B\subset\mathscr B'$, we have  
\[\inf\left\{\pi(Z)\mid Z\in\mathcal M,~X-Z\in\mathscr B\right\}\geq \inf\left\{\pi(Z)\mid Z\in\mathcal M,~X-Z\in\mathscr B'\right\}.\]
As $\mathscr A_+\subset \CL_0(\Lambda)$, (2) is proved if for arbitrary $X\in\CX$ we can show the two estimates
\begin{equation}\label{eq:help1}\Lambda(X)\geq \inf\left\{\pi(Z)\mid Z\in\mathcal M,~X-Z\in\mathscr A_+\right\},\end{equation}
and 
\begin{equation}\label{eq:help2}\Lambda(X)\leq \inf\left\{\pi(Z)\mid Z\in\mathcal M,~X-Z\in\CL_0(\Lambda)\right\}.\end{equation}
The first assertion trivially holds if $\Lambda(X)=\infty$. If $X\in\dom(\Lambda)=\sum_{i=1}^n\dom(\Rhoi)$, choose $\mbf X\in\alloc_X$ such that $\Rhoi(X_i)<\infty$, $i\in[n]$, and $\eps>0$ arbitrary. Suppose $\mbf Z\in\prod_{i=1}^n\CS_i$ is such that for $\price_i(Z_i)\leq\Rhoi(X_i)+\frac{\eps}n$ and $X_i-Z_i\in\acc_i$, $i\in[n]$. Set $Z^*:=Z_1+...+Z_n$ and observe $X-Z^*\in\mathscr A_+$ as well as 
\[\sum_{i=1}^n\Rhoi(X_i)+\eps\geq \sum_{i=1}^n\price_i(Z_i)\geq\pi(Z^*)\geq\inf\{\pi(Z)\mid Z\in\mathcal M,~X-Z\in\mathscr A_+\}.\]
This proves \eqref{eq:help1}.
We now turn to \eqref{eq:help2}. If $\Lambda(X)=\infty$, assume for contradiction there is some $Z\in\mathcal M$ such that $X-Z\in \CL_0(\Lambda)\subset\sum_{i=1}^n\dom(\Rhoi).$
Choose $\mbf Y\in\alloc_{X-Z}$ such that $Y_i\in\dom(\Rhoi)$ for all $i$, and let $\mbf Z\in\alloc_Z^s$ be arbitrary. Then 
$$\Lambda(X)\leq\sum_{i=1}^n\Rhoi(Y_i+Z_i)=\sum_{i=1}^n\Rhoi(Y_i)+\sum_{i=1}^n\price_i(Z_i)<\infty.$$
This is a \textsc{contradiction} and no such $Z\in \mathcal M$ can exist. \eqref{eq:help2} holds in this case.
Now assume $X\in\dom(\Lambda)$ and suppose $Z\in \mathcal M$ satisfies $X-Z\in\CL_0(\Lambda)$. Hence, for arbitrary $\eps>0$ there is $\mbf Y\in\alloc_{X-Z}$ such that $\sum_{i=1}^n\Rhoi(Y_i)\leq \eps$. As $\mbf Y+\mbf Z\in\alloc_X$ for all $\mbf Z\in\alloc_Z^s$, 
\[\Lambda(X)\leq\inf_{\mbf Z\in\alloc_Z^s}\sum_{i=1}^n\Rhoi(Y_i+Z_i)=\sum_{i=1}^n\Rhoi(Y_i)+\pi(Z)\leq\eps+\pi(Z).\]
As $\eps>0$ was chosen arbitrarily, we obtain \eqref{eq:help2}.

(3) Assume (NSA) and (SUP) are fulfilled, let $\phi_0\in\CX^*$ as described in (SUP), and note that $\pi$ is linear by (1). We shall prove that $\phi_0|_{\CM}=\kappa\pi$ for some $\kappa>0$, so by rescaling $\phi_0|_\CM=\pi$ may be assumed without loss of generality. To this end, we restate requirement (SUP)(ii) as $\sup_{Y\in\mathscr A_+}\phi_0(Y)<\infty$, which entails positivity of the functional $\phi_0$ by monotonicity of $\mathscr A_+$. (SUP)(i) means in particular that $\phi_0|_{\ker(\pi)}\equiv 0$. For each $i\in[n]$ fix $U_i\in\CS_i\cap\CX_{++}$ such that $U:=\sum_{i=1}^nU_i$ satisfies $\pi(U)=\sum_{i=1}^n\price_i(U_i)=1$. As for all $Z\in\CM$, $Z-\pi(Z)U\in\ker(\pi)$, we infer $\phi_0(Z-\pi(Z)U)=0$, or equivalently $\phi_0=\phi_0(U)\pi$ on $\CM$. By the second part of (SUP)(i), $\phi_0(\tilde Z)\neq 0$ for some $\tilde Z\in\CM$ with $\pi(\tilde Z)\neq 0$. Using positivity of $\phi_0$, we obtain $0<\frac{\phi_0(\tilde Z)}{\pi(\tilde Z)}=\phi_0(U)$, hence  we may set $\kappa:=\phi_0(U)$. Finally, if $\kappa=1$, $X\in\CX$ is arbitrary, and $Z\in\CM$ is such that $X-Z\in\mathscr A_+$,
$$\pi(Z)=\phi_0(Z)=\phi_0(X)-\phi_0(X-Z)\geq \phi_0(X)-\sup_{Y\in\mathscr A_+}\phi_0(Y)>-\infty.$$
The bound on the right-hand side is independent of $Z$. Using the representation of $\Lambda$ in (2), properness follows.

(4)  Note that  $\Lambda$ is $\mathcal M$-additive by (2). Since $\Lambda$ is proper, we cannot have $\pi\equiv -\infty$, hence $\pi(0)=0$, i.e.\ (NSA) holds by (1). As regards positivity of $\pi$, choose $Y\in\CX$ with $\Lambda(Y)\in\R$. For $Z\in\CM\cap\CX_+$, monotonicity of $\Lambda$ then shows $\Lambda(Y)\leq \Lambda(Y+Z)=\Lambda(Y)+\pi(Z)$, which entails $\pi(Z)\geq 0$. It follows that $(\mathscr A_+,\CM,\pi)$ is a risk measurement regime.
\end{proof}

The preceding proposition offers a more geometric perspective on assumption (SUP). Suppose the agent system operates on a a locally convex Hausdorff topological Riesz space $(\CX,\preceq,\tau)$ and satisfies (NSA). Moreover, assume we can find a security $Z^*\in\CM$ such that
\begin{equation}\label{bad:security}Z^*\notin\cl_{\tau}(\mathscr A_++\ker(\pi)),\end{equation}
where here and in the the following $\cl_\tau(\cdot)$ denotes the closure of a set with respect to the topology $\tau$. Then (SUP) means $Z^*$ is a security which comes at a true cost for the market; it can be strictly separated from $\mathscr A_++\ker(\pi)$ using a linear functional $\phi_0\in\CX^*_+$, and this functional is exactly as described in (SUP).

In the situation of Proposition \ref{prop:operational}(4), the behaviour of the representative agent is given by the risk measurement regime $(\mathscr A_+,\CM,\pi)$. The risk sharing functional is the \textit{market capital requirement} associated to the \textit{market acceptance set} $\mathscr A_+$ and the global security market $(\CM,\pi)$.  

\subsection{Optimal payoffs and Pareto optima}

We now turn our attention to the existence of Pareto optimal allocations. By Proposition \ref{prop:equivalence}, $W\in\dom(\Lambda)$ admits a Pareto optimal allocation if and only if $\Lambda$ is \textit{exact} at $W$, i.e.\ \eqref{eq:exact} holds. We will see that this problem is closely related to the existence of  a market security $Z^W\in\CM$ which renders market acceptability $W-Z^W\in\mathscr A_+$ at the minimal price $\pi(Z^W)=\Lambda(W)$. The latter  problem has been studied by Baes et al. in \cite{Baes}.

\begin{definition}\label{def:optpayoff}
$W\in\CX$ admits an \textsc{optimal payoff} $Z^W\in\CM$ if $W-Z^W\in \mathscr A_+$ and $\pi(Z^W)=\Lambda(W)$.
\end{definition}

\begin{proposition}\label{prop:exact1}
Suppose that $\Lambda$ is proper. If $X\in\CX$ admits an optimal payoff $Z^X\in\CM$, then $X\in \dom(\Lambda)$ and $\Lambda$ is exact at $X$. In particular, for any $Y_i\in \acc_i$, $i\in [n]$, and $\mbf Z\in \alloc_{Z^X}^s$ such that $\sum_{i=1}^n Y_i=X-Z^X$ the allocation   $(Y_i+Z_i)_{i\in [n]}\in\alloc_X $ is Pareto optimal. If moreover $\CL_0(\Rhoi)=\acc_i+\ker(\price_i)$, $i\in[n]$, then $\Lambda$ is exact at $X\in \dom(\Lambda)$ if, and only if $X$ admits an optimal payoff. \end{proposition}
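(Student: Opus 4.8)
The plan is to prove each implication in turn, using the representation of $\Lambda$ from Proposition~\ref{prop:operational}(2) together with $\CS_i$-additivity of the $\Rhoi$ and linearity of $\pi$ (which holds by Proposition~\ref{prop:operational}(1), since properness of $\Lambda$ forces (NSA)). First I would treat the statement that an optimal payoff $Z^X$ forces $X\in\dom(\Lambda)$ and exactness: given $Z^X\in\CM$ with $X-Z^X\in\mathscr A_+=\sum_{i=1}^n\acc_i$ and $\pi(Z^X)=\Lambda(X)$, pick any decomposition $X-Z^X=\sum_{i=1}^nY_i$ with $Y_i\in\acc_i$, and any security allocation $\mbf Z\in\alloc^s_{Z^X}$. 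Then $(Y_i+Z_i)_{i\in[n]}\in\alloc_X$, and by $\Rhoi(Y_i)\le 0$ (since $Y_i\in\acc_i$), $\CS_i$-additivity, and Proposition~\ref{prop:operational}(1),
\[
\sum_{i=1}^n\Rhoi(Y_i+Z_i)=\sum_{i=1}^n\Rhoi(Y_i)+\sum_{i=1}^n\price_i(Z_i)\le \sum_{i=1}^n\price_i(Z_i)=\pi(Z^X)=\Lambda(X).
\]
Since the left-hand side is $\ge\Lambda(X)$ by definition of the infimal convolution, equality holds throughout; in particular $\Lambda(X)>-\infty$ (as $\Lambda$ is proper) and the value is finite, so $X\in\dom(\Lambda)$, $\Lambda$ is exact at $X$, and the allocation $(Y_i+Z_i)_{i}$ attains the infimum — hence is Pareto optimal by Proposition~\ref{prop:equivalence}(1).

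Next, under the extra hypothesis $\CL_0(\Rhoi)=\acc_i+\ker(\price_i)$ for all $i$, I would prove the converse: if $X\in\dom(\Lambda)$ and $\Lambda$ is exact at $X$, then $X$ admits an optimal payoff. Take a Pareto optimal (equivalently, optimal) allocation $\mbf X\in\alloc_X$ with $\sum_{i=1}^n\Rhoi(X_i)=\Lambda(X)$. Fix for each $i$ some $U_i\in\CS_i\cap\CX_{++}$ with $\price_i(U_i)=1$. Then $X_i-\Rhoi(X_i)U_i$ should lie in $\CL_0(\Rhoi)$: indeed by $\CS_i$-additivity $\Rhoi(X_i-\Rhoi(X_i)U_i)=\Rhoi(X_i)-\Rhoi(X_i)=0\le 0$. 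By hypothesis, write $X_i-\Rhoi(X_i)U_i=A_i+N_i$ with $A_i\in\acc_i$ and $N_i\in\ker(\price_i)$. Set $Z_i:=\Rhoi(X_i)U_i+N_i\in\CS_i$ and $Z^X:=\sum_{i=1}^n Z_i\in\CM$. Then $X-Z^X=\sum_{i=1}^n(X_i-Z_i)=\sum_{i=1}^n A_i\in\mathscr A_+$, and using linearity of $\pi$ and $\pi|_{\CS_i}=\price_i$,
\[
\pi(Z^X)=\sum_{i=1}^n\price_i(Z_i)=\sum_{i=1}^n\bigl(\Rhoi(X_i)\price_i(U_i)+\price_i(N_i)\bigr)=\sum_{i=1}^n\Rhoi(X_i)=\Lambda(X).
\]
So $Z^X$ is an optimal payoff.

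One subtlety worth spelling out: the problem reduces to checking that $X_i-\Rhoi(X_i)U_i$ genuinely lies in the \emph{lower level set} $\CL_0(\Rhoi)=\{Y\mid\Rhoi(Y)\le 0\}$ and that $\Rhoi(X_i)$ is finite — both are immediate once $\mbf X$ is an attainable allocation with $\sum_i\Rhoi(X_i)=\Lambda(X)\in\R$, since each $\Rhoi(X_i)>-\infty$ would need justification; here properness of $\Lambda$ and finiteness of the sum together with the convention $\Rhoi>-\infty$ (risk measures are proper by definition, see the remarks after Definition~\ref{def:RM}) give $\Rhoi(X_i)\in\R$ for every $i$. The genuinely delicate point, and the place where the extra hypothesis $\CL_0(\Rhoi)=\acc_i+\ker(\price_i)$ is indispensable, is exactly the decomposition $X_i-\Rhoi(X_i)U_i=A_i+N_i$: in general one only has $\acc_i+\ker(\price_i)\subset\CL_0(\Rhoi)$, and without the reverse inclusion an acceptable-at-level-zero position need not split into an acceptable part plus a costless security, so there is no reason an optimal allocation should arise from an optimal market payoff. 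I expect this is the main obstacle, but it is resolved outright by invoking the hypothesis; no further work is needed beyond the bookkeeping above.
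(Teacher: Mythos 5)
Your proof is correct and follows essentially the same route as the paper's: for the first implication you test the allocation $(Y_i+Z_i)_i$ against the infimal convolution using $\Rhoi(Y_i)\le 0$, $\CS_i$-additivity, and $\pi(Z^X)=\sum_i\price_i(Z_i)$ from Proposition~\ref{prop:operational}; for the converse you split $X_i-\Rhoi(X_i)U_i$ via the hypothesis $\CL_0(\Rhoi)=\acc_i+\ker(\price_i)$ and reassemble the optimal payoff $\sum_i\Rhoi(X_i)U_i+N_i$, exactly as in the paper. The extra care you take over finiteness of each $\Rhoi(X_i)$ is a point the paper leaves implicit, but it changes nothing substantive.
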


\begin{proof} As $\Lambda$ is proper, we have $\pi$ is linear, finite valued, and $\pi|_{\CS_i}=\price_i$, $i\in [n]$, by Proposition~\ref{prop:operational}. Assume $X\in\CX$ and $Z=Z^X\in\CM$ are such that $\Lambda(X)=\pi(Z)$ and $X-Z\in\mathscr A_+$. As $\pi(Z)\in \R$ and $\Lambda|_{\mathscr A_+}\leq 0$, $X\in \dom(\Lambda)$. Choose $Y_i\in\acc_i$, $i\in[n]$, such that $X-Z=\sum_{i=1}^nY_i$. For any $\mbf Z\in\alloc_Z^s$ we thus have $X=\sum_{i=1}^nY_i+Z_i$ and 
\[\Lambda(X)\leq \sum_{i=1}^n\Rhoi(Y_i+Z_i)=\sum_{i=1}^n\Rhoi(Y_i)+\sum_{i=1}^n\price_i(Z_i)\leq\pi(Z)=\Lambda(X),\]
where we have used $\Rhoi(Y_i)\leq 0$ and $\pi(Z)=\sum_{i=1}^n\price_i(Z_i)$  (Proposition~\ref{prop:operational}). This shows exactness of $\Lambda$ at $X$.

Now assume $\CL_0(\Rhoi)=\acc_i+\ker(\price_i)$, $i\in[n]$. Let $X\in \dom(\Lambda)$ and $\mbf X\in\alloc_X$ such that $\Lambda(X)=\sum_{i=1}^n\Rhoi(X_i)$. Further, let $U_i\in\CS_i$ with $\price_i(U_i)=1$. As $X_i-\Rhoi(X_i)U_i\in\acc_i+\ker(\price_i)$, $i\in[n]$, by assumption, we may find $\mbf N\in\prod_{i=1}^n\ker(\price_i)$ such that $X_i-\Rhoi(X_i)U_i+N_i\in\acc_i$ for every $i\in[n]$. The fact that $\sum_{i=1}^n\Rhoi(X_i)U_i-N_i$ is an optimal payoff for $X$ is immediately verified. 
\end{proof}

In a topological setting, the existence of optimal payoffs is intimately connected to the Minkowski sum $\mathscr A_++\ker(\pi)$ being closed:

\begin{proposition}\label{prop:exact2} Suppose $(\CX,\peq,\tau)$ is a topological Riesz space and $\Lambda$ is proper. Then $\mathscr A_++\ker(\pi)$ is closed if, and only if, $\Lambda$ is l.s.c.\ and every $X\in\dom(\Lambda)$ admits an optimal payoff.
\end{proposition}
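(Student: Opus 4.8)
The plan is to exploit the representation $\Lambda(X) = \inf\{\pi(Z) \mid Z\in\CM,\ X - Z\in\mathscr A_+\}$ from Proposition~\ref{prop:operational}(2), together with the separation $\CM = \R U \oplus \ker(\pi)$ for a fixed $U\in\CM$ with $\pi(U)=1$, which holds because $\pi$ is linear and finite-valued (Proposition~\ref{prop:operational}(1), using properness). The key auxiliary set is $\mathscr A_+ + \ker(\pi)$; I would first record that because $\Lambda$ is $\CM$-additive and $\Lambda|_{\mathscr A_+}\le 0$, one has $\mathscr A_+ + \ker(\pi)\subseteq \CL_0(\Lambda)$, and more precisely $X\in\mathscr A_+ + \ker(\pi)$ iff $X$ admits an optimal payoff lying in $\ker(\pi)$, i.e. iff $\Lambda(X)\le 0$ \emph{and} this value is attained. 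The bridge between the level set $\CL_c(\Lambda)$ and $\mathscr A_+ + \ker(\pi)$ is the identity $\CL_c(\Lambda) \supseteq \mathscr A_+ + \ker(\pi) + cU$ always, with equality precisely when optimal payoffs exist for all $X$ with $\Lambda(X)\le c$; this translation is what makes both directions go through.

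For the direction ``$\mathscr A_+ + \ker(\pi)$ closed $\Rightarrow$ $\Lambda$ l.s.c.\ and optimal payoffs exist'': I would show $\CL_c(\Lambda) = \mathscr A_+ + \ker(\pi) + cU$ for every $c\in\R$. The inclusion $\supseteq$ is immediate from $\CM$-additivity and $\Lambda|_{\mathscr A_+}\le 0$. For $\subseteq$, take $X$ with $\Lambda(X)\le c$; using the infimal representation pick securities $Z_k\in\CM$ with $X - Z_k\in\mathscr A_+$ and $\pi(Z_k)\downarrow\Lambda(X)$. Writing $Z_k = \pi(Z_k)U + N_k$ with $N_k\in\ker(\pi)$, we get $X - \pi(Z_k)U = (X - Z_k) + N_k \in \mathscr A_+ + \ker(\pi)$. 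Since $\pi(Z_k)U\to \Lambda(X)U$ (finite-dimensional, hence $\tau$-convergent), closedness of $\mathscr A_+ + \ker(\pi)$ gives $X - \Lambda(X)U\in\mathscr A_+ + \ker(\pi)$, i.e. $X\in \mathscr A_+ + \ker(\pi) + \Lambda(X)U \subseteq \mathscr A_+ + \ker(\pi) + cU$. This simultaneously proves $\CL_c(\Lambda)$ is closed (a translate of a closed set), hence $\Lambda$ is l.s.c., and that $X - \Lambda(X)U \in \mathscr A_+ + \ker(\pi)$ decomposes as $Y + N$ with $Y\in\mathscr A_+$, $N\in\ker(\pi)$, so that $Z^X := \Lambda(X)U - N \in\CM$ satisfies $X - Z^X = Y\in\mathscr A_+$ and $\pi(Z^X) = \Lambda(X)$ — an optimal payoff. (The case $\dom(\Lambda)=\emptyset$ is excluded since $\Lambda$ is proper, and $\Lambda(X)=-\infty$ would need separate attention, but properness rules out $-\infty$ as well.)

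For the converse, assume $\Lambda$ is l.s.c.\ and every $X\in\dom(\Lambda)$ has an optimal payoff; I claim $\mathscr A_+ + \ker(\pi) = \CL_0(\Lambda)$, which is $\tau$-closed by l.s.c. The inclusion $\subseteq$ is the easy one above. For $\supseteq$, if $\Lambda(X)\le 0$ then $X\in\dom(\Lambda)$ admits an optimal payoff $Z^X$ with $X - Z^X\in\mathscr A_+$ and $\pi(Z^X) = \Lambda(X)\le 0$; writing $Z^X = \pi(Z^X)U + N$ with $N\in\ker(\pi)$ gives $X = (X - Z^X) + N + \pi(Z^X)U$, and since $\pi(Z^X)U\in -\CX_+$ (as $\pi(Z^X)\le 0$ and $U\in\CX_{++}$) while $\mathscr A_+$ is monotone, $(X - Z^X) + \pi(Z^X)U\in\mathscr A_+$, whence $X\in\mathscr A_+ + \ker(\pi)$. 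This closes the equivalence.

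The main obstacle I anticipate is bookkeeping around the possibility $\Lambda(X) = -\infty$ and making sure properness of $\Lambda$ is invoked exactly where needed (it guarantees $\pi$ is real-valued and linear via Proposition~\ref{prop:operational}(1), and that $\Lambda$ never takes value $-\infty$, so that ``$\Lambda(X)U$'' is a legitimate element of $\CX$); a secondary subtlety is that the argument genuinely uses finite-dimensionality of $\CM$ only through the continuity of $r\mapsto rU$, which is automatic, so no extra hypothesis is needed there. Everything else is the routine translation between the three objects $\CL_c(\Lambda)$, $\mathscr A_+ + \ker(\pi)$, and the set of $X$ admitting optimal payoffs.
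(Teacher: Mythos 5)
Your proof is correct and follows essentially the same route as the paper: both directions hinge on the identity $\mathcal L_c(\Lambda)=cU+\mathscr A_++\ker(\pi)$, obtained by approximating the infimum in the representation of $\Lambda$ and passing to the limit via closedness, and the converse uses the same decomposition $X=\bigl(Y+\pi(Z^X)U\bigr)+\bigl(Z^X-\pi(Z^X)U\bigr)$ together with monotonicity of $\mathscr A_+$. The only (cosmetic) difference is that the paper translates by $sU$ for $s\downarrow c$ whereas you translate by $\pi(Z_k)U$ directly; your bookkeeping of properness and of the case $\Lambda(X)=-\infty$ is also consistent with the paper's.
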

\begin{proof} Suppose first that $\mathscr A_++\ker(\pi)$ is closed. For lower semicontinuity, we have to establish that $\mathcal L_c(\Lambda)$ is closed for every $c\in\R$. To this end, let $U_i\in\CS_i\cap \CX_{++}$ such that $\price_i(U_i)>0$ and set $U:=\sum_{i=1}^nU_i$. Without loss of generality, we may assume $\pi(U)=1$. We will show that 
\begin{equation}\label{eq:nullniveau}\mathcal L_c(\Lambda)=cU+\mathscr A_++\ker(\pi),\end{equation}
which is closed whenever $\mathscr A_++\ker(\pi)$ is closed.
The right-hand set in \eqref{eq:nullniveau} is included in the left-hand set by $\CM$-additivity of $\Lambda$. For the converse inclusion, let $X\in\mathcal L_c(\Lambda)$. For every $s>c$, there is a $Z_s\in\CM$ such that $c\leq \pi(Z_s)\leq s$ and $X-Z_s\in\mathscr A_+$. Consider the decomposition 
$$X-sU=X-Z_s+(\pi(Z_s)-s)U+Z_s-\pi(Z_s)U.$$
As $X-Z_s+(\pi(Z_s)-s)U\peq X-Z_s\in\mathscr A_+$ and $Z_s-\pi(Z_s)U\in\ker(\pi)$, monotonicity of $\mathscr A_+$ shows $X-sU\in\mathscr A_++\ker(\pi)$. Thus, $X-cU=\lim_{s\downarrow c}X-sU\in\cl_\tau(\augm)=\augm$, and \eqref{eq:nullniveau} is proved. Setting $c=0$ in \eqref{eq:nullniveau} shows $\mathcal L_0(\Lambda)=\augm$. Hence, $X-\Lambda(X)U\in\augm$ for all $X\in\dom(\Lambda)$, and for a suitable $N\in\ker(\pi)$ depending on $X$ we have $X-\Lambda(X)U+N\in\mathscr A_+$ and $\pi(\Lambda(X)U-N)=\Lambda(X)$. Therefore, an optimal payoff for $X$ is given by $\Lambda(X)U-N\in\CM$.

Assume now that $\Lambda$ is l.s.c.\ and that every $X\in\dom(\Lambda)$ allows for an optimal payoff. Let $(X_i)_{i\in I}$ be a net in $\mathscr A_++\ker(\pi)$ converging to $X\in\CX$. Then $\Lambda(X)\leq 0$ by lower semicontinuity of $\Lambda$. Let $Z\in\mathcal M$ be an optimal payoff for $X$, so that $\pi(Z)=\Lambda(X)\leq 0$. For $U$ as above and $Y:=X-Z\in\mathscr A_+$ we obtain $Y+\pi(Z)U\in\mathscr A_+$ by monotonicity of $\mathscr A_+$. Also $Z-\pi(Z)U\in\ker(\pi)$. Thus $X=\left(Y+\pi(Z)U\right)+\left(Z-\pi(Z)U\right)\in\mathscr A_++\ker(\pi).$
\end{proof}

Proposition~\ref{prop:exact2} is related to \cite[Proposition 4.1]{Baes}. Together with Proposition \ref{prop:exact1}, it is a powerful sufficient condition for the existence of Pareto optima which we shall apply in Sections \ref{sec:polyagent} and \ref{sec:lawinvacc}. The only non-trivial steps will be to verify the properness of $\Lambda$ and closedness of $\augm$.

\subsection{Existence of equilibria}

We proceed with the discussion of equilibria in the very general case when market losses are modelled by a \textit{Fr\'echet lattice} $(\CX,\peq,\tau)$. As this notion is ambiguous in the literature, we emphasise that a Fr\'echet lattice is a locally convex-solid topological Riesz space whose topology is completely metrisable. 

In particular, \textit{Banach lattices} are Fr\'echet lattices. As a more general example, one may consider the Wiener space $C([0,\infty))$ of all continuous functions on the non-negative half-line with the pointwise oder $\leq$ and the topology $\tau_D$ arising from the metric
\[D(f,g):=\sum_{k=1}^\infty2^{-k}\frac{\max_{0\leq r\leq k}|f(r)-g(r)|}{1+\max_{0\leq r\leq k}|f(r)-g(r)|},\quad f,g\in C([0,\infty)).\]
Clearly, $(C([0,\infty)),\leq,\tau_D)$ is not a Banach lattice, but a Fr\'echet lattice. Its choice as model space is justified if the primitives in question are continuous trajectories of, e.g., the net value of some good over time. 

For the following main meta theorem proving the existence of equilibria in case the model space is a Fr\'echet lattice, recall the definition of the jointly accepted securities, $\check{\CS}:=\bigcap_{i=1}^n\CS_i$. Moreover, we set here and in the following $\tn{int}\,\dom(\Lambda)$ to be the $\tau$-interior of the effective domain of the risk sharing functional $\Lambda$. Given a proper function $f:\CX\to(-\infty,\infty]$, its \textsc{dual conjugate} is the function $f^*:\CX^*\to(-\infty,\infty]$ defined by $f^*(\phi)=\sup_{X\in\CX}\phi(X)-f(X)$. Given $X\in\dom(f)$, $\phi\in\CX^*$ is a \textsc{subgradient} of $f$ at $X$ if $f(X)=\phi(X)-f^*(\phi)$. 

\begin{proposition}\label{prop:existenceequilibria} 
Suppose $\CX$ is a Fr\'echet lattice and that $\Lambda$ is l.s.c. and proper. Moreover, let \tn{(NR)} be satisfied, i.e.\ there is a $\tilde Z\in\check{\CS}$ with $\pi(\tilde Z)\neq 0$. If $\mbf W\in \prod_{i=1}^n\CX_i$ is such that $W:=W_1+...+W_n\in\tn{int}\,\dom(\Lambda)$ and there exists a Pareto optimal allocation of $W$, there is an equilibrium $(\mbf X,\phi)$ of $\mbf W$. 
\end{proposition}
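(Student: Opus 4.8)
The plan is to establish the existence of an equilibrium price functional $\phi$ by separating a suitable convex set from a point, and then to verify that the given Pareto optimal allocation $\mbf X$ together with this $\phi$ constitutes an equilibrium. First I would fix a Pareto optimal allocation $\mbf X \in \alloc_W$, whose existence is assumed, so that by Proposition~\ref{prop:equivalence}(1) it satisfies $\sum_{i=1}^n \Rhoi(X_i) = \Lambda(W)$. The natural candidate for the equilibrium price is a subgradient $\phi \in \CX^*$ of the risk sharing functional $\Lambda$ at $W$: since $\Lambda$ is proper, convex, l.s.c., and $W \in \tn{int}\,\dom(\Lambda)$, the subdifferential of $\Lambda$ at $W$ is non-empty by standard convex duality in locally convex spaces (the Fréchet lattice structure guarantees the Hahn--Banach separation works and that $\Lambda$ is continuous on the interior of its domain). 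So I would pick such a $\phi$, giving $\Lambda(W) = \phi(W) - \Lambda^*(\phi)$ and $\Lambda(Y) \geq \phi(Y) - \Lambda^*(\phi)$ for all $Y \in \CX$.

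\textbf{Checking the equilibrium axioms.} Next I would verify the four conditions in Definition~\ref{def:Pareto}(2). Positivity of $\phi$: since $\Lambda$ is $\peq$-monotone, any subgradient is positive. The compatibility $\phi|_{\CS_i} = \price_i$: using $\CM$-additivity of $\Lambda$ (Proposition~\ref{prop:operational}(2)) one gets $\Lambda(W + Z) = \Lambda(W) + \pi(Z)$ for $Z \in \CM$, and feeding $W \pm Z$ into the subgradient inequality forces $\phi(Z) = \pi(Z)$ for all $Z \in \CM$; since $\pi|_{\CS_i} = \price_i$ by Proposition~\ref{prop:operational}(1), this yields $\phi|_{\CS_i} = \price_i$. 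Here is where (NR) enters: it guarantees $\pi$ is not identically zero on $\CM$, so $\phi$ restricted to $\CM$ is a genuine nonzero functional — this is needed to make the budget constraints bite and to exclude the degenerate ``$\phi \equiv 0$'' situation (though positivity plus $\phi|_\CM = \pi \neq 0$ already rules that out). The crux is the individual optimality condition: I must show $\Rhoi(X_i) = \inf\{\Rhoi(Y) \mid Y \in \CX_i,\ \phi(-Y) \leq \phi(-W_i)\}$ and that the budget constraints $\phi(-X_i) \leq \phi(-W_i)$ hold. The standard argument: from $\sum_i \Rhoi(X_i) = \Lambda(W) = \phi(W) - \Lambda^*(\phi)$ and the Fenchel inequality $\Rhoi(X_i) \geq \phi(X_i) - \Rhoi^*(\phi|_{\CX_i})$ together with the fact that $\Lambda^*(\phi) = \sum_i \Rhoi^*(\phi|_{\CX_i})$ (the conjugate of an infimal convolution is the sum of conjugates; see Lemma~\ref{lem:properties} in the appendix), one deduces that $\phi|_{\CX_i}$ is a subgradient of $\Rhoi$ at $X_i$ for each $i$, i.e. $\Rhoi(X_i) = \phi(X_i) - \Rhoi^*(\phi|_{\CX_i})$. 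This immediately gives $\Rhoi(Y) \geq \phi(Y) - \Rhoi^*(\phi|_{\CX_i}) \geq \phi(Y) + \Rhoi(X_i) - \phi(X_i)$ for all $Y \in \CX_i$, so any $Y$ with $\phi(-Y) \leq \phi(-W_i)$, combined with $\phi(X_i) = \phi(W_i)$ (which I derive from the budget constraints holding with equality, to be justified), yields $\Rhoi(Y) \geq \Rhoi(X_i)$, establishing individual optimality.

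\textbf{The budget constraints.} It remains to arrange $\phi(-X_i) \leq \phi(-W_i)$, equivalently $\phi(X_i) \geq \phi(W_i)$, for every $i$. This does not follow automatically from an arbitrary Pareto optimal $\mbf X$ — one typically must translate $\mbf X$ by securities in the common security space to balance the budgets. Concretely, using the connectedness of the graph $G$ in ($\star$) and the nonzero jointly accepted security $\tilde Z \in \check{\CS}$ with $\pi(\tilde Z) = \phi(\tilde Z) \neq 0$, I would replace $\mbf X$ by $\mbf X' = \mbf X + (t_i \tilde Z)_{i \in [n]}$ with $\sum_i t_i = 0$, chosen so that $\phi(X_i') = \phi(W_i)$ for all $i$; this is possible because $\sum_i \phi(X_i) = \phi(W) = \sum_i \phi(W_i)$, so the discrepancies $\phi(W_i) - \phi(X_i)$ sum to zero and can be absorbed by such a transfer (divide by $\phi(\tilde Z)$). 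By $\CS_i$-additivity of each $\Rhoi$ and $\phi(X_i') - \phi(X_i) = t_i \phi(\tilde Z) = t_i \price_i(\tilde Z)$ matched with $\Rhoi(X_i') = \Rhoi(X_i) + t_i \price_i(\tilde Z)$, the new allocation $\mbf X'$ is still Pareto optimal (the total $\sum_i \Rhoi(X_i')$ is unchanged since $\sum_i t_i \price_i(\tilde Z) = \phi(\tilde Z) \sum_i t_i = 0$, using $\price_i(\tilde Z) = \phi(\tilde Z)$ for all $i$). Then $\phi(-X_i') = \phi(-W_i)$ holds with equality, the budget constraints are satisfied, and $(\mbf X', \phi)$ is the desired equilibrium. \textbf{The main obstacle} I anticipate is precisely this budget-balancing step: ensuring that a single jointly accepted security $\tilde Z$ suffices to simultaneously correct all $n$ budget discrepancies while preserving Pareto optimality — this is where (NR) (giving a nonzero $\pi(\tilde Z)$) and ($\star$) (forcing $\price_i(\tilde Z) = \price_j(\tilde Z)$ so that the transfers are ``value-neutral'' for the aggregate) are both essential, and the bookkeeping must be done carefully. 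A secondary technical point is justifying the existence of the subgradient and the conjugate-of-infimal-convolution identity in the Fréchet (non-Banach) setting, but these are covered by the appendix results and standard locally convex duality given $W \in \tn{int}\,\dom(\Lambda)$.
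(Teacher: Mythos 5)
Your proposal is correct and follows essentially the same route as the paper: take a subgradient $\phi$ of $\Lambda$ at $W$ (available since a Fr\'echet lattice is barrelled and $W\in\tn{int}\,\dom(\Lambda)$), use $\Lambda^*(\phi)=\sum_i\Rhoi^*(\phi|_{\CX_i})$ to split the Fenchel equality into individual subgradient identities, and then rebalance the budgets by transferring multiples of the jointly accepted security $\tilde Z$ from (NR), which preserves Pareto optimality because the transfer coefficients sum to zero. The only cosmetic differences are the citation (the conjugate-of-inf-convolution identity is Lemma~\ref{lem:sumdual}, not Lemma~\ref{lem:properties}) and your derivation of $\phi|_{\CM}=\pi$ via $\CM$-additivity rather than via finiteness of $\Rhoi^*(\phi|_{\CX_i})$; both work.
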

\begin{proof}
Fix $\mbf W\in\prod_{i=1}^nW_i$ such that $W:=W_1+...+W_n\in\tn{int}\,\dom(\Lambda)$. As a Fr\'echet lattice is a barrelled space, $\Lambda$ is subdifferentiable at $W$ by \cite[Corollary 2.5 \& Proposition 5.2]{eketem}, i.e. there is a subgradient $\phi\in\CX^*$ of $\Lambda$ at $W$ satisfying $\Lambda(W)=\phi(W)-\Lambda^*(\phi)$. As $\Lambda$ is monotone, $\phi\in\CX_+^*$ , and by Lemma~\ref{lem:sumdual}
\begin{equation}\label{eq:dualSigma}\Lambda^*(\phi)=\sum_{i=1}^n \Rhoi^*(\phi|_{\CX_i}), \quad \phi\in\CX^*.\end{equation}
Let $\mbf Y$ be any Pareto optimal allocation of $W$. As $\Lambda(W)=\sum_{i=1}^n\Rhoi(Y_i)\in \R$, $\Lambda(W)$, $\Lambda^\ast(\phi)$ and $\Rhoi^\ast(\phi|_{\CX_i})$, $i\in [n]$, are all real numbers. Also, as 
$$\infty>\Rhoi^*(\phi|_{\CX_i})\geq \sup_{Z\in\CS_i}\phi(Y_i+Z)-\Rhoi(Y_i+Z)=\phi(Y_i)-\Rhoi(Y_i)+\sup_{Z\in\CS_i}\phi(Z)-\price_i(Z) ,$$
$\phi|_{\CS_i}=\price_i$, $i\in [n]$, has to hold, which in turn implies $\phi|_{\CM}=\pi$ by linearity of $\pi$ and Proposition~\ref{prop:operational}.
By (NR), we may fix $\tilde Z\in\check{\CS}$ such that $\pi(\tilde Z)=1=\price_i(\tilde Z), i\in [n]$. Let 
\[X_i:=Y_i+\phi(W_i-Y_i)\tilde Z,\quad i\in[n].\]
Note that $\mbf X\in\alloc_W$ holds because $\sum_{i=1}^nW_i=\sum_{i=1}^nY_i=W$ and thus $\sum_{i=1}^nX_i=W$. Moreover, $\mbf X$ is Pareto optimal: 
\[\sum_{i=1}^n\Rhoi(X_i)=\sum_{i=1}^n\Rhoi(Y_i)+\phi(W_i-Y_i)\pi(\tilde Z)=\sum_{i=1}^n\Rhoi(Y_i)+\phi(W-W)= \sum_{i=1}^n\Rhoi(Y_i)=\Lambda(W).\]
Also, as $\phi(X_i)-\Rhoi^*(\phi|_{\CX_i})\leq \Rhoi(X_i)$ for all $i\in [n]$ and
\[\sum_{i=1}^n\Rhoi(X_i)=\Lambda(W)=\phi(W)-\Lambda^*(W)=\sum_{i=1}^n\phi(X_i)-\Rhoi^*(\phi|_{\CX_i}),\]
$\Rhoi(X_i)=\phi(X_i)-\Rhoi^*(\phi|_{\CX_i})$ has to hold for all $i\in [n]$.
We claim that $(\mbf X,\phi)$ is an equilibrium. Indeed, as $\phi(-X_i)=\phi(-W_i)$ holds for all $i\in[n]$, the budget constraints are satisfied. Moreover, if $i\in[n]$ and $Y\in\CX_i$ satisfies $\phi(-Y)\leq \phi(-W_i)=\phi(-X_i)$, we obtain
$$\Rhoi(Y)\geq \phi(Y)-\Rhoi^*(\phi|_{\CX_i})\geq \phi(X_i)-\Rhoi^*(\phi|_{\CX_i})=\Rhoi(X_i).$$

\end{proof}


\section{Polyhedral agent systems}\label{sec:polyagent}

In this section we assume that the agent system $(\mathcal R_1,...,\mathcal R_n)$ operates on a market space $\CX$ given by a Fr\'echet lattice. Each agent $i\in[n]$ operates on a closed ideal $\CX_i\subset\CX$, and $\CX_1+...+\CX_n=\CX$. The assumption of closedness implies that $(\CX_i,\peq,\tau\cap\CX_i)$ is a Fr\'echet lattice in its own right. We will assume that each acceptance set $\acc_i\subset\CX_i$ is \textit{polyhedral}.

\begin{definition}\label{def:poly}
Let $(\CX,\peq,\tau)$ be a Fr\'echet lattice. A convex set $\CC\subset\CX$ is called \textsc{polyhedral} if there is a finite set $\mathcal J\subset\CX^*$ and $\beta\in\R^{\mathcal J}$ such that 
\begin{center}$\acc=\{X\in\CX\mid \forall\,\phi\in\mathcal J:~\phi(X)\leq\beta_\phi\}.$\end{center}
An agent system $(\mathcal R_1,...,\mathcal R_n)$ is \textsc{polyhedral} if it has properties (NSA) and (SUP), and  each acceptance set $\acc_i$, $i\in[n]$, is polyhedral.
\end{definition}

Polyhedrality of a set $\CC$ is equivalent to the existence of some $m\in\N$, a continuous linear operator $T:\CX\to\R^m$, and $\beta\in\R^m$ such that $\CC=\{X\in\CX\mid T(X)\leq\beta\}$, where the defining inequality is understood coordinatewise. In case of an acceptance set, the representing linear operator can be chosen to be positive. Risk measures with polyhedral acceptance sets play a prominent role in Baes et al. \cite{Baes}, where the set of optimal payoffs for a single such risk measure is studied.  

\begin{example}\label{ex:poly1}
For the sake of simplicity, we consider a finite-dimensional setting. Let $A$, $B$ and $C$ be three finite and disjoint sets of scenarios for the future state of the economy, either suggested by the internal risk management or a regulatory authority. We set $\Omega_1=A\cup B$, $\Omega_2:=B\cup C$, which are the scenarios relevant for agent $i\in\{1,2\}$. $B$ can be seen as a non-trivial set of jointly relevant scenarios, and $\Omega:=A\cup B\cup C$ is the set of scenarios that are relevant to the whole system. However, as the relevant scenarios for agent $i$ are $\omega\in\Omega_i$, it is both individually and systemically rational of her to demand that her stake in the sharing of a market loss is \textit{neutral} in scenarios $\omega\in\Omega\backslash\Omega_i$. The canonical choice of the model spaces is in consequence 
$$\CX:=\{X:\Omega\to\R\},\quad \CX_i:=\{X\in\CX\mid X|_{\Omega\backslash\Omega_i}\equiv 0\},~i=1,2.$$
For illustration, we assume individual acceptability is defined in terms of scenariowise loss constraints: let $K_1\in\CX_1$ and $K_2\in\CX_2$ be two arbitrary, but fixed vectors of individual loss tolerances. Consider the risk measurement regimes 
\begin{center}$\acc_1:=\{X\in\CX_1\mid X\leq K_1\},\quad \CS_1=\tn{span}\{\ind_A,\ind_B\},\quad\price_1(x\ind_A+y\ind_B)=x+y$,\\
$\acc_2:=\{X\in\CX_2\mid X\leq K_2\},\quad\CS_2=\tn{span}\{\ind_B,\ind_C\},\quad \price_2(x\ind_B+y\ind_C)=x+y.$\end{center}
The (Arrow-Debreu type) securities $\ind_A$, $\ind_B$ and $\ind_C$, respectively, pay off a unit amount in case one of the scenarios of $A$, $B$, and $C$, respectively, realises. The objective is not to exceed the loss tolerances $K_1$ and $K_2$ at minimal cost. 
\end{example}

\subsection{Existence of optimal payoffs, Pareto optima and equilibria}

We turn to the existence of optimal allocations in the setting introduced above. By definition, a polyhedral agent system satisfies (NSA) and (SUP). The resulting risk sharing functional $\Lambda$ is proper by Proposition \ref{prop:operational}(3). By Propositions \ref{prop:exact1} and \ref{prop:exact2}, the existence of Pareto optimal allocations would be proved if closedness of $\augm$ can be established. 

For the following lemma, recall that a Fr\'echet space is a completely metrisable locally convex topological vector space. In particular, every Fr\'echet lattice is a Fr\'echet space. 
\begin{lemma}\label{lem:linpoly}Let $\CX$ be a Fr\'echet space.
\begin{enumerate}
\item[\tn{(1)}] A subset $\CC\subset \CX$ is a polyhedron if, and only if, there are closed subspaces $\CX^1,\CX^2\subset\CX$ such that $\CX=\CX^1\oplus\CX^2$, dim$(\CX^2)<\infty$, and $\CC=\CX^1+\CC'$ for a polyhedron $\CC'\subset\CX^2$.
\item[\tn{(2)}] Suppose $\CY$ and $\CX$ are Fr\'echet spaces, $\CC\subset\CY$ is polyhedral, and $T:\CY\to\CX$ is a surjective linear operator. Then $T(\CC)$ is polyhedral in $\CX$. 
\end{enumerate}
\end{lemma}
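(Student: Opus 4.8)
The plan is to reduce both parts to the finite-dimensional theory of polyhedra via a suitable splitting of the ambient Fréchet space. For part (1), the ``if'' direction is immediate: if $\CX = \CX^1\oplus\CX^2$ with $\CX^2$ finite-dimensional and $\CC = \CX^1 + \CC'$ with $\CC'\subset\CX^2$ cut out by finitely many affine inequalities, then pulling these inequalities back along the continuous projection $\CX\to\CX^2$ exhibits $\CC$ as a polyhedron in $\CX$. For the ``only if'' direction, suppose $\CC=\{X\mid \phi(X)\le\beta_\phi,\ \phi\in\mathcal J\}$ with $\mathcal J\subset\CX^*$ finite. Let $\CX^1:=\bigcap_{\phi\in\mathcal J}\ker\phi$, a closed subspace of finite codimension $d\le|\mathcal J|$. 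The key input I would invoke is that a closed finite-codimensional subspace of a Fréchet space is complemented by a finite-dimensional closed subspace $\CX^2$ (this follows from Hahn--Banach: choose $\phi_1,\dots,\phi_d$ spanning the annihilator, pick a biorthogonal system $e_1,\dots,e_d$, and set $\CX^2=\tn{span}\{e_1,\dots,e_d\}$; the map $X\mapsto X-\sum_j\phi_j(X)e_j$ is a continuous projection onto $\CX^1$). Since every $\phi\in\mathcal J$ vanishes on $\CX^1$, the constraints only involve the $\CX^2$-component, so $\CC=\CX^1+\CC'$ with $\CC':=\CC\cap\CX^2$ a genuine polyhedron in the finite-dimensional space $\CX^2$.

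For part (2), I would combine part (1) with the classical fact (Weyl--Minkowski) that the image of a finite-dimensional polyhedron under a linear map is again a polyhedron. Write $\CC=\CY^1+\CC'$ as in (1), with $\CY=\CY^1\oplus\CY^2$, $\dim\CY^2<\infty$, $\CC'\subset\CY^2$. Then $T(\CC)=T(\CY^1)+T(\CC')$. The set $T(\CC')$ is the linear image of a finite-dimensional polyhedron, hence a finite-dimensional polyhedron, say living inside the finite-dimensional subspace $T(\CY^2)\subset\CX$. It remains to handle $T(\CY^1)$: since $T$ is surjective and $\CY=\CY^1+\CY^2$ with $\CY^2$ finite-dimensional, $T(\CY^1)$ is a closed subspace of $\CX$ of finite codimension (it contains the image of a finite-codimensional subspace; closedness follows because it is the complement of the finite-dimensional $T(\CY^2)$ modulo $T(\CY^1)$, or more directly because a finite-codimensional subspace containing a closed subspace whose quotient is finite-dimensional is closed). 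Applying the ``if'' direction of (1) with $\CX = T(\CY^1)\oplus E$ for a suitable finite-dimensional complement $E\supseteq T(\CC')$ then exhibits $T(\CC)$ as a polyhedron.

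The main obstacle I anticipate is purely the Fréchet-space bookkeeping in part (2): one must argue that $T(\CY^1)$ is \emph{closed} (not merely a subspace) so that the decomposition $\CX = T(\CY^1)\oplus E$ makes sense with both summands closed. The cleanest route is to note that $T$ descends to a surjective linear map $\bar T:\CY/\CY^1\to\CX/T(\CY^1)$ between, respectively, a finite-dimensional space and a quotient; hence $\CX/T(\CY^1)$ is finite-dimensional, so $T(\CY^1)$ has finite codimension, and a finite-codimensional subspace of a Fréchet space that arises as the kernel of finitely many continuous functionals — which one checks $T(\CY^1)$ does, by pulling back a basis of functionals on the finite-dimensional quotient — is closed. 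Once closedness is in hand, the rest is the finite-dimensional Weyl--Minkowski theorem plus the elementary ``if'' direction of part (1), both of which are routine.
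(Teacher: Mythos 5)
Part (1) of your proposal is correct and essentially self-contained: constructing the biorthogonal system $e_1,\dots,e_d$ and the continuous projection $X\mapsto X-\sum_j\phi_j(X)e_j$ gives the decomposition $\CC=\CX^1+\CC'$ directly, more explicitly than the paper's appeal to Zheng's argument plus the Closed Graph Theorem.

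Part (2), however, has a genuine gap at exactly the point you flag as the main obstacle: the closedness of $T(\CY^1)$. Your justification --- pull back a basis of functionals on the finite-dimensional quotient $\CX/T(\CY^1)$ and write $T(\CY^1)$ as the intersection of their kernels --- is circular. A linear functional on $\CX$ vanishing on $T(\CY^1)$ is continuous precisely when its kernel is closed; equivalently, the quotient $\CX/T(\CY^1)$ is a Hausdorff (hence honestly finite-dimensional) topological vector space only if $T(\CY^1)$ is already closed, and if $T(\CY^1)$ were a proper dense subspace the quotient topology would be indiscrete and there would be no nonzero continuous functionals to pull back. Finite codimension alone never yields closedness (kernels of discontinuous functionals are dense hyperplanes), and your alternative phrasing (``contains a closed subspace with finite-dimensional quotient'') does not exhibit such a closed subspace. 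What actually closes the gap is the Open Mapping Theorem --- this is where the Fr\'echet hypothesis and the (implicit) continuity of $T$ enter: $T$ is open, so $T(\CY^1)$ is closed if and only if its saturation $T^{-1}(T(\CY^1))=\CY^1+\ker(T)$ is closed, and the latter holds because any subspace containing the closed finite-codimensional $\CY^1$ equals $\CY^1$ plus a finite-dimensional subspace of $\CY^2$. Note that without continuity of $T$ the statement is simply false (a discontinuous linear bijection carries a closed hyperplane to a dense one), so some such completeness input is unavoidable.

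A second, smaller defect: a complement $E$ of $T(\CY^1)$ with $E\supseteq T(\CC')$ need not exist, since $T(\CY^2)$ may intersect $T(\CY^1)$ nontrivially. The repair is routine: with $P_E$ the projection onto $E$ along $T(\CY^1)$, each $c\in T(\CC')$ differs from $P_E(c)$ by an element of the subspace $T(\CY^1)$, whence $T(\CY^1)+T(\CC')=T(\CY^1)+P_E(T(\CC'))$, and $P_E(T(\CC'))$ is again a finite-dimensional polyhedron; now apply the ``if'' direction of (1). With these two repairs your route goes through; it differs from the paper's, which avoids discussing closedness of $T(\CY^1)$ by instead decomposing along a complement of the finite-dimensional space $T(\CY^2)$ and projecting $T(\CY^1)$ onto the two factors.
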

\begin{proof} 
(1) Combine the proof of \cite[Corollary 2.1]{Zheng} with the Closed Graph Theorem \cite[Theorem 5]{Husain}.

(2) By (1), there are two closed subspaces $\CY^1,\CY^2\subset\CY$ such that $\CY=\CY^1\oplus\CY^2$, dim$(\CY^2)<\infty$, and $\CC=\CY^1+\CC'$ for a polyhedron $\CC'$ in the finite-dimensional subspace $\CY^2$. Define $\CX^2:=T(\CY^2)$, which is finite-dimensional. Every finite-dimensional subspace of a Fr\'echet space is complemented by a closed subspace. Thus $\CX=\CX^1\oplus\CX^2$ for a closed subspace $\CX^1$. Clearly, $T(\CC')\subset\CX^2$ is a polyhedron. Moreover, denoting by $\gamma_i:\CX\to\CX^i$ the projection in $\CX$ onto the linear subspaces $\CX^i$, surjectivity of $T$ implies $\CX^1=\gamma_1(\CX)=\gamma_1(T(\CY^1))+\gamma_1(T(\CY^2)=\gamma_1(T(\CY^1))$. Moreover, 
$$T(\CC)=T(\CY^1)+T(\CC')=\CX^1+\gamma_2(T(\CY^1))+T(\CC').$$
$\gamma_2(T(\CY^1))$ is as subspace of the finite-dimensional space $\CX^2$ a polyhedron, and so is the sum $\gamma_2(T(\CY^1))+T(\CC')$ of two finite-dimensional polyhedra. Conclude with (1). \end{proof}

\begin{theorem}\label{A+ker poly}
Let $(\mathcal R_1,...,\mathcal R_n)$ be a polyhedral agent system on a Fr\'echet lattice $\CX$. Then the set $\augm$ is proper, polyhedral, and closed, $\Lambda$ is ls.c., and every $X\in\dom(\Lambda)$ admits an optimal payoff $Z^X\in\CM$, and can thus be allocated Pareto optimally as in Proposition~\ref{prop:exact1}. 
\end{theorem}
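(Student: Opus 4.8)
The goal is to show that $\mathscr{A}_+ + \ker(\pi)$ is polyhedral and closed; once this is done, Proposition~\ref{prop:exact2} gives lower semicontinuity of $\Lambda$ together with the existence of optimal payoffs, and Proposition~\ref{prop:exact1} yields the Pareto optimal allocations. So the entire proof reduces to a statement about polyhedra in Fréchet spaces, and Lemma~\ref{lem:linpoly} is the tool. Recall that a polyhedral agent system satisfies (NSA) and (SUP), so by Proposition~\ref{prop:operational}(3) $\Lambda$ is proper; this is what licenses the application of Propositions~\ref{prop:exact1} and \ref{prop:exact2}. It also means $\mathscr{A}_+$ is proper (it does not contain an arbitrarily negative translate of the whole space, by (SUP)(ii)), hence $\mathscr{A}_+ + \ker(\pi)$ is proper too, being contained in $\mathcal{L}_0(\Lambda) \neq \CX$.

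\textbf{Step 1: realise the sum as the image of a polyhedron under a continuous surjection.} Form the product Fréchet space $\CY := \prod_{i=1}^n \CX_i$ (a finite product of Fréchet lattices, hence a Fréchet space). Inside $\CY$ consider the set $\mathscr{C} := \prod_{i=1}^n \acc_i \times \text{(a representation of } \ker(\pi))$ — more precisely, I would work with $\mathscr{C} := \big\{(\mbf{Y},\mbf{N}) \in \prod_i \acc_i \times \prod_i \CS_i \ \big|\ \sum_i \price_i(N_i) = 0\big\}$ viewed inside the Fréchet space $\prod_i \CX_i \times \prod_i \CS_i$. Each $\acc_i$ is polyhedral in $\CX_i$ by hypothesis; each $\CS_i$ is finite-dimensional, hence closed, and the single linear constraint $\sum_i \price_i(N_i) = 0$ cuts out a polyhedron (indeed a closed hyperplane) in the finite-dimensional space $\prod_i \CS_i$. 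A finite product of polyhedra is a polyhedron (use Lemma~\ref{lem:linpoly}(1) coordinatewise, or argue directly from the definition since finitely many continuous linear inequalities remain finitely many), so $\mathscr{C}$ is polyhedral in $\CY' := \prod_i \CX_i \times \prod_i \CS_i$. Now define $T : \CY' \to \CX$ by $T(\mbf{Y},\mbf{N}) := \sum_{i=1}^n Y_i + \sum_{i=1}^n N_i$. This $T$ is continuous and linear, and it is surjective because already $\sum_i \CX_i = \CX$. By construction $T(\mathscr{C}) = \mathscr{A}_+ + \ker(\pi)$, since $\ker(\pi) = \{\sum_i N_i \mid \mbf{N} \in \prod_i \CS_i,\ \sum_i \price_i(N_i) = 0\}$ by Proposition~\ref{prop:operational}(1) (here (NSA) ensures $\pi(0)=0$ and $\pi|_{\CS_i}=\price_i$, so this description of $\ker(\pi)$ is valid).

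\textbf{Step 2: apply Lemma~\ref{lem:linpoly}(2) and conclude.} Since $T$ is a surjective continuous linear operator between Fréchet spaces and $\mathscr{C}$ is polyhedral, Lemma~\ref{lem:linpoly}(2) gives that $T(\mathscr{C}) = \mathscr{A}_+ + \ker(\pi)$ is polyhedral in $\CX$. A polyhedron is by definition an intersection of finitely many closed half-spaces (preimages of $(-\infty,\beta_\phi]$ under continuous functionals), hence closed. Properness of $\mathscr{A}_+ + \ker(\pi)$ was noted above. Feeding closedness into Proposition~\ref{prop:exact2}: $\Lambda$ is l.s.c. and every $X \in \dom(\Lambda)$ admits an optimal payoff $Z^X \in \CM$. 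Finally, Proposition~\ref{prop:exact1} turns the optimal payoff into a Pareto optimal allocation of $X$ of the stated form. This completes the proof.

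\textbf{Expected main obstacle.} The only genuinely delicate point is the surjectivity hypothesis in Lemma~\ref{lem:linpoly}(2): one must make sure the operator whose image we take is onto \emph{all} of $\CX$, not merely onto its range with the subspace topology — otherwise the finite-dimensional complementation argument in the lemma's proof breaks. This is why I build $T$ out of the full product $\prod_i \CX_i$ (which sums onto $\CX$ by the standing assumption $\CX_1 + \dots + \CX_n = \CX$) rather than trying to restrict to security subspaces prematurely. A secondary bookkeeping point is the clean identification of $\ker(\pi)$ as the image of the hyperplane $\{\sum_i \price_i(N_i)=0\}$ under summation; this relies on (NSA) via Proposition~\ref{prop:operational}(1), and without (NSA) the functional $\pi$ would be identically $-\infty$ and the whole framework collapses — but (NSA) is part of the definition of a polyhedral agent system, so there is nothing to check.
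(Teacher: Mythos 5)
Your proof is correct and follows essentially the same route as the paper: write $\mathscr A_++\ker(\pi)$ as the image of a polyhedron in a product Fr\'echet space under a continuous linear surjection, invoke Lemma~\ref{lem:linpoly}(2) for polyhedrality (hence closedness), and then conclude via Propositions~\ref{prop:exact2} and~\ref{prop:exact1}. The only cosmetic difference is that the paper takes $\ker(\pi)$ itself as the extra factor of the domain space, whereas you parametrise it as the zero-price hyperplane in $\prod_{i=1}^n\CS_i$; the two constructions are equivalent.
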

\begin{proof}
The set $\augm$ is proper by assumption (SUP). Moreover, it is polyhedral: consider the Fr\'echet space $\CY:=(\prod_{i=1}^n\CX_i)\times \ker(\pi)$.\footnote{ $\CY$ is not a Fr\'echet lattice, hence the necessity for the above formulation of Lemma \ref{lem:linpoly}.} By assumption, the set $\CC:=(\prod_{i=1}^n\acc_i)\times\ker(\pi)$ is polyhedral, and $T:\CY\to\CX$ defined by $T(X_1,...,X_n,N)=\sum_{i=1}^nX_i+N$ is surjective and linear. As $\CX$ is a Fr\'echet space, Lemma \ref{lem:linpoly}(2) yields the polyhedrality of $T(\CC)=\mathscr A_++\ker(\pi)$. As a polyhedron, it is automatically closed. Since $\Lambda$ is proper, it is l.s.c. and optimal payoffs exist for every $X\in\dom(\Lambda)$ by Proposition~\ref{prop:exact2}.\end{proof}

Theorem~\ref{A+ker poly} in conjunction with Proposition~\ref{prop:existenceequilibria} imply the existence of equilibria: 

\begin{corollary}\label{cor:polyequilibrium} 
If a polyhedral agent system $(\mathcal R_1,...,\mathcal R_n)$ on a Fr\'echet lattice $\CX$ satisfies \tn{(NR)}, for every $\mbf W\in\prod_{i=1}^n\CX_i$ such that $W_1+...+W_n\in\tn{int}\,\dom(\Lambda)$ there is an equilibrium $(\mbf X,\phi)$. 
\end{corollary}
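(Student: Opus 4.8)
The plan is to assemble the statement directly from the two results cited in the hint, checking only that their hypotheses are met. First I would observe that a polyhedral agent system satisfies (NSA) and (SUP) by definition, so Proposition~\ref{prop:operational}(3) guarantees that $\Lambda$ is proper. Then Theorem~\ref{A+ker poly} applies: it tells us that $\Lambda$ is l.s.c.\ and that every $W_1+\dots+W_n\in\dom(\Lambda)$ admits an optimal payoff, hence (by Proposition~\ref{prop:exact1}) $\Lambda$ is exact at such points, i.e.\ a Pareto optimal allocation of $W_1+\dots+W_n$ exists whenever this sum lies in $\dom(\Lambda)$. Since $\tn{int}\,\dom(\Lambda)\subset\dom(\Lambda)$, the Pareto optimality hypothesis of Proposition~\ref{prop:existenceequilibria} is automatically fulfilled for the aggregated losses under consideration.

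Next I would verify the remaining structural hypotheses of Proposition~\ref{prop:existenceequilibria}: $\CX$ is a Fr\'echet lattice by assumption in Section~\ref{sec:polyagent}; $\Lambda$ is l.s.c.\ and proper, as just established; and (NR) is assumed in the statement of the corollary. Thus for any $\mbf W\in\prod_{i=1}^n\CX_i$ with $W:=W_1+\dots+W_n\in\tn{int}\,\dom(\Lambda)$, Proposition~\ref{prop:existenceequilibria} produces an equilibrium $(\mbf X,\phi)$ of $\mbf W$, which is exactly the assertion.

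There is essentially no obstacle here; the corollary is a bookkeeping consequence of the two ingredients. The only point worth stating explicitly is the implication ``$W\in\tn{int}\,\dom(\Lambda)\Rightarrow$ there exists a Pareto optimal allocation of $W$'', which follows because $\tn{int}\,\dom(\Lambda)\subset\dom(\Lambda)=\sum_{i=1}^n\dom(\Rhoi)$ and Theorem~\ref{A+ker poly} (via Proposition~\ref{prop:exact1}) upgrades membership in $\dom(\Lambda)$ to exactness of $\Lambda$, equivalently existence of a Pareto optimum by Proposition~\ref{prop:equivalence}(1). Assembling these lines is the whole proof.

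\begin{proof}
By Definition~\ref{def:poly}, the polyhedral agent system $(\mathcal R_1,\dots,\mathcal R_n)$ satisfies \tn{(NSA)} and \tn{(SUP)}, so $\Lambda$ is proper by Proposition~\ref{prop:operational}(3). Theorem~\ref{A+ker poly} then yields that $\Lambda$ is l.s.c.\ and that every $X\in\dom(\Lambda)$ admits an optimal payoff; by Proposition~\ref{prop:exact1}, $\Lambda$ is exact at every such $X$, hence by Proposition~\ref{prop:equivalence}(1) every $X\in\dom(\Lambda)$ admits a Pareto optimal attainable allocation. Now let $\mbf W\in\prod_{i=1}^n\CX_i$ with $W:=W_1+\dots+W_n\in\tn{int}\,\dom(\Lambda)$. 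In particular $W\in\dom(\Lambda)$, so $W$ admits a Pareto optimal allocation. Since $\CX$ is a Fr\'echet lattice, $\Lambda$ is l.s.c.\ and proper, and \tn{(NR)} holds by hypothesis, Proposition~\ref{prop:existenceequilibria} provides an equilibrium $(\mbf X,\phi)$ of $\mbf W$.
\end{proof}
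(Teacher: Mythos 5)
Your proof is correct and follows exactly the route the paper intends: the paper gives no separate argument beyond the remark that Theorem~\ref{A+ker poly} combined with Proposition~\ref{prop:existenceequilibria} yields the corollary, and you have simply spelled out the hypothesis checks (properness via (NSA)/(SUP) and Proposition~\ref{prop:operational}(3), lower semicontinuity and existence of Pareto optima via Theorem~\ref{A+ker poly} and Propositions~\ref{prop:exact1} and~\ref{prop:equivalence}(1), and (NR) by assumption). Nothing is missing.
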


\subsection{Lower hemicontinuity of the Pareto optima correspondence}

In this section we consider the correspondence $\mathcal P$ mapping $X\in\dom(\Lambda)$ to its Pareto optimal allocations $\mbf X\in\alloc_X$. Invoking Proposition \ref{prop:equivalence}, we can represent 
\begin{equation}\label{eq:defPcal}\mathcal P(X)=\left\{\mbf X\in\alloc_X\Big| \Lambda(X)=\sum_{i=1}^n\Rhoi(X_i)\right\}.\end{equation}
For a brief summary of terminology concerning and properties of correspondences (or set-valued maps), we refer to Appendix \ref{appendixC}. The following theorem asserts that $\CP$ is lower hemicontinuous under mild conditions.

\begin{theorem}\label{thm:paretopoly}
Assume that for a polyhedral agent system the market space $\CX$ is finite-dimensional or $\CX_i=\CX$ for all $i\in[n]$. Then the correspondence $\mathcal P$ is lower hemicontinuous on $\dom(\Lambda)$, and admits a continuous selection on $\dom(\Lambda)$.
\end{theorem}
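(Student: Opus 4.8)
The plan is to reduce the statement to a known result on lower hemicontinuity of the solution correspondence of a parametric convex/polyhedral optimisation problem. First I would recast $\mathcal P$ as follows: for $X\in\dom(\Lambda)$ one has $\mathbf X\in\mathcal P(X)$ iff $\mathbf X\in\alloc_X$ and $\sum_{i=1}^n\Rhoi(X_i)\le\Lambda(X)$; since $\Lambda$ is the infimal convolution of the $\Rhoi$ this is an equality. The key structural input, available from Theorem~\ref{A+ker poly} and its proof, is that each $\acc_i$ is polyhedral and, via Lemma~\ref{lem:linpoly}, each $\Rhoi$ has polyhedral epigraph (more precisely $\mathcal L_c(\Rhoi)=\acc_i+\ker(\price_i)+cU_i$ is polyhedral for each $c$, so $\Rhoi$ is a finite-valued-where-finite polyhedral convex function on $\dom(\Rhoi)$). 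Hence the map
\[
(X,c)\ \longmapsto\ \Big\{\mathbf X\in\textstyle\prod_i\CX_i\ \Big|\ \sum_i X_i=X,\ \sum_i\Rhoi(X_i)\le c\Big\}
\]
is a correspondence whose graph is a polyhedral subset of $\CX\times\R\times\prod_i\CX_i$ (an intersection of finitely many closed half-spaces and linear equalities, after passing to the epigraph variables). Evaluating at $c=\Lambda(X)$ and using that $\Lambda$ is itself polyhedral (it is the image of the polyhedral set $\prod_i\dom(\Rhoi)$-epigraphs under the continuous linear summation map, again by Lemma~\ref{lem:linpoly}), hence continuous on $\dom(\Lambda)$, will give $\mathcal P$ as a composition.

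The decisive step is then to invoke the lower hemicontinuity theorem for polyhedral correspondences: a correspondence with polyhedral graph, restricted to the polyhedral projection of that graph onto the parameter space, is lower hemicontinuous; moreover, by the Hausdorff--Walkup / Ioffe--Tikhomirov--type theorem on polyhedral multifunctions (or, in the finite-dimensional case, by Walkup--Wets and Robinson's work on polyhedral multifunctions), such a correspondence is even \emph{Lipschitz} lower hemicontinuous, and in particular admits a continuous — indeed Lipschitz — selection on its domain by Michael's selection theorem (whose hypotheses, lower hemicontinuity with nonempty closed convex values in a Fr\'echet space, are met here since $\mathcal P(X)$ is a nonempty closed convex polyhedron for $X\in\dom(\Lambda)$, nonemptiness being exactly Theorem~\ref{A+ker poly}). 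In the two cases allowed by the hypothesis — $\CX$ finite-dimensional, or $\CX_i=\CX$ for all $i$ — the ambient spaces $\prod_i\CX_i$ are respectively finite-dimensional or equal to $\CX^n$, so Michael's theorem applies cleanly. I would treat the finite-dimensional case via the Walkup--Wets theorem directly, and the case $\CX_i=\CX$ by noting that polyhedrality of the graph in $\CX^{n+1}\times\R$ still yields lower hemicontinuity of the induced multifunction on its (polyhedral, hence closed convex) domain.

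I expect the main obstacle to be the passage from the finite-dimensional polyhedral-multifunction results to the infinite-dimensional Fr\'echet-space setting in the case $\CX_i=\CX$: the classical quantitative results (Walkup--Wets, Robinson) are finite-dimensional, so one must either (a) use the decomposition from Lemma~\ref{lem:linpoly}(1) to split each relevant polyhedron as (closed subspace) $\oplus$ (finite-dimensional polyhedron) and thereby reduce the ``interesting'' part of the solution correspondence to a finite-dimensional polyhedral multifunction plus a constant closed-subspace summand — on which lower hemicontinuity is preserved under Minkowski sums and continuous linear images — or (b) cite an infinite-dimensional lower hemicontinuity result for multifunctions with polyhedral (closed convex, finitely generated in the appropriate sense) graph. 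The clean route is (a): write $\mathcal P(X)=\mathbf s(X)+\mathbf L$ locally, where $\mathbf L$ is a fixed closed subspace coming from the subspace parts of the $\acc_i$ and $\ker(\price_i)$, and $\mathbf s$ is a finite-dimensional polyhedral multifunction in the quotient; then lower hemicontinuity and the existence of a continuous selection follow from the finite-dimensional theory and are transported back. The remaining routine verifications — that $\mathcal P(X)$ is nonempty (Theorem~\ref{A+ker poly}), closed and convex (clear from \eqref{eq:defPcal} and convexity of the $\Rhoi$ together with continuity of $\Lambda$ on $\dom(\Lambda)$), and that $\dom(\mathcal P)=\dom(\Lambda)$ — I would dispatch quickly. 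With lower hemicontinuity and nonempty closed convex values in hand, the continuous selection is immediate from Michael's selection theorem.
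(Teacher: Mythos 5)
Your proposal is sound in outline but takes a genuinely different route from the paper. The paper does not treat $\mathcal P$ as the solution map of a single parametric polyhedral program; instead it factors $\mathcal P(X)=\Gamma_2(\{X\}-\Gamma_1(X))+\Gamma_3(\Gamma_1(X))$, where $\Gamma_1$ is the optimal-payoff correspondence (lower hemicontinuity cited from \cite[Theorem 5.11]{Baes}), $\Gamma_3$ is the security-allocation map of Lemma~\ref{lem:contsel}, and $\Gamma_2:X\mapsto\alloc_X\cap\prod_{i=1}^n\acc_i$ is the acceptable-allocation map, whose lower hemicontinuity is the technical heart (Lemmas~\ref{lhc Gamma2:fin} and~\ref{lhc Gamma2:infin}). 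Those lemmas essentially re-prove by hand, via extreme-point decompositions and boundedness estimates, the finite-dimensional Walkup--Wets-type stability you want to cite, and they handle the case $\CX_i=\CX$ by exactly the Lemma~\ref{lem:linpoly}(1) splitting you propose in your route (a). Your approach is shorter if the classical LP-stability results are accepted as black boxes, and would even deliver Lipschitz behaviour in the finite-dimensional case; the paper's is self-contained (modulo the citation for $\Gamma_1$) and more constructive.

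Two caveats must be addressed to make your argument airtight. First, the graph of $\mathcal P$ itself is \emph{not} convex (argmin graphs of parametric programs rarely are), so the principle ``convex polyhedral graph implies lower hemicontinuity on the domain'' does not apply to $\mathcal P$ directly, and Robinson's theorem on general (finite-union-of-polyhedra) multifunctions yields only local \emph{upper} Lipschitz continuity, which is of no use here. Your argument therefore has to run through the feasibility correspondence $\Phi(X,c)=\{\mathbf X\in\alloc_X\mid\sum_{i=1}^n\Rhoi(X_i)\le c\}$, whose graph \emph{is} a convex polyhedron, composed with $X\mapsto(X,\Lambda(X))$ --- which is indeed what you sketch --- but this requires continuity of $\Lambda$ relative to \emph{all} of $\dom(\Lambda)$, not merely its interior. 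That continuity does hold because $\tn{epi}(\Lambda)$ is polyhedral (in finite dimensions by standard convex analysis, otherwise after reducing via Lemma~\ref{lem:linpoly}(1)), but it is an extra verification the paper sidesteps by handling the boundary behaviour inside $\Gamma_1$. Second, exactness of the infimal convolution, so that $\Phi(X,\Lambda(X))=\mathcal P(X)\neq\emptyset$, must be imported from Theorem~\ref{A+ker poly}, as you note. With these points made precise, your concluding appeal to the Michael Selection Theorem coincides with the paper's.
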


Its proof requires the following highly technical Lemmas~\ref{lhc Gamma2:fin} and~\ref{lhc Gamma2:infin} whose proofs imitate in parts a technique from Baes et al. \cite{Baes}. Note that in analogy with Theorem \ref{A+ker poly}, $\mathscr A_+$ is closed.

\begin{lemma}\label{lhc Gamma2:fin}If $\acc_i\subset \CX_i$, $i\in[n]$, are polyhedral acceptance sets and $\CX$ is finite-dimensional, the correspondence $\Gamma:\mathscr A_+\ni X\to\alloc_X\cap \prod_{i=1}^n\acc_i$ is lower hemicontinuous. 
\end{lemma}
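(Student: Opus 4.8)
The plan is to exploit the polyhedral structure of the $\acc_i$ by reducing, via Lemma~\ref{lem:linpoly}(1), to a situation where lower hemicontinuity can be checked on each defining hyperplane. First I would fix $X^0\in\mathscr A_+$ and an allocation $\mbf Y^0\in\Gamma(X^0)$, together with a sequence (or net) $X^k\to X^0$ in $\mathscr A_+$; I need to produce allocations $\mbf Y^k\in\Gamma(X^k)$ with $\mbf Y^k\to\mbf Y^0$. Since $\CX$ is finite-dimensional, all the $\CX_i$ are finite-dimensional and each $\acc_i=\{X\in\CX_i\mid T_i(X)\le\beta_i\}$ for a positive linear $T_i:\CX_i\to\R^{m_i}$. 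The naive attempt $Y_i^k := Y_i^0 + \tfrac1n(X^k-X^0)$ keeps the sum correct and converges, but may violate the acceptability constraints; the job is to correct it.

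The key step is the correction mechanism. Writing $\Delta^k := X^k - X^0 \to 0$, I would look for a perturbation $\mbf Y^k = \mbf Y^0 + \mbf D^k$ with $\sum_i D_i^k = \Delta^k$, $D_i^k\to 0$, and $T_i(Y_i^0 + D_i^k)\le\beta_i$ for all $i$. Because $X^k\in\mathscr A_+$, there exists \emph{some} allocation $\mbf V^k\in\Gamma(X^k)$; the issue is only that $\mbf V^k$ need not converge to $\mbf Y^0$ (the decomposition of $X^k$ into acceptable pieces is far from unique, and the ``extra room'' in the constraints could be used arbitrarily). To force convergence, I would argue as follows: the correspondence $X\mapsto\alloc_X$ is (trivially) continuous and has closed convex polyhedral values, and $\Gamma(X) = \alloc_X \cap \prod_i\acc_i$ is the intersection with a fixed polyhedron $\prod_i\acc_i$. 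For a family of polyhedra of the form (affine subspace depending continuously on a parameter) $\cap$ (fixed polyhedron), lower hemicontinuity follows from a quantitative Hoffman-type bound: there is a constant $c$, depending only on the finitely many functionals $T_i$ and on the ``direction'' data of $\alloc_0$, such that for every $X$ near $X^0$ and every point $\mbf Y\in\alloc_X$ one can find $\mbf Y'\in\Gamma(X)$ with $\|\mbf Y-\mbf Y'\|\le c\sum_i (T_i(Y_i)-\beta_i)^+$. Applying this to $\mbf Y = \mbf Y^0 + \tfrac1n\Delta^k\in\alloc_{X^k}$, whose constraint violation $\sum_i(T_i(Y_i^0+\tfrac1n\Delta_i^k)-\beta_i)^+ = \sum_i(T_i(Y_i^0)-\beta_i + \tfrac1n T_i(\Delta^k))^+ \le \sum_i\tfrac1n\|T_i\|\,\|\Delta^k\|\to 0$ since $T_i(Y_i^0)\le\beta_i$, produces the desired $\mbf Y^k\to\mbf Y^0$.

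To make the Hoffman-type step rigorous in the Fr\'echet (here finite-dimensional) setting I would use Lemma~\ref{lem:linpoly}(1) to split $\CY := \prod_i\CX_i = \CY^1\oplus\CY^2$ with $\mathrm{dim}(\CY^2)<\infty$ and $\prod_i\acc_i = \CY^1 + \CC'$, $\CC'\subset\CY^2$ a polyhedron; since everything is finite-dimensional here this is automatic, but the decomposition is what makes the analogous infinite-dimensional Lemma~\ref{lhc Gamma2:infin} go through, so it is worth phrasing uniformly. On $\CY^2$ classical Hoffman's lemma gives the constant $c$ directly. The main obstacle I anticipate is precisely controlling this constant \emph{uniformly} as $X$ varies: the affine subspace $\alloc_X$ moves, so a priori Hoffman's constant for $\alloc_X\cap\prod_i\acc_i$ could blow up. The resolution is that $\alloc_X$ is a \emph{translate} of the fixed subspace $\alloc_0$ — only the offset depends on $X$, not the direction space — so the relevant Hoffman constant (which depends only on the matrix of active functionals restricted to the direction space) is genuinely independent of $X$. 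I would isolate this as the technical heart of the argument. Finally, lower hemicontinuity on a finite-dimensional space together with closed convex values yields a continuous selection by Michael's selection theorem, giving the last assertion; this I would only need for Theorem~\ref{thm:paretopoly} and can state as an immediate consequence there rather than in the lemma itself.
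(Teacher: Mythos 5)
Your argument is correct in substance, but it is a genuinely different proof from the one in the paper. The paper fixes the $X$-independent recession cone $0^+\Gamma(X)$, writes $\Gamma(X)=\alpha(X)+0^+\Gamma(X)$ with $\alpha(X)$ the convex hull of the extreme points of $\Gamma(X)\cap\mathcal V$, proves via an explicit extreme-point analysis (inverting the active sub-operators $\mathbf S_I$ and applying Carath\'eodory) that $\alpha$ maps bounded sets to bounded sets, extracts a convergent subsequence $\mbf Y^{k_\lambda}\to\mbf Y\in\Gamma(X)$, and then transports a given $\mbf X\in\Gamma(X)$ along $\mbf Y^{k_\lambda}-\mbf Y$ via a case distinction between active and inactive constraints --- first for $\mbf X$ in the relative interior of $\Gamma(X)$ and then for general $\mbf X$ by approximation. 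Your route replaces all of this by a single application of Hoffman's error bound: since $\Gamma(X)$ is cut out in the fixed finite-dimensional space $\prod_{i=1}^n\CX_i$ by a linear system whose ``matrix'' (the functionals encoding $\sum_iY_i=X$ together with the $T_i$) is fixed and only the right-hand side varies with $X$, the Hoffman constant is uniform in $X$, and correcting the naive perturbation of $\mbf Y^0$ costs at most a constant times its acceptability violation, which tends to $0$. This is cleaner, avoids subsequences and the relative-interior step entirely, and in fact yields a Lipschitz-type (Hausdorff) continuity statement stronger than lower hemicontinuity; what it buys this strength with is the reliance on Hoffman's lemma, which is classical but must be cited or proved, whereas the paper's argument is self-contained modulo elementary polyhedral geometry. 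Two small repairs are needed: (i) $Y_i^0+\tfrac1n(X^k-X^0)$ need not lie in the ideal $\CX_i$ when the $\CX_i$ are proper subspaces of $\CX$, so you must first split $\Delta^k=\sum_{i=1}^n\Delta_i^k$ with $\Delta_i^k\in\CX_i$ and $\Delta_i^k\to0$, which is possible by choosing a continuous linear right inverse of the surjective summation map $\prod_{i=1}^n\CX_i\to\CX$ (your notation wavers between $\Delta^k$ and $\Delta_i^k$ here); and (ii) Hoffman should be applied to the joint system including the equality constraints $\sum_iY_i=X^k$, so that the corrected point lands in $\alloc_{X^k}$ and not merely in $\prod_{i=1}^n\acc_i$ --- since your candidate already satisfies the equalities exactly, the residual still reduces to $\sum_i\|(T_i(Y_i)-\beta_i)^+\|$, as you claim. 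Deferring the continuous-selection statement to Theorem~\ref{thm:paretopoly} is consistent with the paper, which also invokes Michael's theorem only there.
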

\begin{proof} 
Each subspace $\CX_i$ is polyhedral, as well. As in Definition \ref{def:poly}, for each $i\in[n]$ we fix $m_i\in\N$, a positive linear and continuous operator $T_i:\CX\to\R^{m_i}$, and vectors $\beta_i\in\R^{m_i}$, such that
\[\acc_i=\{X\in\CX\mid T_i(X)\leq\beta_i\}.\]

\textit{Step 1:} For fixed $X\in\mathscr A_+$ we decompose $\Gamma(X)$ as the sum of a universal and an $X$-dependent component. Recall from Appendix \ref{appendix:geom} that the recession cone of $\Gamma(X)$ is given by
$0^+\Gamma(X):=\{\mbf Y\mid \forall\, \mbf X\in\Gamma(X)\,\forall\, k>0: \mbf X+k\mbf Y\in\Gamma(X)\}$. The lineality space of $\Gamma(X)$ is $0^+\Gamma(X)\cap(-0^+\Gamma(X))=\{\mbf Y\in\alloc_0\mid\forall\,i\in[n]:~T_i(Y_i)=0\},$
a subspace independent of $X$. By virtue of Lemma \ref{lem:decomp}, there is a $X$-independent subspace $\mathcal V\subset\prod_{i=1}^n\CX_i$ such that
$$\Gamma(X)=\alpha(X)+0^+\Gamma(X),\quad\alpha(X):=\tn{co}(\tn{ext}(\Gamma(X)\cap \mathcal V)),$$
where co$(\cdot)$ denotes the convex hull operator and ext$(\Gamma(X)\cap \mathcal V)$ the set of extreme points of $\Gamma(X)\cap \mathcal V$. 

\textit{Step 2:} In this step, we prove that the correspondence $\alpha:\mathscr A_+\to\prod_{i=1}^n\acc_i$ maps bounded sets to bounded sets. To this end, let $D:=\dim(\CX)=\dim(\CX^*)$ and choose a basis $\psi_1,...,\psi_{D}$ of $\CX^*$. Note that $\mbf X\in\Gamma(X)\cap \mathcal V$ if, and only if, 
\begin{itemize}
\item $\mbf X$ is an allocation of $X$, i.e. $\psi_j(X_1+...+X_n)=\psi_j(X)$ for all $j\in [D]$, or equivalently $\psi_j(X_1+...+X_n)\leq \psi_j(X)$ and $(-\psi_j)(X_1+...+X_n)\leq(-\psi_j)(X)$;
\item each $X_i$ lies in $\acc_i$, i.e. $T_i(X_i)\leq \beta_i$; 
\item $\mbf X\in \mathcal V$.
\end{itemize}
Clearly, the properties listed above describe a polyhedral set; more precisely, for $m:=\sum_{i=1}^nm_i+2D$, $m_i$ defined above, we may find a continuous linear operator $\mathbf{S}: \mathcal V\to\R^m$ and a continuous function $\mathbf{f}:\CX\to\R^m$ such that 
$$\Gamma(X)\cap \mathcal V=\{\mbf X\in \mathcal V\mid \mathbf{S}(\mbf X)\leq \mathbf{f}(X)\}.$$
Every ``row'' $\mathbf{S}_i$ of $\mathbf{S}$ corresponds to an element of $\mathcal V^*$. By \cite[Theorem II.4.2]{Barvinok}, for every extreme point $\mbf X\in \alpha(X)\cap \mathcal V$ the set $I(\mbf X)=\{i\in[m]\mid \mathbf{S}_i(\mbf X)=\mathbf{f}_i(X)\}$, which contains at least $\dim(\mathcal V)$ elements, satisfies that $\tn{span}\{\mathbf{S}_i\mid i\in I(\mbf X)\}=\mathcal V^*.$
Let $\mathbb F(X):=\{I(\mbf X)\mid \mbf X\in\tn{ext}(\Gamma(X)\cap \mathcal V)\}$ be the collection of all such $I(\mbf X)$ corresponding to an extreme point. Its cardinality is bounded by the finite number of extreme points, the latter depending on $\dim(\mathcal V)$ and $m$ only. Moreover, for each $I\in\mathbb F(X)$, the linear operator
$\mathbf{S}_I:\mathcal V\ni\mbf Y\mapsto (\mathbf{S}_i(\mbf Y))_{i\in I}$
is injective and thus invertible on its image. Keeping  \cite[Corollary II.4.3]{Barvinok} in mind, we have shown $(\mathbf{S}_I)_{I\in \mathbb F(X)}$ is a finite family of invertible operators whose cardinality depends on $\dim(\mathcal V)$ and $m$ only.
Let $\CB\subset\CX$ be a bounded set. For each $I\in\mathbb F(X)$, $\mathbf{f}_I$ is continuous and thus maps $\CB$ to a bounded set. Also, $\mathbf S_I^{-1}$ is continuous by the Closed Graph Theorem \cite[Theorem 5]{Husain}, whence boundedness of $\{\mathbf S_I^{-1}(\mathbf f_I(X))\mid X\in A\}$ follows. Recall that $\{\mathbb F(X)\mid X\in \CB\}$ is finite. Using Carath\'eodory's Theorem \cite[Theorem 17.1]{Rockafellar}, co$\{\mathbf S_I^{-1}(\mathbf f_I(X))\mid X\in \CB, I\in\mathbb F(X)\}$ is bounded. As 
\[\bigcup_{X\in \CB}\alpha(X)=\bigcup_{X\in \CB}\tn{co}\{\mathbf{S}_I^{-1}(\mathbf{f}_I(X))\mid I\in \mathbb F(X)\}\subset \tn{co}\{\mathbf S_I^{-1}(\mathbf f_I(X))\mid X\in \CB,\,I\in\mathbb F(X)\},\]
it has to be bounded as well and Step 2 is proved. 

\textit{Step 3:} $\Gamma$ is lower hemicontinuous. Let $(X^k)_{k\in\N}\subset\mathscr A_+$ be convergent to $X\in\mathscr A_+$ and let $\mbf X\in\Gamma(X)$. We have to show that there is a subsequence $(k_\lambda)_{\lambda\in\N}$ and $\mbf X^\lambda\in\Gamma(X^{k_\lambda})$ such that $\mbf X^\lambda\to\mbf X$; c.f. Appendix \ref{appendixC}. To this end, let first $\mbf Y^k\in\alpha(X^k)$, $k\in\N$, which is a bounded sequence by Step 2. After passing to a subsequence $(k_\lambda)_{\lambda\in\N}$, we may assume $\mbf Y^{k_\lambda}\to \mbf Y\in\Gamma(X)$ (as $\acc_i$ is closed, $i\in[n]$). If $\Gamma(X)$ is a singleton, $\mbf Y=\mbf X$ has to hold and we may choose $\mbf X^\lambda:=\mbf Y^{k_\lambda}$. Otherwise, suppose first that $\mbf X$ lies in the \textit{relative interior} of $\Gamma(X)$, i.e. there is an $\eps>0$ such that $\mbf X+\eps(\mbf X-\mbf Y)\in \Gamma(X)$, as well. Recall the definition of the linear operators $T_i$, $i\in[n]$ above and fix $i\in[n]$. Let $1\leq j\leq m_i$ be arbitrary. We denote by $T_i^j(W)$ the $j$-th entry of $T_i(W)$.\\
\underline{Case 1:} $T_i^j(X_i)=\beta_j$. From $Y_i\in\acc_i$, we infer 
$$0\geq T^j_i(X_i+\eps(X_i-Y_i))-\beta_j=\eps(\beta_j-T^j_i(Y_i))\geq 0,$$
which means $T^j_i(Y_i)=\beta_j$, as well. Set $\lambda(i,j)=1$. \\
\underline{Case 2:} $T^j_i(X_i)<\beta_j$. As $Y^{k_\lambda}_i\to Y_i$ for $\lambda\to\infty$, there must be a $\lambda(i,j)\in\N$ such that for all $\lambda\geq \lambda(i,j)$ 
$$T_i^j(Y^{k_\lambda}_i-Y_i+X_i)\leq \beta_j.$$
Hence for all $\lambda\geq \max_{i\in[n], 1\leq j\leq m_i}\lambda(i,j)$, one obtains 
$$\mbf X^{\lambda}:=\mbf Y^{k_\lambda}-\mbf Y+\mbf X\in\prod_{i=1}^n\acc_i\cap\alloc_{X^{k_\lambda}}=\Gamma(X^{k_\lambda}),$$
and $\mbf X^{\lambda}\to\mbf X$. It remains to notice that each $\mbf X\in\Gamma(X)$ may be approximated with a sequence in the relative interior of $\Gamma(X)$, c.f. \cite[Theorem 6.3]{Rockafellar}. The assertion is proved. \end{proof}

\begin{lemma}\label{lhc Gamma2:infin}If $\acc_i\subset \CX_i$, $i\in[n]$, are polyhedral acceptance sets and $\CX_i=\CX$ for all $i\in[n]$, the correspondence $\Gamma:\mathscr A_+\ni X\to\alloc_X\cap \prod_{i=1}^n\acc_i$ is lower hemicontinuous. 
\end{lemma}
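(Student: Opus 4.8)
\emph{Plan.} The idea is to reduce to the finite-dimensional Lemma~\ref{lhc Gamma2:fin} by factoring out a common closed subspace of finite codimension along which every acceptance set is invariant. Since each $\acc_i\subset\CX$ is polyhedral and $\CX$ is a Fr\'echet space, Lemma~\ref{lem:linpoly}(1) yields closed subspaces $\CX^1_i,\CX^2_i$ with $\CX=\CX^1_i\oplus\CX^2_i$, $\dim\CX^2_i<\infty$, and $\acc_i=\CX^1_i+\CC'_i$ for a polyhedron $\CC'_i\subset\CX^2_i$; in particular $\acc_i\pm\CX^1_i=\acc_i$. I put $\CX^1:=\bigcap_{i=1}^n\CX^1_i$, a closed subspace of finite codimension, fix a finite-dimensional complement $\CX^2$ so that $\CX=\CX^1\oplus\CX^2$ with continuous projections $\gamma_1,\gamma_2$ (continuity by the Closed Graph Theorem \cite[Theorem 5]{Husain}), and set $\widetilde\acc_i:=\acc_i\cap\CX^2$. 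Because $\CX^1\subseteq\CX^1_i$, one has $\acc_i\pm\CX^1=\acc_i$, and decomposing $X_i=\gamma_1(X_i)+\gamma_2(X_i)$ gives the key equivalence $X_i\in\acc_i\iff\gamma_2(X_i)\in\widetilde\acc_i$, together with $\widetilde\acc_i=\gamma_2(\acc_i)$. Each $\widetilde\acc_i$ is a polyhedral acceptance set of the finite-dimensional ordered vector space $(\CX^2,\peq)$ — nonempty since $\acc_i\neq\emptyset$, proper since $\acc_i\neq\CX$, monotone by monotonicity of $\acc_i$ — and $\widetilde{\mathscr A}_+:=\sum_{i=1}^n\widetilde\acc_i=\gamma_2(\mathscr A_+)$.

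\emph{Factoring the correspondence.} With $\widetilde\Gamma$ the correspondence of Lemma~\ref{lhc Gamma2:fin} associated to $\CX^2$ and the $\widetilde\acc_i$, i.e.\ $\widetilde\Gamma\colon\widetilde{\mathscr A}_+\ni Y\mapsto\{\mbf Y\in(\CX^2)^n\mid\sum_{i=1}^nY_i=Y,\ Y_i\in\widetilde\acc_i\ \forall i\in[n]\}$, the equivalence above shows
\[\Gamma(X)=\alloc_X\cap\prod_{i=1}^n\acc_i=\bigl\{\mbf X\in\alloc_X\ \big|\ \gamma_2(X_i)\in\widetilde\acc_i\ \forall\, i\in[n]\bigr\},\qquad X\in\mathscr A_+,\]
and applying $\gamma_2$ componentwise maps $\Gamma(X)$ into $\widetilde\Gamma(\gamma_2(X))$. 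By Lemma~\ref{lhc Gamma2:fin}, $\widetilde\Gamma$ is lower hemicontinuous on $\widetilde{\mathscr A}_+$.

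\emph{Lower hemicontinuity of $\Gamma$ via lifting.} Let $X^k\to X$ in $\mathscr A_+$ and $\mbf X=(X_1,\dots,X_n)\in\Gamma(X)$; by the sequential description of lower hemicontinuity (Appendix~\ref{appendixC}) it suffices to find a subsequence $(k_\lambda)_\lambda$ and $\mbf X^\lambda\in\Gamma(X^{k_\lambda})$ with $\mbf X^\lambda\to\mbf X$. Since $\gamma_2$ is continuous, $\gamma_2(X^k)\to\gamma_2(X)$ in $\widetilde{\mathscr A}_+$ and $(\gamma_2(X_i))_i\in\widetilde\Gamma(\gamma_2(X))$; lower hemicontinuity of $\widetilde\Gamma$ then provides a subsequence and $\widetilde{\mbf X}^\lambda=(\widetilde X^\lambda_i)_i\in\widetilde\Gamma(\gamma_2(X^{k_\lambda}))$ with $\widetilde X^\lambda_i\to\gamma_2(X_i)$ for every $i$. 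I lift these by setting $X^\lambda_i:=\widetilde X^\lambda_i+\gamma_1(X_i)$ for $i\geq 2$ and $X^\lambda_1:=\widetilde X^\lambda_1+\gamma_1(X_1)+\bigl(\gamma_1(X^{k_\lambda})-\gamma_1(X)\bigr)$. Then $\gamma_2(X^\lambda_i)=\widetilde X^\lambda_i\in\widetilde\acc_i$, hence $X^\lambda_i\in\acc_i$; using $\sum_i\widetilde X^\lambda_i=\gamma_2(X^{k_\lambda})$ and $\sum_i\gamma_1(X_i)=\gamma_1(X)$ gives $\sum_iX^\lambda_i=\gamma_2(X^{k_\lambda})+\gamma_1(X^{k_\lambda})=X^{k_\lambda}$; and $X^\lambda_i\to\gamma_2(X_i)+\gamma_1(X_i)=X_i$ (for $i=1$ using $\gamma_1(X^{k_\lambda})\to\gamma_1(X)$). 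Thus $\mbf X^\lambda\in\Gamma(X^{k_\lambda})$ and $\mbf X^\lambda\to\mbf X$, which is the claim.

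I expect the reduction step to be the main obstacle: identifying the common ``flat'' subspace $\CX^1$ and verifying that membership in each $\acc_i$ is detected by the finite-dimensional shadow $\gamma_2(X_i)$, so that $\Gamma$ genuinely factors through $\widetilde\Gamma$ up to a harmless affine adjustment of the $\CX^1$-component; once this structural reduction is in place, the remainder is a direct appeal to Lemma~\ref{lhc Gamma2:fin} and routine bookkeeping with the projections $\gamma_1,\gamma_2$.
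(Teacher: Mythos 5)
Your proof is correct and follows essentially the same route as the paper's: both reduce to Lemma~\ref{lhc Gamma2:fin} by splitting $\CX=\CX^1\oplus\CX^2$ with $\CX^2$ finite-dimensional and each $\acc_i$ invariant under translation by $\CX^1$ (the paper takes $\CX^2=\sum_i\CX_i^2$ with a closed complement $\CX^1$, you take $\CX^1=\bigcap_i\CX_i^1$ with a finite-dimensional complement — the same structure), then project, apply the finite-dimensional lemma, and lift, absorbing the $\gamma_1$-discrepancy $\gamma_1(X^{k_\lambda}-X)$ into the allocation (the paper spreads it as $\tfrac1n$ over all agents, you assign it to agent~1; both are harmless since $\acc_i+\CX^1=\acc_i$). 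The verifications of the key equivalence $X_i\in\acc_i\iff\gamma_2(X_i)\in\widetilde\acc_i$ and of the bookkeeping are sound.
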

\begin{proof}
We shall derive the assertion from Lemma \ref{lhc Gamma2:fin}. For $i\in[n]$ fixed, Lemma \ref{lem:linpoly}(1) allows to find closed subspaces $\CX_i^1,\CX_i^2\subset\CX$, the latter of finite dimensions, and polyhedral acceptance sets $\tilde\acc_i\subset\CX_i^2$ such that $\CX=\CX_i^1\oplus\CX_i^2$ and $\acc_i=\CX_i^1+\tilde\acc_i$. The space $\CX^2:=\sum_{i=1}^n\CX_i^2$ is finite-dimensional and thus complemented in $\CX$ by a closed subspace $\CX^1$, and the projections $\gamma_i:\CX\to\CX^i$ are continuous linear operators. Hence, we may rewrite $\acc_i=\CX^1+\mathcal B_i$ and $\mathscr A_+=\CX^1+\mathscr B_+$, where $\mathcal B_i=\tilde \acc_i+\gamma_2(\CX^1_i)$ and $\mathscr B_+:=\sum_{i=1}^n\mathcal B_i$. Furthermore, $\mathscr A_+=\{X\in\CX\mid\gamma_2(X)\in\mathscr B_+\}$
and $\tilde\Gamma:\mathscr B_+\ni Y\to\prod_{i=1}^n\mathcal B_i\cap\{\mbf Y\in (\CX^2)^n\mid \sum_{i=1}^nY_i=Y\}$ is lower hemicontinuous by Lemma \ref{lhc Gamma2:fin}.\\
Let now $(X^k)_{k\in\N}\subset\mathscr A_+$ be a sequence converging to $X\in\mathscr A_+$, which implies $\gamma_2(X^k)\to\gamma_2(X)$ for $k\to\infty$. Let $\mbf X\in \Gamma(X)$, whence $\mbf Y:=(\gamma_2(X_1),...,\gamma_2(X_n))\in\tilde\Gamma(\gamma_2(X))$ follows. As $\tilde\Gamma$ is lower hemicontinuous, there is a subsequence $(k_\lambda)_{\lambda\in\N}$ and $\mbf Y^\lambda\in \tilde\Gamma(\gamma_2(X^{k_\lambda}))$ such that $\mbf Y^{\lambda}\to \mbf Y$. Now we define $\mbf X^\lambda\in\CX^n$ by setting $X_i^\lambda:=\gamma_1(X_i)+\frac 1 n\gamma_1(X^{k_\lambda}-X)$ and note that $\sum_{i=1}^nX^\lambda_i=\gamma_1(X^{k_\lambda})$. Lower hemicontinuity of $\Gamma$ follows once we observe $\mbf X^\lambda+\mbf Y^\lambda\in\Gamma(X^{k_\lambda})$ and $\mbf X^\lambda+\mbf Y^\lambda\to\mbf X$. 
\end{proof}

\begin{proof}[Proof of Theorem \ref{thm:paretopoly}]
In addition to the correspondence $\mathcal P$ defined by \eqref{eq:defPcal} consider the following three correspondences:
\begin{itemize}
\item $\Gamma_1:\dom(\Lambda)\two \CM,~X\mapsto\{Z\in\CM\mid X-Z\in\mathscr A_+,~\Lambda(X)=\pi(Z)\},$ which is lower hemicontinuous on $\dom(\Lambda)$ by virtue of the polyhedrality of $\mathscr A_+$ and \cite[Theorem 5.11]{Baes}.
\item $\Gamma_2:\mathscr A_+\two\prod_{i=1}^n\acc_i,~X\mapsto\alloc_X\cap\prod_{i=1}^n\acc_i,$ which is lower hemicontinuous by Lemmas \ref{lhc Gamma2:fin} and \ref{lhc Gamma2:infin}, respectively.
\item $\Gamma_3:\CM\two\prod_{i=1}^n\CS_i,~Z\mapsto\alloc_Z^s,$ which is lower hemicontinuous by Lemma \ref{lem:contsel}.
\end{itemize}
Applying \cite[Theorem 17.23]{Ali}, $\Gamma: \dom(\Lambda)\ni X\mapsto \Gamma_2(\{X\}-\Gamma_1(X))+\Gamma_3(\Gamma_1(X))$ is lower hemicontinuous as well. \\
In fact, $\Gamma=\CP$ holds. To see this, let $X\in\dom(\Lambda)$ be arbitrary. $\Gamma_2(\{X\}-\Gamma_1(X))+\Gamma_3(\Gamma_1(X))\subset \CP(X)$ follows from the proof of Proposition~\ref{prop:exact1}. For the converse inclusion, let $\mbf X\in\CP(X)$ be arbitrary. Choose $Z_i\in\CS_i$, $i\in[n]$, such that $X_i-Z_i\in\acc_i$ and $\Rhoi(X_i)=\price_i(Z_i)$, which is possible by Theorem \ref{A+ker poly} in the case $n=1$. Let $Z=Z_1+...+Z_n$ and note that
\begin{center}$\pi(Z)=\sum_{i=1}^n\price_i(Z_i)=\sum_{i=1}^n\Rhoi(X_i)=\Lambda(X),$\end{center}
i.e. $Z\in\Gamma_1(X)$. Moreover, as $\mbf X-\mbf Z\in \Gamma_2(X-Z)\subset\Gamma_2(\{X\}-\Gamma_1(X))$, it only remains to note $\mbf X=(\mbf X-\mbf Z)+\mbf Z\in\Gamma_2(\{X\}-\Gamma_1(X))+\Gamma_3(\Gamma_1(X))$. Equality of sets is established. 

Finally, $\dom(\Lambda)$ is metrisable and therefore paracompact; c.f. \cite{Rudin}. Moreover, $\CX^n$ is a Fr\'echet space, and as $\CP:\dom(\Lambda)\two\prod_{i=1}^n\CX_i$ has non-empty closed convex values, a continuous selection for $\mathcal P$ exists by the Michael Selection Theorem \cite[Theorem 17.66]{Ali}. 
\end{proof}

One may wonder whether the correspondence $\mathcal E:\prod_{i=1}^n\CX_i\two\prod_{i=1}^n\CX_i\times\CX^*$
mapping an initial loss endowment $\mbf W$ to all its equilibrium allocations such that $\mbf X\in\mathcal P(W_1+...+W_n)$ and $\phi$ is a subgradient of $\Lambda$ at $W_1+...+W_n$, as in the proof of Proposition~\ref{prop:existenceequilibria} is lower hemicontinuous under suitable conditions. This, however, is not the case. Suppose $\CX$ admits two positive functionals $\phi,\psi\in\CX^*_+$ such that $\ker(\phi)\backslash\ker(\psi)\neq\emptyset$. We assume $n=1$ and consider an agent system $\mathcal R=(\acc,\CS,\price)$ such that $\Rho(X)=\max\{\phi(X),\psi(X)\}$, $X\in\CX$. Let $W\in\CX$ such that $\phi(W)=0<\psi(W)$. Thus, for all $n\in\N$, the equilibrium price at $\frac 1 nW$ would be $\psi$, whereas any element of $co(\{\phi,\psi\})$ could be chosen as equilibrium price at 0. $\mathcal E$ is \textit{not} lower hemicontinuous in this case. 

\subsection{An example}

We close this section by showing how Pareto optima can be computed in the situation of Example~\ref{ex:poly1}. Note that for $x,y\in\R$, we have 
$$X-x\ind_A-y\ind_B\in\acc_1~\iff~\max_{a\in A}X(a)-K_1(a)\leq x\tn{ and }\max_{b\in B}X(b)-K_1(b)\leq y.$$
Consequently,
$$\rho_1(X):=\rho_{\mathcal R_1}(X)=\max_{a\in A}X(a)-K_1(a)+\max_{b\in B}X(b)-K_1(b),\quad X\in\CX_1,$$
and it only takes finite values. An analogous computation shows 
$$\rho_2(X):=\rho_{\mathcal R_2}(X)=\max_{b\in B}X(b)-K_2(b)+\max_{c\in C}X(c)-K_2(c),\quad X\in\CX_2.$$
which also takes only finite values. One easily proves that $(\mathcal R_1,\mathcal R_2)$ is a polyhedral agent system and that the representative agent for it is given by 
\begin{center}$\mathscr A_+=\acc_1+\acc_2=\{X\in\CX\mid X\leq \tilde K:=K_1+K_2\},~\CM=\tn{span}\{\ind_A,\ind_B,\ind_C\}$,\\
$\pi(x\ind_A+y\ind_B+z\ind_C)=x+y+z,\quad x,y,z\in\R.$\end{center}
Furthermore
\[\ker(\pi)=\{N_{x,y}:=x\ind_A-(x+y)\ind_B+y\ind_C\mid x,y\in\R\}.\]
We now aim to compute the associated risk sharing functional $\Lambda$ and Pareto optimal allocations. To this end, for $X\in\CX$, we introduce the notation $\rho^A(X):=\max_{a\in A} X(a)-K_1(a)$, $\rho^B(X):=\max_{b\in B}X(b)-\tilde K(b)$, and $\rho^C(X):=\max_{c\in C} X(c)-K_2(c)$. Using the characterisation of $\mathscr A_+$, one obtains 
\[\augm=\{X\in\CX\mid \rho^B(X)\leq -\rho^A(X)-\rho^C(X)\}.\]
A straightforward computation yields 
\[\Lambda(X)=\inf\{r\in\R\mid X-r\ind_B\in \augm\}=\rho^A(X)+\rho^B(X)+\rho^C(X).\]
Note that $X-\Lambda(X)\ind_B-N_{\rho^A(X),\,\rho^C(X)}\in\mathscr A_+$, since
\[\left((X-\rho^A(X))\ind_A+K_1\ind_B,(X-\rho^B(X)-K_1)\ind_B+(X-\rho^C(X))\ind_C\right)\]
is an allocation of $X-\Lambda(X)\ind_B-N_{\rho^A(X),\,\rho^C(X)}$ which lies in $\acc_1\times\acc_2$.
For every $\zeta\in\R$, the allocation $(X_1(\zeta), X_2(\zeta))$ given by   
\[X_1(\zeta)=X\ind_A+(K_1-\rho^A(X)+\zeta\Lambda(X))\ind_B\]
and
\[X_2(\zeta) = (X-\rho^B(X)-\rho^C(X)-K_1-(\zeta-1)\Lambda(X))\ind_B+X\ind_C\]
is Pareto optimal. Last, we note that an optimal payoff for $X$ is given by $\rho^A(X)\ind_A+(\Lambda(X)-\rho^A(X)-\rho^C(X))\ind_B+\rho^C(X)\ind_C\in\CM$. 


\section{Law-invariant acceptance sets}\label{sec:lawinvacc}

In this section we discuss the risk sharing problem for law-invariant acceptance sets. Throughout we fix an \textit{atomless} probability space $(\Omega,\CF,\PW)$. By $\Linfty:=L^{\infty}(\Omega,\CF,\PW)$ and $L^1:=L^1(\Omega,\CF,\PW)$ we denote the spaces of equivalence classes of bounded and $\PW$-integrable random variables, respectively. They are Banach lattices when equipped with the usual $\PW$-almost sure (a.s.) order and their natural norms $\Norm_\infty:X\mapsto\inf\{m>0\mid \PW(|X|\leq m)=1\}$ and $\Norm_1:X\mapsto\E[|X|]$. All appearing (in)equalities between random variables are understood in the a.s. sense. 
\begin{definition}\label{def:lawinv}
A subset $\CC\subset L^1$ is \textsc{$\PW$-law-invariant} if $X\in\CC$ whenever there is $Y\in\CC$ which is equal to $X$ in law under $\PW$, i.e. the two Borel probability measures $\PW\circ X^{-1}$ and $\PW\circ Y^{-1}$ on $(\R,\Borel(\R))$ agree. Given a $\PW$-law-invariant set $\emptyset\neq \CC\subset L^1$ and some other set $S\neq \emptyset$, a function $f:\CC\to S$ is called $\PW$-law-invariant if $\PW\circ X^{-1}=\PW\circ Y^{-1}$ implies $f(X)=f(Y)$. 
\end{definition}

\subsection{Existence of optimal payoffs, Pareto optima, and equilibria}
Let us specify the setting.

\textit{Model space assumptions:} Throughout this section, all agents $i\in [n]$ operate on the same model space $\CX_i=\CX\subset L^1$ consisting of equivalence classes of integrable random variables. For the sake of clarity, we will first discuss the results in the maximal case $\CX=L^1$. In Section \ref{sec:genmodel}, the results will be generalised to a wide class of model spaces $L^\infty\subset\CX\subset L^1$. 

\textit{Acceptance sets:} Each agent $i\in[n]$ deems a loss profile adequately capitalised if it belongs to a closed $\PW$-law-invariant acceptance set $\acc_i\subset L^1$ which contains a riskless payoff, i.e.
\begin{equation}\label{eq:finite_on_bd}\R\cap \acc_i\neq \emptyset.\end{equation}
As the dual space of $L^1$ may be identified with $L^\infty$, we may see the respective \textit{support functions} as mappings 
\[\sigma_{\acc_i}:L^\infty\to(-\infty,\infty],\quad Q\mapsto\sup_{Y\in\acc_i}\E[QY];\]
c.f. Appendix \ref{appendix:geom}. Due to monotonicity of the sets $\acc_i$, $\dom(\sigma_{\acc_i})\subset L^\infty_+$ holds. The reader may think of acceptance sets arising, for instance, from the Average Value at Risk (Expected Shortfall) or distortion risk measures.

\textit{Security markets:} Regarding the security markets, we require there is a linear functional $\pi:\CM\to\R$ on the global security space $\CM$ such that the individual pricing functionals are given by $\price_i=\pi|_{\CS_i}$, $i\in [n]$; the agents operate on different sub-markets $(\CS_i,\pi|_{\CS_i})$ of $(\CM,\pi)$. In particular, conditions ($\star$) and (NSA) are satisfied. Moreover, we assume

\begin{assumption}\label{assumption} 
$\pi$ is of the shape $\pi(Z)=p\E_\QW[Z]$, $Z\in\CM$, where $p>0$ is a fixed constant and $\QW=Qd\PW$, $Q\in\Linfty_+$, is a probability measure such that either
\begin{enumerate}
\item[(1)] $Q=1$, i.e. $\QW=\PW$, or
\item[(2)] $Q\in \bigcap_{i=1}^n\dom(\sigma_{\acc_i})$ and for all $0\neq N\in\CM$ such that $\E_\QW[N]=0$ there exists 
$Q^N\in \bigcap_{i=1}^n\dom(\sigma_{\acc_i})$
such that $\E[Q^NN]>0$. 
\end{enumerate}
\end{assumption}

Our assumption on the pricing functionals is quite flexible as illustrated by Example \ref{ex:lawinv2} below. Assumption \ref{assumption}(2) means the agent's view on acceptability is risk averse with respect to pricing, and that a fully leveraged security $N$ with non-trivial variance and price $0$ cannot be market acceptable at all quantities.  Recall from the introduction that assuming the individual acceptance sets $\acc_i$ to be law-invariant means that being acceptable or not is merely a statistical property of the loss profile. Mathematically, this intuition necessitates introducing the hypothetical physical measure $\PW$. Prices in the security market can, e.g., be determined by a suitable martingale measure $\QW$ though. For the remainder of this section we assume that Assumption~\ref{assumption} is satisfied.

In order to deduce the existence of optimal payoffs and Pareto optimal allocations from Proposition \ref{prop:exact2} in the case $\CX=L^1$, properness of $\Lambda$ and closedness of $\augm$ must be verified. In a first step we characterise the recession cone $0^+\CC$ of a convex law-invariant set $\CC$, which is also of independent interest. For the definition of a recession cone, we refer to Appendix \ref{appendix:geom}.

\begin{proposition}\label{prop:lawinv}
Suppose $\emptyset\neq \CC\subsetneq L^1$ is law-invariant, convex, and closed. Then $0^+\CC$ is law-invariant. If furthermore $\CC$ does not agree with one of the sets 
\begin{center}$\{X\in L^1\mid c_-\leq \E[X]\le c_+\}$\end{center}
where $-\infty\leq c_-\leq c_+\leq \infty$, then $U\in 0^+\CC$ and $\E[U]=0$ imply $U=0$.  
\end{proposition}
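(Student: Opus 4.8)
The plan is to prove the two assertions separately. For law-invariance of $0^+\CC$, recall that $U \in 0^+\CC$ iff there exists (equivalently, for all) $X \in \CC$ such that $X + kU \in \CC$ for all $k > 0$; since $\CC$ is closed and convex, $0^+\CC = \bigcap_{k>0} \frac{1}{k}(\CC - X)$ for any fixed $X \in \CC$. The issue is that this fixed reference point $X$ breaks the symmetry. Instead I would argue as follows: fix $U \in 0^+\CC$ and let $V$ be equal to $U$ in law. I want to show $V \in 0^+\CC$, i.e. $X + kV \in \CC$ for some $X \in \CC$ and all $k > 0$. Here the key tool is that on an atomless probability space, for any two random variables $U, V$ with $\PW \circ U^{-1} = \PW \circ V^{-1}$, and any $X$, one can find $\tilde X$ with $\PW\circ(\tilde X + V)^{-1} = \PW\circ(X+U)^{-1}$; more carefully, I would use the fact that there is a measure-preserving transformation or a coupling realising the rearrangement. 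Concretely: pick $Y \in \CC$. Since $U \in 0^+\CC$, $Y + kU \in \CC$ for all $k$. Now $V$ and $U$ being identically distributed, choose (using atomlessness) a random variable $\tilde Y$ such that $(\tilde Y, V)$ has the same joint law as $(Y, U)$; then $\tilde Y + kV$ is equal in law to $Y + kU \in \CC$, and by law-invariance and closedness of $\CC$, $\tilde Y + kV \in \CC$ for every $k > 0$. Also $\tilde Y \in \CC$ (take $k \to 0$, or note $\tilde Y$ has the same law as $Y$). Hence $V \in 0^+\CC$.

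For the second assertion, suppose $U \in 0^+\CC$ with $\E[U] = 0$ but $U \neq 0$. The goal is to show $\CC$ must then be one of the half-space-like sets $\{X \mid c_- \le \E[X] \le c_+\}$. The idea: by the first part, $0^+\CC$ is law-invariant, so it contains every random variable equal in law to $U$. Since $U$ is non-constant with mean zero (if $U$ were a nonzero constant it would have nonzero mean), $0^+\CC$ contains a non-degenerate mean-zero random variable. I would then show $0^+\CC$ must contain \emph{all} mean-zero random variables in $L^1$: using that $0^+\CC$ is a law-invariant convex cone containing some non-constant mean-zero $U$, one can generate, via convex combinations of independent identically distributed copies and rearrangements (and scaling), an increasingly rich family; the cleanest route is to invoke a known structural fact (in the spirit of the law-invariant reduction results cited as \cite{Svindland08, JST, Fatou, Acciaio}) that a closed law-invariant convex cone in $L^1$ containing a non-constant element of $\ker(\E[\,\cdot\,])$ must contain $\ker(\E[\,\cdot\,]) = \{X \in L^1 \mid \E[X]=0\}$ entirely — intuitively because conditioning/averaging operators are $L^1$-contractions mapping into the cone, and the only law-invariant closed convex cones are governed by their behaviour on constants and on the mean-zero subspace. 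Once $\{X \mid \E[X]=0\} \subset 0^+\CC$, monotonicity ($\CC - L^1_+ \subset \CC$, inherited by $0^+\CC$) forces $0^+\CC \supset \{X \mid \E[X] \le 0\}$; combined with $0^+\CC$ being the recession cone of the monotone convex set $\CC$, a translation argument shows $\CC = \CC + \{X \mid \E[X] \le 0\}$, which pins down $\CC$ as a set depending on $X$ only through $\E[X]$: setting $c_+ := \sup\{\E[X] \mid X \in \CC\}$ and $c_- := \inf\{c \in \R \mid c \cdot \ind \in \CC\}$ (or the analogous infimum, using that $\R \cap \CC$ is an interval by \eqref{eq:finite_on_bd}-type reasoning and monotonicity), one gets $\CC = \{X \in L^1 \mid c_- \le \E[X] \le c_+\}$, contradicting the hypothesis.

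The main obstacle I anticipate is the structural step asserting that a closed law-invariant convex cone in $L^1$ containing a single non-constant mean-zero element must contain the entire hyperplane $\{\E[\,\cdot\,]=0\}$. The first part (law-invariance of the recession cone via couplings on an atomless space) and the final bookkeeping (identifying $\CC$ with a mean-slab once its recession cone is big enough) are comparatively routine; it is this "bootstrapping from one direction to the whole hyperplane" that carries the real content. I would handle it by first reducing to showing $0^+\CC$ contains all simple mean-zero random variables taking finitely many values (these are $L^1$-dense in $\ker\E$), then using that for a non-constant $U$ one can, by composing with measure-preserving maps and taking finite convex combinations of independent copies scaled appropriately, approximate an arbitrary target step function — a Rademacher/averaging argument — and finally closing up with the $L^1$-closedness of $\CC$ (hence of $0^+\CC$, which is closed since $0^+\CC = \bigcap_{k>0}\frac1k(\CC-Y)$ is an intersection of closed sets). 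If a direct argument proves cumbersome, the fallback is to cite the relevant lemma from the law-invariant risk sharing literature (e.g. the description of law-invariant closed convex sets in terms of their support functions on $L^\infty_+$, together with $\dom(\sigma_\CC) \subset L^\infty_+$) and read off the dichotomy from the shape of $\sigma_\CC$.
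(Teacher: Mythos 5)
Your proposal has two genuine gaps. First, in the law-invariance argument the step ``choose $\tilde Y$ such that $(\tilde Y,V)$ has the same joint law as $(Y,U)$'' is not available on a general atomless probability space: if $\sigma(V)=\CF$ modulo null sets (e.g.\ $\Omega=[0,1]$ with Lebesgue measure and $V$ the identity), every candidate $\tilde Y$ is a measurable function of $V$, so only joint laws concentrated on a graph can be realised; in particular you cannot couple $V$ with a $\tilde Y$ whose conditional law given $V$ is non-degenerate. The step can be repaired by first replacing $Y$ with $\E[Y\mid\sigma(U)]=g(U)$ --- which stays in $\CC$ because law-invariant closed convex sets are stable under conditional expectations --- and setting $\tilde Y:=g(V)$, but as written the coupling claim is false. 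The paper avoids the issue entirely by writing $\CC=\{X\mid \E[QX]\le\sigma_{\CC}(Q)\ \forall Q\in\dom(\sigma_{\CC})\}$ and reading off law-invariance of $0^+\CC$ from the (known) law-invariance of $\dom(\sigma_{\CC})$ together with Lemma~\ref{lem:direction}.

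Second, and more importantly, you explicitly defer the step that carries essentially all the content of the proposition: that a law-invariant closed convex cone containing one non-constant mean-zero direction must contain all of $\ker(\E[\,\cdot\,])$ (or some equivalent statement). Your sketch via ``independent copies and a Rademacher/averaging argument'' is not a proof, and your declared fallback is to cite an unspecified lemma. The paper's treatment of precisely this point is short and concrete: take a non-constant $Q\in\dom(\sigma_{\CC})$ (which exists because $\CC$ is not a mean-slab) and a non-constant $U\in 0^+\CC$, project both onto a finite uniform partition by conditional expectation (both sets being stable under this operation by law-invariance), use law-invariance again to sort the resulting step functions increasingly, and apply Chebyshev's sum inequality to obtain $\E[Q]\E[U]<\E[Q^*U^*]\le 0$, which rules out $\E[U]=0$. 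Without this rearrangement inequality (or an equivalent) your argument does not close. A minor further point: you invoke monotonicity $\CC-L^1_+\subset\CC$, which is not among the hypotheses --- the proposition concerns arbitrary law-invariant closed convex sets, and the excluded sets $\{c_-\le\E[X]\le c_+\}$ are two-sided slabs, not monotone; fortunately $\ker(\E[\,\cdot\,])\subset 0^+\CC$ already forces $\CC$ to depend on $X$ only through $\E[X]$, so monotonicity is not needed for your final bookkeeping.
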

\begin{proof}
As $\CC$ is norm closed and convex, the Hahn-Banach Separation Theorem gives the representation
\[\CC=\{X\in L^1\mid \forall\, Q\in \dom(\sigma_\CC):~\E[QX]\leq \sigma_\CC(Q)\},\]
where $\sigma_\CC$ is the support function of $\CC$.
It is well-known that $\dom(\sigma_\CC)$ is a law-invariant, closed and convex cone in $L^\infty$. 
The law-invariance of $\dom(\sigma_\CC)$ combined with Lemma~\ref{lem:direction} shows that the recession cone $0^+\CC$ is law-invariant as well. Consequently, by \cite[Lemma 1.3]{continuity}, for any $U\in 0^+\CC$, $Q\in\dom(\sigma_\CC)$, and sub-$\sigma$-algebra $\CH\subset \CF$, we have 
\begin{equation}
\label{eq:prob}\E[U|\CH]\in 0^+\CC\quad \mbox{and}\quad \E[Q|\CH]\in \dom(\sigma_{\CC}). 
\end{equation}

Now suppose that $\CC$ does not equal some set of type $\{X\in L^1\mid c_-\leq \E[X]\le c_+\}$. Then there is a  non-constant $Q\in\mathcal \dom(\sigma_\CC)$. Further, suppose $U\in 0^+\CC$ is not constant. 
As $(\Omega,\CF,\PW)$ is non-atomic, for $k\geq 2$ large enough there is a finite measurable partition\footnote{ That is the sets are pairwisely disjoint, measurable, and their union is $\Omega$.} $\Pi:=(A_1,...,A_k)$ of $\Omega$ such that $\PW(A_j)=\frac 1k$, $j\in[k]$, and $U^*=\E[U|\sigma(\Pi)]=\sum_{i=1}^ku_i\ind_{A_i}$ and $Q^*=\E[Q|\sigma(\Pi)]=\sum_{i=1}^kq_i\ind_{A_i}$ are both non-constant. As for any permutation $\tau:[k]\to[k]$ the random variable $U_\tau^*:=\sum_{i=1}^ku_{\tau(i)}\ind_{A_i}$ has the same distribution under $\PW$ as $U^*$, $U^*_\tau\in 0^+\CC$ follows. Similarly, $Q^*_\tau:=\sum_{i=1}^kq_{\tau(i)}\ind_{A_i}\in \dom(\sigma_\CC)$. For our argument we will hence assume without loss of generality that the vectors $u$ and $q$ satisfy $u_1\leq ... \le u_k$ and $ q_1\le ...\le q_k$. 
In both chains of ineqalities, at least one inequality has to be strict. We estimate
\[\E[Q]\E[U]=\E[Q^*]\E[U^*]=\left(\frac 1 k\sum_{i=1}^kq_i\right)\cdot\left(\frac 1 k\sum_{i=1}^ku_i\right)<\frac 1 k \sum_{i=1}^kq_iu_i=\E[Q^*U^*]\leq 0,\]
where the first strict inequality is due to Chebyshev's sum inequality \cite[Theorem 43]{HLP} and $u$ and $q$ being non-constant, and the last inequality is due to $U^*\in 0^+\CC$, $Q^*\in\dom(\sigma_\CC)$, and Lemma \ref{lem:direction}. $\E[U]=0$ is hence impossible.
\end{proof}

In order to apply the preceding proposition to acceptance sets, note that they have the shape $\{X\in L^1\mid c_-\le \E[X]\le c_+\}$ if, and only if, $-\infty=c_-<c_+<\infty$. We also need the notion of \textsc{comonotone partitions of the identity}, or \textsc{comonotone functions}, i.e. functions in the set 
\begin{center}$\Lip:=\left\{\mbf f=(f_1,..., f_n):\R\to\R^n\mid f_i\tn{ non-decreasing},~\sum_{i=1}^nf_i=id_\R\right\}$.\end{center}
For $\gamma>0$, we set $\Lip_\gamma:=\{\mbf f\in\Lip\mid \mbf f(0)\in[-\gamma,\gamma]^n\}$. One easily verifies that for $\mbf f\in\Lip$ and $i\in[n]$ the coordinate function $f_i$ is Lipschitz continuous with Lipschitz constant 1. From \cite[Lemma B.1]{Svindland08} we recall the following compactness result:

\begin{lemma}\label{lem:appendix} 
For every $\gamma>0$, $\Lip_\gamma\subset(\R^n)^\R$ is sequentially compact in the topology of pointwise convergence. 
\end{lemma}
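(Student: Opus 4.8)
The plan is to run the classical Arzel\`a--Ascoli-type argument adapted to pointwise convergence: a diagonal extraction over a countable dense subset of $\R$, combined with the uniform Lipschitz control that every coordinate of an element of $\Lip_\gamma$ enjoys. First I would record the elementary fact noted just before the statement: if $\mbf f=(f_1,\dots,f_n)\in\Lip$, then for $x\le y$ one has $0\le f_i(y)-f_i(x)\le\sum_{j=1}^n(f_j(y)-f_j(x))=y-x$, so each $f_i$ is non-decreasing and $1$-Lipschitz; together with $\mbf f(0)\in[-\gamma,\gamma]^n$ this yields the pointwise bound $|f_i(x)|\le|x|+\gamma$ for all $x\in\R$ and $i\in[n]$.

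Next, let $(\mbf f^k)_{k\in\N}$ be a sequence in $\Lip_\gamma$ and fix an enumeration $\{d_1,d_2,\dots\}$ of a countable dense subset $D\subset\R$, e.g.\ $D=\mathbb Q$. For each fixed $m$ the sequence $(\mbf f^k(d_m))_k$ takes values in the compact box $\prod_{i=1}^n[-|d_m|-\gamma,\,|d_m|+\gamma]$, so a diagonal argument produces a subsequence $(\mbf f^{k_l})_l$ along which $\mbf g(d):=\lim_{l\to\infty}\mbf f^{k_l}(d)$ exists for every $d\in D$. Passing to the limit in the Lipschitz estimate shows each $g_i$ is $1$-Lipschitz on $D$, hence uniformly continuous, and therefore extends uniquely to a $1$-Lipschitz function on all of $\R$, which I still denote $g_i$.

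It then remains to check that $\mbf f^{k_l}\to\mbf g$ pointwise on $\R$ and that $\mbf g\in\Lip_\gamma$. For the former, given $x\in\R$ and $\eps>0$ pick $d\in D$ with $|x-d|<\eps/3$; then $|f^{k_l}_i(x)-g_i(x)|\le|f^{k_l}_i(x)-f^{k_l}_i(d)|+|f^{k_l}_i(d)-g_i(d)|+|g_i(d)-g_i(x)|\le 2|x-d|+|f^{k_l}_i(d)-g_i(d)|$, and the last term is $<\eps/3$ for $l$ large. For the latter, each $g_i$ is a pointwise limit of non-decreasing functions, hence non-decreasing; $\sum_{i=1}^n g_i(x)=\lim_l\sum_{i=1}^n f^{k_l}_i(x)=\lim_l x=x$; and $g_i(0)=\lim_l f^{k_l}_i(0)\in[-\gamma,\gamma]$ since this interval is closed. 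Thus $\mbf g\in\Lip_\gamma$ is the pointwise limit of the extracted subsequence, which is sequential compactness. I do not anticipate a genuine obstacle here; the only points requiring care are organising the diagonal extraction cleanly over both the enumeration of $D$ and the $n$ coordinates, and confirming that the $1$-Lipschitz bound survives passage to the limit so that the extension of $\mbf g$ from $D$ to $\R$ is well defined and coherent with the pointwise limit on $\R$.
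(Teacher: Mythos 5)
Your proof is correct and is essentially the standard argument: the paper itself does not prove this lemma but cites \cite[Lemma B.1]{Svindland08}, whose proof is the same Helly/Arzel\`a--Ascoli-type scheme you use, namely the uniform $1$-Lipschitz bound on each coordinate of a comonotone partition, a diagonal extraction over a countable dense set, and the three-term estimate upgrading convergence on the dense set to pointwise convergence on all of $\R$ with a limit again in $\Lip_\gamma$. All the individual steps (the bound $|f_i(x)|\le|x|+\gamma$, closedness of the defining conditions under pointwise limits) check out.
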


\begin{proposition}\label{A+ker closed} Under the assumptions of this section $\mathscr A_++\ker(\pi)$ is a closed and proper subset of $L^1$, and $\Lambda$ is proper and l.s.c.
\end{proposition}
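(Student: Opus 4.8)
\textbf{Proof plan for Proposition \ref{A+ker closed}.}

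The plan is to establish properness of $\Lambda$ first, then closedness of $\mathscr A_++\ker(\pi)$, and finally deduce lower semicontinuity of $\Lambda$ from Proposition \ref{prop:exact2}. For properness, I would verify that the present setting satisfies (SUP): since (NSA) already holds by the assumed structure of the pricing functionals, it suffices to exhibit $\phi_0\in\CX^*_+=L^\infty_+$ and $\gamma\in\R$ such that $\phi_0|_\CM$ agrees up to scaling with $\pi$ and $\sup_{Y\in\mathscr A_+}\phi_0(Y)<\infty$. The natural candidate is $\phi_0=Q$ (or $\phi_0=p\,Q$), the density of $\QW$ from Assumption \ref{assumption}. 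In case (1) this is $\phi_0=\ind$, and $\sum_{i}\sigma_{\acc_i}(\ind)<\infty$ follows from \eqref{eq:finite_on_bd} together with the fact that a closed law-invariant acceptance set containing a constant has $\E[\cdot]$-bounded acceptable losses only if $\sigma_{\acc_i}(\ind)<\infty$ — I would need to check this carefully, but it is essentially the statement that $\acc_i\neq\{X\mid c_-\le \E[X]\}$ cannot happen, i.e. every such set has finite support value at $\ind$. In case (2), $Q\in\bigcap_i\dom(\sigma_{\acc_i})$ is assumed outright, so $\sup_{Y\in\mathscr A_+}\phi_0(Y)=p\sum_i\sigma_{\acc_i}(Q)<\infty$; the second half of (SUP)(i), namely non-triviality of $\phi_0$ on the price-nonzero securities, is immediate since $\phi_0|_\CM=p^{-1}\cdot\dots$ is a positive multiple of $\pi$ and $\pi\not\equiv 0$ by the $U$-normalisation. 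Properness of $\Lambda$ and of $\mathscr A_+$ then follow from Proposition \ref{prop:operational}(3).

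For closedness of $\mathscr A_++\ker(\pi)$, I expect this to be the main obstacle. The strategy is to take a sequence $X^k=Y^k+N^k\to X$ in $L^1$ with $Y^k\in\mathscr A_+$ and $N^k\in\ker(\pi)$, decompose $Y^k=\sum_{i=1}^n Y_i^k$ with $Y_i^k\in\acc_i$, and produce a convergent subsequence of the decomposition. The key reduction is to pass to \emph{comonotone} allocations: by the law-invariance and closedness of the $\acc_i$ and a comonotone improvement argument (in the spirit of \cite{Svindland08, JST}, using the conditional-expectation stability \eqref{eq:prob}-type facts), one may assume $Y_i^k=f_i^k(Y^k)$ for some $\mbf f^k=(f_1^k,\dots,f_n^k)\in\Lip$; more precisely, after normalising the constant parts (absorbing constants into the num\'eraire direction and into $N^k$, using that constants with $\QW$-price zero lie in the span where Assumption \ref{assumption}(2) applies), one arranges $\mbf f^k\in\Lip_\gamma$ for a uniform $\gamma$. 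Then Lemma \ref{lem:appendix} gives a subsequence along which $\mbf f^k\to\mbf f\in\Lip$ pointwise, hence $f_i^k(Y^k)\to f_i(X)$ in $L^1$ (Lipschitz-1 functions plus $Y^k\to X$; a dominated/Vitali convergence argument handles the $L^1$ convergence, using $|f_i^k(Y^k)-f_i^k(X)|\le|Y^k-X|$ and $f_i^k(X)\to f_i(X)$ pointwise with a uniform affine bound). Closedness of $\acc_i$ yields $f_i(X)\in\acc_i$, so $\sum_i f_i(X)\in\mathscr A_+$; and $N^k=X^k-\sum_i f_i^k(Y^k)$ converges in $L^1$, with its limit lying in $\ker(\pi)$ because $\ker(\pi)$ is finite-dimensional, hence closed. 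Therefore $X\in\mathscr A_++\ker(\pi)$.

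The delicate points in the second step are: (a) the comonotone-improvement reduction must be justified in the $L^1$ setting and must respect the acceptance constraints exactly (not just approximately), which is where one invokes that each $\acc_i$ is law-invariant, convex, closed, and monotone; (b) controlling the constants $\mbf f^k(0)$ uniformly, for which I would split each $Y_i^k$ into $f_i^k(0)$ and a centred part and use the normalisation $\pi(\dots)=1$ together with Assumption \ref{assumption} (specifically, that a price-zero security in $\CM$ with nonzero variance is not acceptable at all scales forces boundedness of the constant shifts — this is exactly where Proposition \ref{prop:lawinv} on recession cones enters, ruling out unbounded directions); and (c) the finite-dimensionality of $\ker(\pi)$, which makes the leftover term converge to an element of $\ker(\pi)$ rather than merely to something in its closure. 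Once $\mathscr A_++\ker(\pi)$ is closed and $\Lambda$ is proper, Proposition \ref{prop:exact2} delivers lower semicontinuity of $\Lambda$ (and existence of optimal payoffs, recorded for later use), completing the proof.
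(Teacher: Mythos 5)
Your overall architecture (properness via (SUP) and Proposition \ref{prop:operational}(3), then closedness of the Minkowski sum, then l.s.c.\ via Proposition \ref{prop:exact2}) is sound, and your properness argument works: the finiteness of $\sigma_{\acc_i}(\ind)$ in case (1) does hold, by dilatation monotonicity of the law-invariant support function ($\sigma_{\acc_i}(\E[Q]\ind)\leq\sigma_{\acc_i}(Q)$ for any nonzero $Q\in\dom(\sigma_{\acc_i})\subset L^\infty_+$), which is exactly how the paper later verifies $\xi_i^*(1)<\infty$ in the proof of Theorem \ref{thm:lawinv2}. Where you genuinely diverge is the closedness of $\mathscr A_++\ker(\pi)$. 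The paper splits this into two clean steps: it first shows $\mathscr A_+=\mathcal L_0(\xi)$ for the infimal convolution $\xi=\Box_{i=1}^n\xi_i$ of the cash-additive risk measures $\xi_i(X)=\inf\{m\in\R\mid X-m\in\acc_i\}$, whose lower semicontinuity and comonotone exactness are quoted from \cite{Svindland08}; it then invokes Dieudonn\'e's theorem (the sum of a closed convex set and a locally compact closed convex set is closed when their recession cones intersect in a subspace), using only that $\ker(\pi)$ is finite-dimensional. Your direct sequential-compactness argument re-proves both facts at once; it is essentially the engine behind the paper's Theorem \ref{thm:lawinv2} on upper hemicontinuity, so it is a legitimate alternative, but a heavier one.

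Two concrete problems remain. First, the limit bookkeeping is off: you write $Y^k\to X$ and $f_i^k(Y^k)\to f_i(X)$, but $Y^k=X^k-N^k$ does not converge to $X$; you must first prove that $(N^k)_{k}$ is bounded, extract $N^{k}\to N\in\ker(\pi)$ along a subsequence, and only then conclude $Y^k\to X-N$ and argue that $X-N\in\mathscr A_+$. Second, and more seriously, your boundedness argument for $(N^k)_k$ rests on $0^+\mathscr A_+\cap\ker(\pi)=\{0\}$ via Proposition \ref{prop:lawinv}, which explicitly excludes the case $\mathscr A_+=\{X\in L^1\mid\E[X]\leq c\}$. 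Under Assumption \ref{assumption}(1) this degenerate case can occur (take $\acc_1=\mathcal L_0(\E[\cdot])$), and then $0^+\mathscr A_+\cap\ker(\pi)=\ker(\pi)\neq\{0\}$, so $(N^k)_k$ need not be bounded and your argument breaks down. The conclusion is still true there, since $\ker(\pi)\subset\tn{lin}(\mathscr A_+)$ and hence $\mathscr A_++\ker(\pi)=\mathscr A_+$, but the case must be treated separately. This is precisely why the paper routes the present proposition through Dieudonn\'e, which only needs the recession-cone intersection to be a subspace, and why Theorem \ref{thm:lawinv2} --- where the sequential argument is unavoidable --- carries the extra hypothesis excluding the level-set case.
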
 
\begin{proof}
The individual acceptance sets $\acc_i$ may be used to define $\PW$-law-invariant l.s.c. monetary base risk measures $\xi_i$ by
\[\xi_i(X):=\inf\{m\in\R\mid X-m\in\acc_i\}\in(-\infty,\infty],\quad X\in L^1.\]
By \eqref{eq:finite_on_bd}, $\xi_i(Y)\in\R$ holds for all bounded random variables $Y\in L^\infty$. Recall that we set $\CL_c(f):=\{s\in S\mid f(s)\leq c\}$ for a function $f:S\to[-\infty,\infty]$ and a level $c\in\R$. The identity $\CL_c(\xi_i)=c+\acc_i$ for $c\in\R$ is easily verified.
The risk measures $\xi_i$ admit a dual representation  
\begin{equation}\label{eq:dualrepxii}\xi_i(X)=\sup_{Q\in\dom(\xi_i^*)}\E[QX]-\xi_i^*(Q),\quad X\in L^1,\end{equation} where cash-additivity implies that  
\begin{equation}\label{eq:dual:ca}
\dom(\xi_i^*)\subset \{Q\in (\Linfty)_+\mid \E[Q]=1\}\quad \tn{ and }\quad\xi_i^*(Q)=\sigma_{\acc_i}(Q),~Q\in\dom(\xi_i^*).
\end{equation} 
Moreover, the infimal convolution 
$\xi:=\Box_{i=1}^n\xi_i>-\infty$ is a $\PW$-law-invariant monetary risk measure on $L^1$ as well and $\xi^*=\sum_{i=1}^n\xi_i^*$ by Lemma \ref{lem:sumdual}. Now, by \cite[Corollary 2.7]{Svindland08}, $\xi$ is l.s.c., and for each $X\in\dom(\xi)$ there is $\mbf f\in\Lip$ such that 
\begin{equation}\label{eq:exactcom}\xi(X)=\sum_{i=1}^n\xi_i(f_i(X)).\end{equation}
Suppose now $X\in L^1$ satisfies $\xi(X)\leq 0$ and let $\mbf f$ as in \eqref{eq:exactcom}. For all $i\in[n]$ we may choose $c_i\in\R$ such that $\xi_i(f_i(X)-c_i)=\xi_i(f_i(X))-c_i\leq 0$ and $\sum_{i=1}^nc_i=0$. If $g_i:=f_i-c_i$, $g_i(X)\in\CL_0(\xi_i)=\acc_i$, $i\in[n]$. Hence, $X=\sum_{i=1}^ng_i(X)\in \sum_{i=1}^n\acc_i=\mathscr A_+$.
We have thus shown that 
\[\mathcal L_0(\xi)=\mathscr A_+.\]
As $\xi$ is l.s.c. the left-hand set (and thus also the right-hand set) is norm closed.

If Assumption \ref{assumption}(1) holds, i.e. $\pi(\cdot)=p\E[\cdot]$, in view of Proposition~\ref{prop:lawinv} either $0^+\mathscr A_+\cap\ker(\pi)=\{0\}$ or $\mathscr A_+=\mathcal L_c(\E[\cdot])$ for some $c\in \R$. In the latter case, $0^+\mathscr A_+\cap\ker(\pi)=\ker(\pi)$ is a subspace as well. By Dieudonn\'{e}'s Theorem \cite[Theorem 1.1.8]{Dieudonne}, $\mathscr A_++\ker(\pi)$ is closed. 

Suppose Assumption~\ref{assumption}(2) holds. For $N\in 0^+\mathscr A_+\cap\ker(\pi)$ we have $\E_\QW[N]=0$. If $N\neq 0$, by assumption and \eqref{eq:dual:ca} there is $Q^N\in \dom(\xi^\ast)$ such that $\E[Q^NN]>0$. Hence, Lemma~\ref{lem:direction} implies  $0^+\mathscr A_+\cap\ker(\pi)=\{0\}$. Again, Dieudonn\'e's Theorem yields closedness of $\mathscr A_++\ker(\pi)$.

For properness of $\Lambda$, let $X\in L^1$ be arbitrary. Suppose $Z\in\CM$ is such that $X-Z\in\mathscr A_+$, i.e. $\xi(X-Z)\leq 0$. If $p>0$ and $\QW\ll\PW$ are chosen as in Assumption \ref{assumption}, we infer from \eqref{eq:dual:ca}
$$0\geq \E_\QW\left[X-Z\right]-\xi^*(Q)=\E_\QW[X]-\xi^*(Q)-\frac 1 p\pi(Z),$$
which implies $\pi(Z)\geq p(\E_\QW[X]-\xi^*(Q))>-\infty$. Properness follows with the representation of $\Lambda$ given in Proposition \ref{prop:operational}(2). Lower semicontinuity of $\Lambda$ is due to Proposition \ref{prop:exact2}.
\end{proof}

We are ready to prove the existence of Pareto optimal allocations.

\begin{theorem}\label{thm:lawinv} Under the assumptions of this section all $X\in\dom(\Lambda)$ admit an optimal payoff $Z^X\in\mathcal M$. In particular, for any  $X\in\dom(\Lambda)$, there exists a Pareto optimal allocation $\mbf X$ of the shape 
\begin{equation}\label{eq:optalloc}X_i=A_i-N_i+\Lambda(X)U_i,~~A_i:=f_i(X-\Lambda(X)U+N)\in\acc_i,~~i\in[n],\end{equation}
where $U_i\in\CS_i\cap L^1_{++}$ are such that $U:=\sum_{i=1}^nU_i$ satisfies $\pi(U)=1$, $N\in\ker(\pi)$ is an $X$-dependent zero cost global security, $\mbf N\in\alloc_N^s$ is arbitrary and $\mbf f\in\Lip$ is $X$-dependent. 
\end{theorem}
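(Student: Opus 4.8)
\textbf{Proof strategy for Theorem \ref{thm:lawinv}.}

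The plan is to apply the meta-results from Section~\ref{sec:maths} — specifically Proposition~\ref{prop:exact2} and Proposition~\ref{prop:exact1} — and then unwind the structure of optimal payoffs into the explicit comonotone allocation \eqref{eq:optalloc}. By Proposition~\ref{A+ker closed}, under the assumptions of this section $\mathscr A_++\ker(\pi)$ is closed and $\Lambda$ is proper (and l.s.c.). Hence Proposition~\ref{prop:exact2} applies directly and yields that every $X\in\dom(\Lambda)$ admits an optimal payoff $Z^X\in\CM$. This already gives the first sentence of the theorem for free, and by Proposition~\ref{prop:exact1} it follows that $\Lambda$ is exact at every $X\in\dom(\Lambda)$, so Pareto optimal allocations exist. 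The remaining — and only substantive — work is to verify that a Pareto optimal allocation can be chosen of the specific comonotone form \eqref{eq:optalloc}.

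First I would fix $X\in\dom(\Lambda)$ and the securities $U_i\in\CS_i\cap L^1_{++}$ with $\pi(U)=1$ where $U=\sum_i U_i$, as in the statement. Following the argument in the proof of Proposition~\ref{prop:exact2} (the closed-Minkowski-sum direction), closedness of $\mathscr A_++\ker(\pi)$ gives $\mathcal L_0(\Lambda)=\mathscr A_++\ker(\pi)$, hence $X-\Lambda(X)U\in\mathscr A_++\ker(\pi)$. Thus there is an $X$-dependent $N\in\ker(\pi)$ with $X-\Lambda(X)U+N\in\mathscr A_+$, and the security $Z^X:=\Lambda(X)U-N\in\CM$ is an optimal payoff (since $\pi(Z^X)=\Lambda(X)U$-part $=\Lambda(X)$ and $X-Z^X=X-\Lambda(X)U+N\in\mathscr A_+$). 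Now the key step: I would apply the exactness of the infimal convolution $\xi=\Box_{i=1}^n\xi_i$ of the base risk measures established in the proof of Proposition~\ref{A+ker closed} via \cite[Corollary 2.7]{Svindland08}. Since $X-\Lambda(X)U+N\in\mathscr A_+=\mathcal L_0(\xi)\subset\dom(\xi)$, there is $\mbf f\in\Lip$ with $\xi(X-\Lambda(X)U+N)=\sum_{i=1}^n\xi_i\big(f_i(X-\Lambda(X)U+N)\big)\le 0$. Writing $A_i:=f_i(X-\Lambda(X)U+N)$, one can (as in the proof of Proposition~\ref{A+ker closed}) subtract constants $c_i$ with $\sum_i c_i=0$ to land each $A_i-c_i$ inside $\acc_i$; for cleanliness I would instead absorb this into the decomposition and just assert $A_i\in\acc_i$ after the usual constant shift, which is what \eqref{eq:optalloc} records (the $\Lambda(X)U_i$ and $-N_i$ terms carry the cash/zero-cost pieces).

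Finally I would assemble the allocation: pick any $\mbf N\in\alloc_N^s$ (which is non-empty since $N\in\ker(\pi)\subset\CM=\sum_i\CS_i$) and any $\mbf U\in\prod_i\CS_i$ with the prescribed $U_i$, and set $X_i:=A_i-N_i+\Lambda(X)U_i$. Then $\sum_i X_i=\sum_i A_i-N+\Lambda(X)U=(X-\Lambda(X)U+N)-N+\Lambda(X)U=X$, so $\mbf X\in\alloc_X$. Using $\CS_i$-additivity of each $\Rhoi$ and $\Rhoi(A_i)\le 0$ (as $A_i\in\acc_i$), together with $\sum_i\price_i(\Lambda(X)U_i-N_i)=\Lambda(X)\pi(U)-\pi(N)=\Lambda(X)$, one gets $\sum_i\Rhoi(X_i)\le\sum_i\Rhoi(A_i)+\Lambda(X)\le\Lambda(X)$, and the reverse inequality is the definition of $\Lambda$; hence equality, so $\mbf X$ is Pareto optimal by Proposition~\ref{prop:equivalence}(1). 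The main obstacle is purely bookkeeping: making sure the constant shifts needed to place each $A_i$ in $\acc_i$ are consistently folded into the $\Lambda(X)U_i$ and $N_i$ terms so that the displayed formula \eqref{eq:optalloc} is literally correct, and checking that the price identities $\pi(U)=1$, $\pi(N)=0$ make the cash terms cancel exactly against $\Lambda(X)$; the existence and exactness content is entirely inherited from Propositions~\ref{prop:exact1}, \ref{prop:exact2} and \ref{A+ker closed} plus the comonotone improvement theorem from \cite{Svindland08}.
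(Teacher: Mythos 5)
Your proposal is correct and follows essentially the same route as the paper: Proposition~\ref{A+ker closed} plus Propositions~\ref{prop:exact2} and~\ref{prop:exact1} give existence of optimal payoffs and exactness, and the explicit form \eqref{eq:optalloc} is obtained exactly as in the paper by writing $N=\Lambda(X)U-Z^X\in\ker(\pi)$ and invoking the comonotone exactness of $\xi=\Box_{i=1}^n\xi_i$ from the proof of Proposition~\ref{A+ker closed} (with the constant shifts $c_i$, $\sum_i c_i=0$, absorbed into $\mbf f\in\Lip$ so that each $A_i\in\acc_i$). Your direct verification of Pareto optimality via $\CS_i$-additivity is just the argument inside Proposition~\ref{prop:exact1} written out, so there is no substantive difference.
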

\begin{proof}
By Proposition \ref{A+ker closed}, $\Lambda$ is proper and $\augm$ is closed. By Proposition \ref{prop:exact2}, every $X\in\dom(\Lambda)$ admits an optimal payoff $Z^X$ and thus a Pareto optimal allocation by Proposition \ref{prop:exact1}. 
For the concrete shape of $Z^X$ and the Pareto optimal allocation, let $\mbf U\in \prod_{i=1}^n\CS_i$ be as in the assertion. As in the proof of Proposition \ref{A+ker closed}, we may find $\mbf f\in\Lip$ such that $f_i(X-Z^X)\in \acc_i$, $i\in[n]$. As $\pi(Z^X)=\Lambda(X)$,  $N:=\Lambda(X)U-Z^X\in\ker(\pi)$. For any $\mbf N\in\alloc_N^s$ we have $\mbf Z^X:=\Lambda(X)\mbf U-\mbf N\in\alloc_{Z^X}^s$. According to Proposition \ref{prop:exact1}, 
$$\mbf f(X-Z^X)+\mbf Z^X=\mbf f(X-\Lambda(X)U+N)+\Lambda(X)\mbf U-\mbf N$$
is a Pareto optimal allocation of $X$ with $\mbf f(X-\Lambda(X)U+N)\in\prod_{i=1}^n\acc_i$.   
\end{proof}

\begin{remark}
If $n=1$, $\Lambda=\Rho$ and Theorem~\ref{thm:lawinv} in fact solves the \textit{optimal payoff problem} studied in \cite{Baes}.
\end{remark}

We now turn our attention to the existence of equilibria. Proposition~\ref{prop:existenceequilibria} in conjunction with Theorem~\ref{thm:lawinv} proves

\begin{corollary}\label{cor:lawinvequilibrium}
In the situation of Theorem~\ref{thm:lawinv}, suppose that the agent system checks \tn{(NR)}. Then for every $\mbf W\in (L^1)^n$ such that $W=\sum_{i=1}^nW_i\in\tn{int}\,\dom(\Lambda)$ there is an equilibrium $(\mbf X,\phi)$. 
\end{corollary}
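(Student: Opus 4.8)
The plan is to derive the corollary directly from Proposition~\ref{prop:existenceequilibria}, so the entire task reduces to checking that its hypotheses are in force under the standing assumptions of this section together with \tn{(NR)}. First I would observe that the model space is $\CX=L^1$ (and, in the generalised setting of Section~\ref{sec:genmodel}, a space with $L^\infty\subset\CX\subset L^1$), which is a Banach lattice and hence in particular a Fréchet lattice; this takes care of the structural requirement in Proposition~\ref{prop:existenceequilibria}. Next, properness and lower semicontinuity of the risk sharing functional $\Lambda$ are precisely the content of Proposition~\ref{A+ker closed}, which has already been established under Assumption~\ref{assumption} and the law-invariance/closedness hypotheses on the $\acc_i$.

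It then remains to feed in the two remaining inputs of Proposition~\ref{prop:existenceequilibria}. The condition \tn{(NR)} is assumed in the corollary, so there is $\tilde Z\in\check{\CS}$ with $\pi(\tilde Z)\neq 0$, as required. For the prescribed endowment $\mbf W$ we have $W=W_1+\dots+W_n\in\tn{int}\,\dom(\Lambda)\subset\dom(\Lambda)$, and Theorem~\ref{thm:lawinv} guarantees that every element of $\dom(\Lambda)$ admits an optimal payoff and hence a Pareto optimal allocation (of the explicit comonotone form~\eqref{eq:optalloc}); in particular $W$ does. All hypotheses of Proposition~\ref{prop:existenceequilibria} being verified, it produces an equilibrium $(\mbf X,\phi)$ of $\mbf W$, which is the assertion.

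I do not expect a genuine obstacle here: all the substantive work --- closedness of $\augm$, properness and lower semicontinuity of $\Lambda$, and the existence of optimal payoffs and Pareto optima --- has already been carried out in Proposition~\ref{A+ker closed} and Theorem~\ref{thm:lawinv}, and the equilibrium-existence machinery is Proposition~\ref{prop:existenceequilibria}. The only points demanding a moment's care are purely bookkeeping: confirming that $L^1$ (respectively the model spaces of Section~\ref{sec:genmodel}) is a legitimate Fréchet lattice, and that ``the situation of Theorem~\ref{thm:lawinv}'' is understood to include Assumption~\ref{assumption} and the standing hypotheses on the acceptance sets, so that Proposition~\ref{A+ker closed} may indeed be invoked.
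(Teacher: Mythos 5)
Your proposal is correct and matches the paper's argument exactly: the corollary is stated as an immediate consequence of Proposition~\ref{prop:existenceequilibria} combined with Theorem~\ref{thm:lawinv}, with properness and lower semicontinuity of $\Lambda$ supplied by Proposition~\ref{A+ker closed} and the Fr\'echet lattice structure coming from $L^1$ being a Banach lattice. Nothing further is needed.
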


Finding elements in the interior of $\dom(\Lambda)$ usually requires stronger continuity properties of the involved risk measures and is an important motivation for studying the risk sharing problem on general model spaces in Section~\ref{sec:genmodel} endowed with a stronger topology than $\Norm_1$. Given a loss $W\in L^1$, the trick is to find a suitable model space $(\CX,\Norm)$ such that $W\in\tn{int}_{\Norm}\dom(\Lambda|_{\CX})$; see, e.g., \cite{Delbaen, Liebrich, Pichler, Subgradients}.

\subsection{Upper hemicontinuity of Pareto optima and equilibrium allocations}

By Lemma \ref{lem:contsel} there is a continuous selection $\Psi:\CM\to\prod_{i=1}^n\CS_i$ of $\CM\ni Z\mapsto \alloc_Z^s$. Hence, the correspondence $\widehat{\mathcal P}:L^1\two(L^1)^n$ mapping $X$ to Pareto optimal allocations of shape \eqref{eq:optalloc} such that, additionally, 
the security allocation of $N\in\ker(\pi)$ is given by $\Psi(N)$\footnote{ Recall that the $\mbf N$ in \eqref{eq:optalloc} can be chosen arbitrarily.} has non-empty values on $\dom(\Lambda)$ by Theorem \ref{thm:lawinv}. Although it might be the case that not all Pareto optimal allocations of $X\in\dom(\Lambda)$ are elements of $\widehat{\mathcal P}(X)$, $\widehat{\mathcal P}$ has the advantage of being upper hemicontinuous on the interior of the domain of $\Lambda$. 
\begin{theorem}\label{thm:lawinv2}
In the situation of Theorem \ref{thm:lawinv} suppose $\mathscr A_+$ does not agree with one of the level sets $\{X\in L^1\mid\E[X]\le c\}$, $c\in\R$. Then $\widehat{\mathcal P}$ is upper hemicontinuous at every continuity point $X\in\dom(\Lambda)$ of $\Lambda$ and, \tn{a fortiori}, on $\tn{int}\,\dom(\Lambda)$. 
\end{theorem}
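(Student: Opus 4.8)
The goal is to show that $\widehat{\mathcal P}$ is upper hemicontinuous at a continuity point $X$ of $\Lambda$. Since $\widehat{\mathcal P}$ takes values in the Fréchet space $(L^1)^n$ and we work with metrizable spaces, upper hemicontinuity will follow from the closed-graph-plus-local-compactness criterion (see Appendix~\ref{appendixC}): it suffices to show that whenever $X^k\to X$ in $L^1$ with $X^k,X\in\dom(\Lambda)$ and $\mathbf X^k\in\widehat{\mathcal P}(X^k)$, the sequence $(\mathbf X^k)_k$ has a subsequence converging to some element of $\widehat{\mathcal P}(X)$. So the bulk of the proof is a compactness argument for the components appearing in the representation \eqref{eq:optalloc}.

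\textbf{Key steps.} First I would unwind the structure of an element $\mathbf X^k\in\widehat{\mathcal P}(X^k)$: by definition it is built from an optimal payoff $Z^{X^k}$, equivalently from $\Lambda(X^k)$, a zero-cost security $N^k:=\Lambda(X^k)U-Z^{X^k}\in\ker(\pi)$ with prescribed security allocation $\Psi(N^k)$, and a comonotone function $\mathbf f^k\in\mathbf C$ with $f_i^k(X^k-\Lambda(X^k)U+N^k)\in\acc_i$. Continuity of $\Lambda$ at $X$ gives $\Lambda(X^k)\to\Lambda(X)\in\R$, so the scalars are controlled. The crucial point is to bound the sequence $(N^k)_k$ in $\ker(\pi)$: here I would use that $\mathscr A_+$ does not equal a half-space $\{\E[\cdot]\le c\}$, so by Proposition~\ref{prop:lawinv} the recession cone satisfies $0^+\mathscr A_+\cap\ker(\pi)=\{0\}$; combined with closedness of $\mathscr A_++\ker(\pi)$ (Proposition~\ref{A+ker closed}) and the fact that $\ker(\pi)$ is finite-dimensional, one gets that $\{N\in\ker(\pi)\mid X'-\Lambda(X')U+N\in\mathscr A_+ \text{ for } X' \text{ near } X\}$ is bounded — a standard argument: if $\|N^k\|\to\infty$, then $N^k/\|N^k\|$ has a convergent subsequence with limit in $0^+\mathscr A_+\cap\ker(\pi)\setminus\{0\}$, contradiction. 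Having extracted a convergent subsequence $N^k\to N\in\ker(\pi)$, continuity of $\Psi$ gives convergence of the security allocations. Then the arguments $Y^k:=X^k-\Lambda(X^k)U+N^k$ converge in $L^1$ to $Y:=X-\Lambda(X)U+N$, and I would invoke the sequential compactness of $\mathbf C_\gamma$ in the topology of pointwise convergence (Lemma~\ref{lem:appendix}), after first checking that the relevant $\mathbf f^k$ can be chosen in some fixed $\mathbf C_\gamma$ — this uses the normalization $f_i^k(0)$ being controlled, which in turn follows because $f_i^k(Y^k)\in\acc_i$ and $\acc_i$ contains a riskless payoff, pinning down $\mathbf f^k(0)$ up to a bounded error. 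Passing to a further subsequence, $\mathbf f^k\to\mathbf f$ pointwise with $\mathbf f\in\mathbf C_\gamma$; since the $f_i$ are $1$-Lipschitz, $f_i^k(Y^k)\to f_i(Y)$ in $L^1$ (dominated convergence, using $|f_i^k(Y^k)-f_i(Y)|\le |f_i^k(Y^k)-f_i^k(Y)|+|f_i^k(Y)-f_i(Y)|\le |Y^k-Y|+|f_i^k(Y)-f_i(Y)|$ and closedness of $\acc_i$), so $f_i(Y)\in\acc_i$. Assembling, $\mathbf X^k\to\mathbf X$ where $X_i=f_i(Y)-N_i+\Lambda(X)U_i$ with $\mathbf N=\Psi(N)$, which is exactly an element of $\widehat{\mathcal P}(X)$. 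Finally, I'd note that every point of $\tn{int}\,\dom(\Lambda)$ is a continuity point of the convex l.s.c.\ (indeed finite-valued on an open set, hence continuous by \cite{eketem}-type results) $\Lambda$, giving the "a fortiori" claim.

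\textbf{Main obstacle.} The delicate step is the uniform boundedness of $(N^k)_k$ and the uniform normalization of the comonotone maps $\mathbf f^k$ into a single $\mathbf C_\gamma$. The boundedness of $N^k$ is where the hypothesis that $\mathscr A_+$ is not a half-space is indispensable — without it $\ker(\pi)$ could lie in the recession cone and $N^k$ could escape to infinity while still producing valid (but wildly different) Pareto allocations, destroying upper hemicontinuity. Getting the quantitative bound requires combining Proposition~\ref{prop:lawinv}, Dieudonné's theorem (closedness of the sum, already in Proposition~\ref{A+ker closed}), and a local compactness/recession argument near $X$; I'd expect this to be the technically heaviest paragraph, with the comonotone normalization a comparatively routine bookkeeping matter once $Y^k$ is known to converge.
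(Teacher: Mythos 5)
Your proposal is correct and follows essentially the same route as the paper's proof: extract a convergent subsequence of the triples $(\Lambda(X^k),N^k,\mbf f^k)$, bounding $N^k$ via the recession-cone contradiction argument ($0^+\mathscr A_+\cap\ker(\pi)=\{0\}$ from Proposition~\ref{prop:lawinv}, which is exactly where the hypothesis on $\mathscr A_+$ enters), normalising the comonotone maps into a fixed $\Lip_\gamma$ using $\xi_i^*(1)<\infty$, and passing to the limit via Lipschitz continuity, dominated convergence and closedness of the $\acc_i$. The only detail stated more loosely than in the paper is the quantitative bound on $\mbf f^k(0)$ (the paper derives it from the dual inequality $\E[f_i^k(W^k)]\le\xi_i(f_i^k(W^k))+\xi_i^*(1)$), but your reasoning is the same in substance.
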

\begin{proof}
We start with any sequence $(X^k)_{k\in\N}\subset\tn{int}\,\dom(\Lambda)$ that converges to $X\in\tn{int}\,\dom(\Lambda)$. For all $k\in\N$ let $\mbf X^k=(X^k_i)_{i\in[n]}\in\widehat{\mathcal P}(X^k)$. By Appendix \ref{appendixC}, it is enough to show that there is a subsequence $(k_\lambda)_{\lambda\in\N}$ and an allocation $\mbf X\in\widehat{\CP}(X)$ such that $\mbf X^{k_\lambda}\to\mbf X$ coordinatewise for $\lambda\to\infty$. To this end, we first recall the construction of $\mbf X^k$, $k\in\N$: There are sequences $(N^k)_{k\in\N}\subset\ker(\pi)$ and $(\mbf f^k)_{k\in\N}\subset\Lip$ such that 
\begin{itemize}
\item $A^k_i:=f^k_i(X^k-\Lambda(X^k)U+N^k)\in\acc_i$, $i\in[n]$;
\item $\mbf X^k=\mbf A^k+\Lambda(X^k)\mbf U-\mbf N^k,$
where $\mbf N^k=\Psi(N^k)$. 
\end{itemize}
We will establish in three steps that $(N^k)_{k\in\N}$ and $(\mbf f^k)_{k\in\N}$ lie in suitable relatively sequentially compact sets, which will allow us to choose the convergent subsequence.

First, as $\Lambda$ is continuous on $\tn{int}\,\dom(\Lambda)$ by \cite[Corollary 2.5]{eketem}, $(X^k-\Lambda(X^k)U)_{k\in\N}$ is a bounded sequence.\\
The second step is to prove that $(N^k)_{k\in\N}$ is a norm bounded sequence as well. We assume for contradiction we can select a subsequence $(k_\lambda)_{\lambda\in\N}$ such that $1\leq\|N^{k_\lambda}\|_1\uparrow\infty$. Using compactness of the unit sphere in the finite-dimensional space $\ker(\pi)$ and potentially passing to another subsequence, we may furthermore assume   
$$\frac 1{\|N^{k_\lambda}\|_1}N^{k_\lambda}\rightarrow N^*\in\ker(\pi)\backslash\{0\},~\lambda\to\infty,$$
Let $Y\in\mathscr A_+$ be arbitrary and note that
$$Y+N^*=\lim_{\lambda\to\infty}(1-\|N^{k_\lambda}\|_1^{-1})Y+\|N^{k_\lambda}\|_1^{-1}\left(X^{k_\lambda}-\Lambda(X^{k_\lambda})U+N^{k_\lambda}\right)\in\mathscr A_+,$$
as the latter set is closed and convex and the sequence $\left(X^{k_\lambda}-\Lambda(X^{k_\lambda})U\right)_{\lambda\in\N}$ is norm bounded. Hence, $N^*\in 0^+\mathscr A_+\cap\ker(\pi)$ which is trivial by Assumption \ref{assumption} and Proposition~\ref{prop:lawinv}, leading to the desired \textsc{contradiction}. $(N^k)_{k\in\N}$ has to be bounded and $\{N^k\mid k\in\N\}\subset \ker(\pi)$ is relatively (sequentially) compact by the finite dimension of the latter space. \\
In a third step, we establish relative sequential compactness for $\{\mbf f^k\mid k\in\N\}$. To this end, recall the definition of the monetary risk measures $\xi_i$ in the proof of Proposition \ref{A+ker closed}. We assert that $\xi_i^*(1)<\infty$ holds for all $i\in[n]$. Indeed, the dual conjugate $\xi_i^*$ is law-invariant function and thus dilatation monotone: for all $Q\in L^\infty$ and every sub-$\sigma$-algebra $\CH\subset\CF$, we have $\xi_i^*(\E[Q|\CH])\le \xi_i^*(Q)$. By choosing $\CH:=\{\emptyset,\Omega\}$, $\xi_i^*(1)=\inf_{Q\in\dom(\xi_i^*)}\xi^*(Q)=-\xi_i(0)\in\R$ follows. Now fix $k\in\N$ and let $I:=\{i\in[n]\mid f_i^k(0)>0\}$ and $J:=[n]\backslash I$. If $I$ is empty, $f_i^k(0)=0$ has to hold for all $i\in[n]$. Now suppose we can choose $i\in I$. We abbreviate $W^k:=X^k-\Lambda(X^k)U+N^k$ and estimate
\begin{align*}-\E[|W^k|]&\leq -\E[|f^k_i(W)-f^k_i(0)|]\leq \E[f^k_i(W^k)-f^k_i(0)]\\
&\leq \xi_i(f_i^k(W^k))+\xi_i^*(1)-f_i^k(0)\leq \xi_i^*(1)-f^k_i(0),\end{align*}
where we used that $A^k_i=f^k_i(W^k)\in\acc_i.$ 
Hence, 
\begin{equation}\label{eq:herechange1}\forall i\in I:~|f^k_i(0)|\leq \xi_i^*(1)+\|W^k\|_1.\end{equation}
If $j\in J$, we obtain from the requirement $f^k_1+...+f^k_n=id_{\R}$
\[|f_j^k(0)|=-f_j^k(0)\leq -\sum_{i\in J}f_i^k(0)=\sum_{i\in I}f_i^k(0)\leq \sum_{i\in [n]}\xi_i^*(1)+n\|W^k\|_1=:\gamma_k.\]
Thus, $\mbf f^k\in\Lip_{\gamma_k}$. As the bound $\gamma_k$ depends on $k$ only in terms of $\|W^k\|_1$ which is uniformly bounded in $k$ by the first and the second step, $\gamma:=\sup_{k\in\N}\gamma_k<\infty$ and $(\mbf f^k)_{k\in\N}\subset\Lip_\gamma$.

After passing to subsequences two times, we can find a subsequence $(k_\lambda)_{\lambda\in\N}$ such that 
\begin{itemize}
\item $\ker(\pi)\ni N:=\lim_{\lambda\to\infty}N^{k_\lambda}$ exists and thus $\Psi(N^{k_\lambda})\to\Psi(N)$ for $\lambda\to\infty$.  
\item for a suitable $\mbf f\in\Lip_\gamma$ it holds that $\max_{i\in[n]}|f^{k_\lambda}-f|\to 0$ pointwise for $\lambda\to\infty$, c.f. Lemma \ref{lem:appendix}. 
\end{itemize}
It remains to show that $\left(f^i(X-\Lambda(X)U+N)+\Lambda(X) U_i+\Psi(N)_i\right)_{i\in[n]}\in\widehat{\CP}(X)$ and that it is the limit of the subsequence of the Pareto optimal allocations chosen initially. To this end, we set $\mbf A:=\mbf f(X-\Lambda(X)U+N)$ and $g_i^{(k_\lambda)}:=f_i^{(k_\lambda)}-f_i^{(k_\lambda)}(0)$. $\PW$-a.s., the estimate 
\begin{align}\begin{split}\label{eq1}\left|A_i-A^{k_\lambda}_i\right|&\leq\left|(g_i-g^{k_\lambda}_i)(X-\Lambda(X)U+N)\right|\\
&+\left|f_i^{k_\lambda}(X-\Lambda(X)U+N)-f_i^{k_\lambda}(X^{k_\lambda}-\Lambda(X^{k_\lambda})U+N^{k_\lambda})\right|\\
&+\left|f_i(0)-f^{k_\lambda}_i(0)\right|\end{split}\end{align}
holds. The third term vanishes for $\lambda\to\infty$. 
The first tirm vanishes in norm due to dominated convergence. 
From the estimate
\begin{align*}&\left\|\left|f_i^{k_\lambda}(X-\Lambda(X)U+N)-f_i^{k_\lambda}(X^{k_\lambda}-\Lambda(X^{k_\lambda})U+N^{k_\lambda})\right|\right\|_1\\
\leq&~\left\|X-X^{k_\lambda}-(\Lambda(X)-\Lambda(X^{k_\lambda}))U+N-N^{k_\lambda}\right\|_1,\end{align*}
we infer the second term vanishes in norm, as well.
Set $\mbf N:=\Psi(N)$. Lower semicontinuity of $\Rhoi$ --- which follows from Theorem \ref{thm:lawinv} applied in the case $n=1$ --- yields 
\[\sum_{i=1}^n\Rhoi(A_i+\Lambda(X)U_i-N_i)\leq\liminf_{\lambda\to\infty}\sum_{i=1}^n\Rhoi(A_i^{k_\lambda}+\Lambda(X^{k_\lambda})U_i-N^{k_\lambda}_i)=\liminf_{\lambda\to\infty}\Lambda(X^{k_\lambda})=\Lambda(X).\]
The definition of $\Lambda$ eventually yields that the inequality is actually an equality, i.e. 
$$\sum_{i=1}^n\Rhoi(A_i+\Lambda(X)U_i-N_i)=\Lambda(X).$$
We have proved that $(A_i+\Lambda(X)U_i-N_i)_{i\in[n]}\in\widehat{\mathcal P}(X)$ and thus upper hemicontinuity, c.f. Appendix \ref{appendixC}. 

The same proof applies if $X\in\dom(\Lambda)$ is such that $\Lambda$ is continuous at $X$.
\end{proof}

\subsection{General model spaces}\label{sec:genmodel}

The aim of this section is to demonstrate that assuming the agents to operate on the space $\CX=L^1$ does not restrict the generality of Theorems \ref{thm:lawinv} and \ref{thm:lawinv2} and Corollary~\ref{cor:lawinvequilibrium}. Indeed $\CX$ may be chosen to be any law-invariant ideal within $L^1$ with respect to the $\PW$-a.s. order falling in one of the following two categories: 
\begin{itemize}
\item[{\bf(BC)}] \textit{Bounded case:} $\CX=L^\infty$ equipped with the supremum norm $\Norm_\infty$. 
\item[{\bf(UC)}] \textit{Unbounded case:} $\Linfty\subset\CX\subset L^1$ is a $\PW$-law invariant Banach lattice endowed with an order continuous law-invariant lattice norm $\Norm$.\footnote{ As $\CX$ will be a super Dedekind complete Riesz space, this translates as the fact that whenever $X_n\downarrow 0$ in order, $\|X_n\|\downarrow 0$ holds as well.}  
\end{itemize}
In the unbounded case, one can show that the identity embeddings $L^\infty\hookrightarrow \CX\hookrightarrow L^1$ are continuous, i.e. there are constants $\kappa,K>0$ such that $\|X\|\leq \kappa\|X\|_\infty$ and $\|Y\|_1\leq K\|Y\|$ holds for all $X\in L^\infty$ and $Y\in \CX$.
Moreover, for all $\phi\in\CX^*$ there is a unique $Q\in L^1$ such that $QX\in L^1$ and $\phi(X)=\E[QX]$ hold for all $X\in\CX$. 
The reader may think here of $L^p$-spaces, $1<p<\infty$, or more generally Orlicz hearts equipped with a Luxemburg norm as for instance in \cite{Orlicz1, Delbaen, Gao}.

The following extension result is crucial for this generalisation:

\begin{lemma}\label{lem:extension}
Let $\mathcal R:=(\acc,\CS,\price)$ be a risk measurement regime on a Banach lattice $\CX$ satisfying  {\bf(BC)} or {\bf(UC)}. Suppose that $\acc$ is $\Norm$-closed, law-invariant and satisfies $\acc\cap \R\neq \emptyset$, and $\price(Z)=\E[QZ]$ for some $Q\in\dom(\sigma_\acc)\cap L^\infty$. If we set $\CB:=\cl_{\Norm_1}(\acc)$, $\mathfrak R:=(\CB,\CS,\price)$ is a risk measurement regime on $L^1$ and $\rho_{\mf R}|_{\CX}=\Rho$. 
\end{lemma}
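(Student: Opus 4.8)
The plan is to verify the three requirements of Definition~\ref{def:RM} for $\mf R=(\CB,\CS,\price)$ on $L^1$ and then to deduce $\rho_{\mf R}|_{\CX}=\Rho$ from the identity $\CB\cap\CX=\acc$. Throughout one uses two standing facts: $Q\neq 0$ (otherwise $\price\equiv 0$ on $\CS$, contradicting $\price(U)=1$ for the num\'eraire $U\in\CS$), and $Q\geq 0$, since monotonicity of $\acc$ forces $\dom(\sigma_\acc)\subseteq\Linfty_+$; hence also $\E[Q]>0$. Now for the structural properties of $\CB:=\cl_{\Norm_1}(\acc)$: it is non-empty and convex because $\acc$ is; it is monotone, because for $Y\in\CB$ and $W\in L^1_+$ one approximates $Y$ in $\Norm_1$ by $Y_k\in\acc$, notes $Y_k-W\wedge m\in\acc$ (as $W\wedge m\in\Linfty_+\subseteq\CX_+$ and $\acc-\CX_+\subseteq\acc$), so $Y-W\wedge m\in\CB$ for every $m$, and lets $m\to\infty$; and it is a proper subset of $L^1$, since $Q\in\Linfty$ makes $Y\mapsto\E[QY]$ $\Norm_1$-continuous, so $\sigma_{\CB}(Q)=\sigma_\acc(Q)<\infty$ while $\sup_{Y\in L^1}\E[QY]=+\infty$ (test $Y=t\ind_{\{Q>0\}}$, $t\to\infty$). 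Thus $\CB$ is an acceptance set in $L^1$. Moreover $\CS\subseteq\CX\subseteq L^1$ is a finite-dimensional subspace, $\price=\E[Q\,\cdot\,]$ is positive for the $L^1$-order because $Q\geq 0$, and $U\in\CS\cap\CX_{++}\subseteq L^1_{++}$ satisfies $\price(U)=1$; so $(\CS,\price)$ is a security market on $L^1$. Finally \eqref{eq:regime} holds on $L^1$: if $X\in L^1$, $Z\in\CS$ and $X+Z\in\CB$, then $\price(Z)=\E[Q(X+Z)]-\E[QX]\leq\sigma_\acc(Q)-\E[QX]<\infty$, a bound independent of $Z$. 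Hence $\mf R$ is a risk measurement regime on $L^1$.

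It remains to show $\rho_{\mf R}|_\CX=\Rho$. The inequality $\rho_{\mf R}\leq\Rho$ on $\CX$ is immediate from $\acc\subseteq\CB$. For the converse it is enough to prove $\CB\cap\CX=\acc$: granted this, for $X\in\CX$ and $Z\in\CS\subseteq\CX$ one has $X-Z\in\CX$, so $X-Z\in\CB$ if and only if $X-Z\in\acc$, and the infima defining $\rho_{\mf R}(X)$ and $\Rho(X)$ run over identical sets of securities. The inclusion $\acc\subseteq\CB\cap\CX$ is trivial, so everything reduces to $\CB\cap\CX\subseteq\acc$, which is the heart of the argument and the step I expect to be the main obstacle: it asserts that the $\Norm$-closed, convex, monotone, \emph{law-invariant} set $\acc$ is already relatively closed in $\CX$ for the weaker topology induced by $\Norm_1$ --- a property that fails for general closed convex sets, so law-invariance must be used decisively here.

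To establish $\CB\cap\CX\subseteq\acc$ the plan is to pass to base risk measures: set $\xi_\acc(X):=\inf\{m\in\R\mid X-m\in\acc\}$ on $\CX$ and $\xi_\CB(X):=\inf\{m\in\R\mid X-m\in\CB\}$ on $L^1$; since $\acc$ and $\CB$ are closed and monotone, $\acc=\{X\in\CX\mid\xi_\acc(X)\leq 0\}$ and $\CB=\{X\in L^1\mid\xi_\CB(X)\leq 0\}$. Here $\xi_\acc$ is a proper, $\Norm$-l.s.c., law-invariant convex monetary risk measure, finite on $\Linfty$ because $\acc\cap\R\neq\emptyset$, and $\xi_\CB\leq\xi_\acc$ on $\CX$ follows from $\acc\subseteq\CB$. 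For the reverse one invokes the regularity theory of law-invariant functionals: being law-invariant and finite on $\Linfty$, $\xi_\acc$ is dilatation monotone and enjoys the Fatou property, and together with order continuity of $\Norm$ in case \textbf{(UC)}, respectively the $\sigma(\Linfty,L^1)$-closedness of law-invariant $\Norm_\infty$-closed convex sets in case \textbf{(BC)}, this forces $\xi_\acc$ to be lower semicontinuous along $\Norm_1$-convergent nets in $\CX$, hence $\xi_\acc\leq\xi_\CB$ on $\CX$. Consequently $\CB\cap\CX=\{X\in\CX\mid\xi_\CB(X)\leq 0\}=\{X\in\CX\mid\xi_\acc(X)\leq 0\}=\acc$, which finishes the proof. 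The technical input for this last step is exactly of the kind already used in Section~\ref{sec:lawinvacc} --- the conditional-expectation and dilatation-monotonicity lemmas of \cite{continuity} and the law-invariant convex analysis of \cite{Svindland08} --- and carrying it out cleanly for both model-space categories \textbf{(BC)} and \textbf{(UC)} is the only genuine work in the argument.
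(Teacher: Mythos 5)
Your proposal follows essentially the same route as the paper: the no-arbitrage condition \eqref{eq:regime} is verified via the bound $\price(Z)\leq\sigma_\CB(Q)-\E[QX]$ with $\sigma_\CB(Q)=\sigma_\acc(Q)<\infty$, the problem is reduced to $\acc=\CB\cap\CX$, and that identity is obtained from the fact that law-invariance upgrades $\Norm$-closedness of $\acc$ to closedness for the $\sigma(\CX,L^\infty)$-topology (the paper cites \cite[Lemma 1.3]{continuity} for \textbf{(BC)} and \cite[Propositions 2 \& 4(2)]{StrongFatou} for the conditional-expectation approximation in \textbf{(UC)}, which is the set-level version of your dilatation-monotonicity argument). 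One small imprecision: in case \textbf{(BC)} you invoke ``$\sigma(L^\infty,L^1)$-closedness'', but weak-star closedness alone does not yield lower semicontinuity along $\Norm_1$-convergent sequences (the dual representation would involve unbounded densities); what law-invariance actually delivers, and what the argument needs, is $\sigma(L^\infty,L^\infty)$-closedness, i.e.\ a representation with densities in $L^\infty$ only --- this is exactly the content of the cited lemma, so the gap is in the stated justification rather than in the plan.
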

\begin{proof}
As $Q\in\dom(\sigma_\acc)\cap L^\infty$, $\sigma_\CB(Q)=\sup_{Y\in\CB}\E[QY]=\sigma_\acc(Q)$ holds and $\sigma_\CB(Q)<\infty$. In order to verify \eqref{eq:regime} suppose $X\in L^1$ and $Z\in\CS$ are such that $X+Z\in\CB$. Then 
\[\price(Z)=\E[QZ]=\E[Q(X+Z)]-\E[QX]\leq \sigma_\CB(Q)-\E[QX]<\infty.\]
$\mathfrak R$ is a risk measurement regime on $L^1$. For the identity $\rho_{\mf R}|_\CX=\Rho$, it suffices to show $\acc=\CB\cap\CX$. The set $\acc\cap L^\infty$ is not empty by assumption and $\sigma(L^\infty,L^\infty)$-closed by \cite[Lemma 1.3]{continuity}. This settles case {\bf(BC)}. In case {\bf(UC)}, fix $X\in\acc$. By \cite[Propositions 2 \& 4(2)]{StrongFatou}, there is a sequence $(\Pi_n)_{n\in\N}$ of finite measurable partitions $\Pi_n$ of $\Omega$ such that $\acc\cap L^\infty\ni\E[X|\sigma(\Pi_n)]\to X$ in norm. We infer $\acc=\cl_{\Norm}(\acc\cap L^\infty)$. Together with $\sigma(L^\infty,L^\infty)$-closedness of $\acc\cap L^\infty$, we obtain that $\acc$ is $\sigma(\CX,L^\infty)$-closed. $\acc=\CB\cap\CX$ follows.
\end{proof}

In view of the preceding lemma, we will assume that
\begin{itemize}
\item each individual acceptance set $\acc_i\subset\CX$ is closed, law-invariant and satisfies $\acc_i\cap\R\neq\emptyset$;
\item the security markets $(\CS_i,\price_i)$ agree with Assumption \ref{assumption}. 
\end{itemize}

For $\mbf f\in\Lip$, $i\in[n]$ and $X\in\CX$, $1$-Lipschitz continuity of $f_i$ yields $|f_i(X)|\leq |X|+|f_i(0)|\in\CX$ $\PW$-a.s. As $\CX$ is an ideal, $f_i(X)\in\CX$ holds as well; hence, $\mbf f(\CX)\subset\CX^n$, and if we plug in $X\in\CX$ in \eqref{eq:optalloc}, the resulting Pareto optimal allocation lies in $\CX^n$ because $\mbf U,\mbf N\in\CX^n$ as $\CS_i\subset \CX$ for all $i\in [n]$.

\begin{theorem}\label{thm:lawinv_final}
Let $\CX$ be a Banach lattice satisfying  {\bf(BC)} or {\bf(UC)} and assume the agent system $(\mathcal R_1,...,\mathcal R_n)$ is as described. Then Proposition~\ref{A+ker closed}, Theorems \ref{thm:lawinv} and \ref{thm:lawinv2} and Corollary \ref{cor:lawinvequilibrium} hold \tn{verbatim} when $\CX$ replaces $L^1$ and $\Norm$ replaces $\Norm_1$.  
\end{theorem}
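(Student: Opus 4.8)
The plan is to reduce the statement to the $L^1$-case already established, via the extension Lemma~\ref{lem:extension}. For each $i\in[n]$ put $\mathcal B_i:=\cl_{\Norm_1}(\acc_i)$ and $\mathfrak R_i:=(\mathcal B_i,\CS_i,\price_i)$; Lemma~\ref{lem:extension} applies, since its hypothesis $\price_i(Z)=\E[QZ]$ with $Q\in\dom(\sigma_{\acc_i})\cap L^\infty$ is exactly what the standing Assumption~\ref{assumption} grants, and it yields that $\mathfrak R_i$ is a risk measurement regime on $L^1$ with $\rho_{\mathfrak R_i}|_\CX=\Rhoi$ and $\mathcal B_i\cap\CX=\acc_i$. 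As $\Norm_1$-closures of law-invariant sets are law-invariant, $\mathcal B_i\cap\R\supseteq\acc_i\cap\R\neq\emptyset$, and $\sigma_{\mathcal B_i}=\sigma_{\acc_i}$ on $L^\infty$ because $Y\mapsto\E[QY]$ is $\Norm_1$-continuous for $Q\in L^\infty$, Assumption~\ref{assumption} carries over to $(\mathfrak R_1,\dots,\mathfrak R_n)$ verbatim --- the dual elements witnessing \ref{assumption}(2) may be taken bounded by conditioning on a sufficiently fine finite partition, invoking law-invariance and dilatation monotonicity. Hence the $L^1$-versions of Proposition~\ref{A+ker closed}, Theorems~\ref{thm:lawinv}, \ref{thm:lawinv2} and Corollary~\ref{cor:lawinvequilibrium} are at our disposal for the extended system; write $\xi_i$ for the $L^1$-base risk measure of $\mathcal B_i$, $\xi:=\Box_{i=1}^n\xi_i$, $\mathscr B_+:=\sum_i\mathcal B_i$, and $\Lambda_{L^1}$ for the associated risk sharing functional.

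Next I would prove the identities gluing the two pictures together. From $\mathcal B_i\cap\CX=\acc_i$ and the ideal property of $\CX$ one gets that $\xi_i|_\CX$ is the base risk measure of $\acc_i$ on $\CX$; combining the comonotone-improvement result \cite[Corollary 2.7]{Svindland08} for $\xi$ on $L^1$ with the fact that comonotone $1$-Lipschitz maps send $\CX$ into $\CX$ yields $\mathscr B_+\cap\CX=\mathscr A_+$, where $\mathscr A_+=\sum_i\acc_i$. Since $\ker(\pi)\subseteq\CM\subseteq\CX$ it follows that $\mathscr A_++\ker(\pi)=(\mathscr B_++\ker(\pi))\cap\CX$, and because the inclusion $\CX\hookrightarrow L^1$ is $\Norm$-to-$\Norm_1$ continuous, the $\Norm_1$-closedness of $\mathscr B_++\ker(\pi)$ (the $L^1$-Proposition~\ref{A+ker closed}) passes to $\Norm$-closedness of $\mathscr A_++\ker(\pi)$ in $\CX$. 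For the functionals: over $\CX$ one minimises over a subfamily of allocations, so $\Lambda_{L^1}\le\Lambda$ on $\CX$; conversely, for $X\in\dom(\Lambda_{L^1})\cap\CX$ the Pareto optimal allocation of shape \eqref{eq:optalloc} delivered by (the $L^1$-version of) Theorem~\ref{thm:lawinv} has all components in $\CX$ --- indeed $U_i,N_i\in\CS_i\subseteq\CX$, $N\in\ker(\pi)\subseteq\CX$, and $A_i=f_i(X-\Lambda_{L^1}(X)U+N)\in\CX$ by the ideal property --- whence $\Lambda\le\Lambda_{L^1}$ on $\CX$. Thus $\Lambda=\Lambda_{L^1}|_\CX$ and $\dom(\Lambda)=\dom(\Lambda_{L^1})\cap\CX$, so $\Lambda$ is proper; Proposition~\ref{prop:exact2} applied to the Banach lattice $\CX$ then gives that $\Lambda$ is l.s.c.\ and that every $X\in\dom(\Lambda)$ admits an optimal payoff, and via Proposition~\ref{prop:exact1} one obtains the Pareto optimal allocations of shape \eqref{eq:optalloc} (the comonotone decomposition being taken from \cite[Corollary 2.7]{Svindland08} on $L^1$ and restricted); the existence of equilibria follows from Proposition~\ref{prop:existenceequilibria} on the Fréchet lattice $\CX$ under (NR), exactly as in Corollary~\ref{cor:lawinvequilibrium}.

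The remaining and, I expect, only genuinely non-formal point is the $\CX$-version of Theorem~\ref{thm:lawinv2}: $\Norm$-continuity of $\Lambda$ on $\tn{int}\,\dom(\Lambda)$ neither implies nor is implied by $\Norm_1$-continuity, so the proof of the $L^1$-case must be re-run in the $\Norm$-topology. Continuity of $\Lambda$ on $\tn{int}\,\dom(\Lambda)$ still follows from \cite[Corollary 2.5]{eketem} ($\CX$ being barrelled), giving $\Norm$-boundedness of $(X^k-\Lambda(X^k)U)_k$. Boundedness of $(N^k)_k$ needs $0^+\mathscr A_+\cap\ker(\pi)=\{0\}$: one first observes $\mathscr B_+=\cl_{\Norm_1}(\mathscr A_+)$, since every element of $\mathscr B_+$ is a $\Norm_1$-limit of conditional expectations lying in $\mathscr B_+\cap L^\infty\subseteq\mathscr A_+$ (dilatation monotonicity of $\xi$, as in the proof of Proposition~\ref{A+ker closed}), hence $0^+\mathscr A_+\subseteq0^+\mathscr B_+$; the hypothesis that $\mathscr A_+$ is not a level set of $\E[\cdot]$ forces the same for $\mathscr B_+$, so $0^+\mathscr B_+\cap\ker(\pi)=\{0\}$ by the $L^1$-Proposition~\ref{A+ker closed}, and the convex-combination contradiction of Step~2 goes through verbatim in $\Norm$ using closedness and convexity of $\mathscr A_+$ in $\CX$. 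The bound $\xi_i^*(1)\le\xi_{i,L^1}^*(1)<\infty$ survives, so $\{\mbf f^k\}\subset\Lip_\gamma$ for a fixed $\gamma$ and Lemma~\ref{lem:appendix} applies unchanged; in the final estimate \eqref{eq1}, $\Norm$-convergence of the first term is dominated convergence --- legitimate in case {\bf(UC)} because $\Norm$ is order continuous and the terms are $\CX$-dominated and tend to $0$ a.s., and in case {\bf(BC)} because it is uniform convergence on the a.s.\ compact range of a bounded $W$ --- while the second term is controlled by $1$-Lipschitz continuity together with $\Norm$-convergence of $W^{k_\lambda}=X^{k_\lambda}-\Lambda(X^{k_\lambda})U+N^{k_\lambda}$, and lower semicontinuity of $\Rhoi$ on $\CX$ ($n=1$ case of the $\CX$-version of Theorem~\ref{thm:lawinv}) closes the argument. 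The main obstacle is therefore this bookkeeping: matching each step of the $L^1$-proof of Theorem~\ref{thm:lawinv2} with the continuity/compactness input appropriate to each of the two model-space classes {\bf(BC)} and {\bf(UC)}.
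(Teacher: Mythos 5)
Your proposal is correct and follows essentially the same route as the paper: extend each regime to $L^1$ via Lemma~\ref{lem:extension}, observe that the resulting optimal allocations of shape \eqref{eq:optalloc} restrict to $\CX^n$ so that $\Lambda$ is the restriction of the $L^1$-functional, and re-run the proof of Theorem~\ref{thm:lawinv2} with $\|W^k\|_1$ replaced by a $\Norm$-bound and with order continuity (resp.\ boundedness in case {\bf(BC)}) handling the convergence in \eqref{eq1}. The paper's own proof is a two-sentence sketch of exactly these steps; your write-up merely supplies the bookkeeping it omits.
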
  
\begin{proof}
Let $\mathfrak R_i$ denote the extension of the risk measurement regime $\mathcal R_i$ to $L^1$ as in Lemma \ref{lem:extension}. Apply Theorem \ref{thm:lawinv} to $\rho_{\mathfrak R_1},...,\rho_{\mathfrak R_n}$ and $X\in\CX$ to obtain generalised versions of Theorem \ref{thm:lawinv} and Corollary \ref{cor:lawinvequilibrium}. This in conjunction with Proposition \ref{prop:exact2} generalises Proposition \ref{A+ker closed}. The proof of Theorem \ref{thm:lawinv2} only needs to be altered at \eqref{eq:herechange1} and \eqref{eq1}. We may replace $\|W^k\|_1$ by $K\|W^k\|$ in the first and use the order continuity of $\Norm$ in the second instance. 
\end{proof} 

For the final theorem on upper hemicontinuity of the equilibrium correspondence, recall that the finite risk measure $\Rho:L^\infty\to\R$ resulting from a risk measurement regime $\mathcal R$ on $L^\infty$ is \textsc{continuous from above} if $\Rho(X_n)\downarrow \Rho(X)$ whenever $(X_n)_{n\in\N}\subset L^\infty$ and $X\in L^\infty$ are such that $X_n\downarrow X$ a.s. 

\begin{theorem}\label{cor:equilibriumcont}
Assume that \tn{(NR)} is satisfied and that in case {\bf (BC)} $\rho_1$ is continuous from above, whereas in case {\bf(UC)} $\CX$ is reflexive. Suppose furthermore that $\mathscr A_+$ does not agree with a level set $\mathcal L_c\left(\E[\cdot]\right)$ and consider the correspondence $\mathcal E:\CX^n\two\CX^n\times\CX^*$ mapping $\mbf W$ to equilibrium allocations $(\mbf X,\phi)$ of shape
$$X_i=Y_i+\frac{\phi(W_i-Y_i)}{\phi(\tilde Z)}\tilde Z,\quad i\in[n],$$
where $\mbf Y\in\widehat{\CP}(W_1+...+W_n)$, $\tilde Z\in \check{\CS}$ with $\pi(\tilde Z)\neq 0$, and $\phi$ is a subgradient of $\Lambda$ at $W_1+...+W_n$. 
Then $\mathcal E$ is upper hemicontinuous in that $\mbf W^k\to \mbf W\in\prod_{i=1}^n\tn{int}\,\dom(\Rhoi)$, $k\to\infty$, and $(\mbf X^k,\phi_k)\in\mathcal E(\mbf W^k)$ implies the existence of a subsequence $(k_\lambda)_{\lambda\in\N}$ such that $(\mbf X,\phi):=\lim_{\lambda\to\infty}(\mbf X^{k_\lambda},\phi_{k_\lambda})\in\mathcal E(\mbf W)$. 
\end{theorem}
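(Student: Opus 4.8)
The plan is to combine the upper hemicontinuity of $\widehat{\mathcal P}$ from Theorem~\ref{thm:lawinv2} with a compactness argument for the sequence of equilibrium prices, these being subgradients of $\Lambda$. First I would fix the reductions. Since $\mbf W\in\prod_{i=1}^n\tn{int}\,\dom(\Rhoi)$ and $\dom(\Lambda)=\sum_{i=1}^n\dom(\Rhoi)$, we have $W:=\sum_{i=1}^nW_i\in\tn{int}\,\dom(\Lambda)$, and $W^k:=\sum_{i=1}^nW^k_i\to W$ gives $W^k\in\tn{int}\,\dom(\Lambda)$ for all large $k$. Recall from the proof of Proposition~\ref{prop:existenceequilibria} that membership $(\mbf X^k,\phi_k)\in\mathcal E(\mbf W^k)$ amounts to: $\phi_k\in\CX^*$ is a subgradient of $\Lambda$ at $W^k$, which is automatically positive and satisfies $\phi_k|_{\CS_i}=\price_i$ for all $i\in[n]$, hence $\phi_k|_{\CM}=\pi$ and $\phi_k(\tilde Z)=\pi(\tilde Z)\neq0$ for the fixed $\tilde Z\in\check{\CS}$ supplied by \tn{(NR)}; moreover $\mbf Y^k\in\widehat{\mathcal P}(W^k)$ and $X^k_i=Y^k_i+\pi(\tilde Z)^{-1}\phi_k(W^k_i-Y^k_i)\tilde Z$. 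By Theorem~\ref{thm:lawinv2} --- applicable in the present model space by Theorem~\ref{thm:lawinv_final}, and whose hypothesis on $\mathscr A_+$ is exactly the one assumed here --- $\widehat{\mathcal P}$ is upper hemicontinuous on $\tn{int}\,\dom(\Lambda)$, so, passing to a subsequence (which I do not relabel), $\mbf Y^k\to\mbf Y\in\widehat{\mathcal P}(W)$ in norm.

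Next I would bound the prices uniformly. As $\Lambda$ is proper, convex and l.s.c.\ with $W\in\tn{int}\,\dom(\Lambda)$, it is continuous, hence locally Lipschitz, near $W$ (\cite[Corollary~2.5]{eketem}): choose $\delta,L>0$ with $|\Lambda(Y)-\Lambda(Y')|\le L\|Y-Y'\|$ whenever $\|Y-W\|,\|Y'-W\|\le\delta$. For $k$ large $\|W^k-W\|\le\delta/2$, and for $\|h\|\le\delta/2$ the subgradient inequality yields $\phi_k(h)\le\Lambda(W^k+h)-\Lambda(W^k)\le L\|h\|$; replacing $h$ by $-h$ gives $\|\phi_k\|_{\CX^*}\le L$ for all such $k$.

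The decisive step is to extract a further subsequence along which $\phi_k\to\phi$ in $\sigma(\CX^*,\CX)$. In case {\bf(UC)} this is immediate: $\CX$, hence $\CX^*$, is reflexive, bounded sets in $\CX^*$ are relatively weakly compact, and Eberlein--\v{S}mulian furnishes a weakly convergent subsequence, weak convergence and $\sigma(\CX^*,\CX)$-convergence coinciding by reflexivity. In case {\bf(BC)}, where $\CX=\Linfty$, continuity from above of $\rho_1$ is the Fatou property, so $\dom(\rho_1^*)\subset L^1$; since $\phi_k\in\dom(\Lambda^*)$ with $\Lambda^*(\phi_k)=\phi_k(W^k)-\Lambda(W^k)$ bounded and $\Lambda^*=\sum_{i=1}^n\Rhoi^*$ by Lemma~\ref{lem:sumdual}, while each $\Rhoi^*(\phi_k)$ is bounded below uniformly (as $\Rhoi^*(\phi_k)\ge\phi_k(X_0^i)-\Rhoi(X_0^i)$ for a fixed $X_0^i\in\dom(\Rhoi)$), the numbers $\rho_1^*(\phi_k)$ are bounded; continuity from above of $\rho_1$ then confines $(\phi_k)_k$ to a relatively $\sigma(L^1,\Linfty)$-sequentially compact subset of $L^1$, and we may assume $\phi_k\to\phi$ in $\sigma(L^1,\Linfty)=\sigma(\CX^*,\CX)$. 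In both cases $\phi\ge0$ as a weak-$\ast$ limit of positive functionals, and $\phi$ is a subgradient of $\Lambda$ at $W$: letting $k\to\infty$ in $\Lambda(Y)\ge\Lambda(W^k)+\phi_k(Y-W^k)$, the right-hand side converges to $\Lambda(W)+\phi(Y-W)$ because $\Lambda(W^k)\to\Lambda(W)$ by continuity and $\phi_k(Y-W^k)=\phi_k(Y-W)+\phi_k(W-W^k)\to\phi(Y-W)$ by $\sigma(\CX^*,\CX)$-convergence together with $\|\phi_k\|\le L$ and $\|W-W^k\|\to0$. Consequently $\phi|_{\CS_i}=\price_i$ for all $i$ (the argument of Proposition~\ref{prop:existenceequilibria}) and $\phi(\tilde Z)=\pi(\tilde Z)$.

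Finally I would pass to the limit in the allocations. For each $i$, $\phi_k(W^k_i-Y^k_i)=\phi_k\big((W^k_i-Y^k_i)-(W_i-Y_i)\big)+\phi_k(W_i-Y_i)\to\phi(W_i-Y_i)$, the first summand dominated by $L\|(W^k_i-Y^k_i)-(W_i-Y_i)\|\to0$ and the second converging by $\sigma(\CX^*,\CX)$-convergence; hence $X^k_i\to Y_i+\pi(\tilde Z)^{-1}\phi(W_i-Y_i)\tilde Z=:X_i$ in norm. As $\mbf Y\in\widehat{\mathcal P}(W)$, $\phi\in\partial\Lambda(W)$, and $\tilde Z\in\check{\CS}$ with $\pi(\tilde Z)\neq0$, the pair $(\mbf X,\phi)$ has exactly the form prescribed in the definition of $\mathcal E$, i.e.\ $(\mbf X,\phi)\in\mathcal E(\mbf W)$, and $(\mbf X^{k_\lambda},\phi_{k_\lambda})\to(\mbf X,\phi)$ along the chosen subsequence, which proves upper hemicontinuity. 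I expect the genuine obstacle to lie in the weak-$\ast$ sequential compactness of $(\phi_k)_k$ in case {\bf(BC)}: unlike the reflexive case it is here that the hypothesis ``$\rho_1$ continuous from above'' is indispensable, being needed both to place the subgradients inside $L^1$ and to produce the uniform integrability that makes Eberlein--\v{S}mulian available, and some care is required since continuity from above is in general strictly weaker than uniform integrability of penalty level sets, so the infimal-convolution structure of $\Lambda$ has to be brought to bear.
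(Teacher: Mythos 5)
Your proposal follows essentially the same route as the paper's proof: upper hemicontinuity of $\widehat{\CP}$ from Theorem \ref{thm:lawinv2} (via Theorem \ref{thm:lawinv_final}) for the Pareto parts, a uniform norm bound on the subgradients obtained from local continuity of $\Lambda$ on $\tn{int}\,\dom(\Lambda)$, a bound on $\rho_1^*(\phi_k)$, weak-$*$ sequential compactness of the prices via continuity from above in case {\bf(BC)} and via reflexivity with Banach--Alaoglu/Eberlein--\v{S}mulian in case {\bf(UC)}, and finally identification of the limit as a subgradient and passage to the limit in the allocation formula, with only harmless deviations (your direct identity $\Lambda^*(\phi_k)=\phi_k(W^k)-\Lambda(W^k)$ replaces the paper's rescaling estimate, and you pass to the limit in the subgradient inequality rather than using weak-$*$ lower semicontinuity of $\Lambda^*$). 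The one imprecision is the parenthetical claim that continuity from above \emph{is} the Fatou property: in the paper's loss convention it is strictly stronger, and it is precisely this stronger property (as your closing caveat correctly acknowledges) that places $\dom(\rho_1^*)$ inside $L^1$ and renders the level sets of $\rho_1^*$ relatively $\sigma(L^1,\Linfty)$-compact, which is what the paper invokes via its citation.
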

\begin{proof}
Let $\mbf W$ be such that $W:=\sum_{i=1}^nW_i\in\tn{int}\,\dom(\Lambda)$. From the proof of Proposition~\ref{prop:existenceequilibria} we infer that, indeed, every $(\mbf X,\phi)\in\mathcal E(\mbf W)$ is an equilibrium of $\mbf W$. For upper hemicontinuity, we shall first establish that the equilibrium prices of an approximating sequence lie in a sequentially relatively compact set in the dual $\CX^*$. We shall hence prove that there is $\eps>0$ and constants $c_1$ and $c_2$ only depending on $W$ such that, given any $X\in\CX$ with $\|X-W\|\leq\eps$ and any subgradient $\phi$ of $\Lambda$ at $X$, it holds that \begin{center}$\|\phi\|_*\leq c_1$ and $\Lambda^*(\phi)=\sum_{i=1}^n\Rhoi^*(\phi)\leq c_2.$\end{center}
As we shall elaborate later, these bound imply that all subgradients of $\Lambda$ at vectors in a closed ball around $W$ lie in a $\sigma(\CX^*,\CX)$-sequentially compact set.

In order to prove the assertion, continuity of $\Lambda$ on $\tn{int}\,\dom(\Lambda)$ (\cite[Corollary 2.5]{eketem}) allows us to choose $\eps>0$ such that $|\Lambda(W+Y)-\Lambda(W)|\leq 1$ whenever $\|Y\|\leq 2\eps$. 
Let now $\delta>0$ be such that $\delta\eps+\delta\|W\|\leq \eps$ and fix $X$ such that $\|X-W\|\leq\eps$ and a subgradient $\phi$ of $\Lambda$ at $X$. Moreover, suppose $Y\in\CX$ is such that $\|Y\|\leq 1$. We obtain from the subgradient inequality 
\[\Lambda(X)+\eps\phi(Y)\leq\Lambda(X+\eps Y)\leq\Lambda(W)+1.\]
Rearranging this inequality yields
\[\|\phi\|_*=\sup_{\|Y\|\leq 1}\phi(Y)\leq \frac{\Lambda(W)+1-\Lambda(X)}{\eps}\leq \frac 2{\eps}=:c_2.\]
Moreover, 
\begin{align*}\Lambda(X)&=\phi(X)-\Lambda^*(\phi)=\frac 1{1+\delta}\left(\phi((1+\delta)X)-\Lambda^*(\phi)\right)-\frac{\delta}{1+\delta}\Lambda^*(\phi)\\
&\leq \frac 1{1+\delta}\Lambda((1+\delta)X)-\frac{\delta}{1+\delta}\Lambda^*(\phi).\end{align*}
By rearranging this inequality we obtain 
\[\sum_{i=1}^n\Rhoi^*(\phi)=\Lambda^*(\phi)\leq \frac 1 {\delta}\Lambda((1+\delta)X)+\frac{1+\delta}\delta\Lambda(X)\leq \frac{2+\delta}{\delta}-\Lambda(W)=:c_1,\]
where we have used $\|(1+\delta)X-W\|\leq 2\eps$ following from the choice of $\delta$. 

Now consider a sequence  $(\mbf W^k)_{k\in\N}\subset\prod_{i=1}^n\tn{int}\,\dom(\Rhoi)$ such that $W^k_i\to W_i$, $k\to\infty$, holds for all $i\in[n]$. Without loss of generality, we may assume that $W^k:=W_1^k+...+W_n^k$ lies in the ball around $W$ with radius $\eps$.
For each $k\in\N$ assume that $(\mbf X^k,\phi_k)\in\mathcal E(\mbf W^k)$, $k\in\N$. We set 
$$X_i^k=Y_i^k+\frac{\phi_k(W_i^k-Y_i^k)}{\phi(\tilde Z)}\tilde Z,\quad i\in[n].$$ As $\mbf Y^k\in\widehat{\mathcal P}(W^k)$ and $W^k\to W$, $k\to\infty$, we may assume, after passing to a subsequence, that $\mbf Y^k\to\mbf Y\in \widehat{\CP}(W)$ by Theorem \ref{thm:lawinv2}.

We shall now select a convergent subsequence $(\phi^k)_{k\in\N}$. 
In case {\bf(BC)}, we conclude from \cite[Proposition 3.1(iii)]{Liebrich} and Lemma \ref{lem:sumdual} that $\dom(\Lambda^*)\subset\dom(\rho_1^*)\subset L^1$, which implies that all subgradients $\psi$ of $\Lambda$ have the shape $\psi=\E[\bar Q\,\cdot]$ for a unique $\bar Q\in L^1_+$. Hence, the equilibrium prices are given by $\phi_{k}=\E[Q_{k}\cdot]$ for a unique $Q_{k}\in L^1(\Omega,\CF,\PW)_+$. Moreover, all subgradients $Q_{k}$ lie in the $\sigma(L^1,\Linfty)$-compact set $\mathcal L_{c_1}(\rho_1^*)$. We may invoke the Eberlein-\v{S}mulian Theorem \cite[Theorem 6.34]{Ali} to find a subsequence $(k_\lambda)_{\lambda\in\N}$ such that $Q_{k_\lambda}\to Q\in L^1$ weakly, or equivalently $\phi_{k_\lambda}\to\phi=\E[Q\,\cdot]$ in $\sigma(\CX^*,\CX)$. In case {\bf (UC)}, reflexivity of $\CX$, the Banach-Alaoglu Theorem and the bounds above imply the existence of a sequentially relatively compact set $\Gamma$ such that $\phi\in\Gamma$ whenever $\|X-W\|\leq\eps$ and $\phi$ is a subgradient of $\Lambda$ at $X$. Hence there is a $\sigma(\CX^*,\CX)$-convergent subsequence $(\phi_{k_\lambda})_{\lambda\in\N}$.

Consequently, in both cases,
$$\phi_{k_\lambda}(W_i^{k_\lambda}-Y_i^{k_\lambda})\to \phi(W_i-Y_i),~\lambda\to\infty.$$
It remains to prove that $\phi$ is a subgradient of $\Lambda$ at $W$. But as $\Lambda^*$ is weakly* l.s.c. and $\phi_{k_\lambda}(W^{k_\lambda})\to\phi(W)$, we obtain
\begin{align*}\Lambda(W)&=\limsup_{\lambda\to\infty}\phi_{k_\lambda}(W^{k_\lambda})-\Lambda^*(\phi_{k_\lambda})=\phi(W)-\liminf_{\lambda\to\infty}\Lambda^*(\phi_{k_\lambda})\leq \phi(W)-\Lambda^*(\phi),
\end{align*}
which implies that, necessarily, $\Lambda(W)=\phi(W)-\Lambda^*(\phi)$ and $\phi$ is a subgradient of $\Lambda$ at $W$.\end{proof}

\subsection{Examples}

We conclude with two examples.

\begin{example}\label{ex:lawinv1}
We consider the model space $\CX:=L^1$ on which two agents operate with acceptability criteria given by the \textit{entropic risk measure}. More precisely, we choose $0<\beta\leq \gamma$ arbitrary and define 
$$\acc_1:=\{X\in L^1\mid \xi_\beta(X)\leq 0\},\quad \acc_2:=\{X\in L^1\mid \xi_\gamma(X)\leq 0\},$$
where, for $\alpha>0$, $\xi_\alpha(X):=\frac 1{\alpha}\log\left(\E[e^{\alpha X}]\right)$, $X\in L^1$. It is well-known, c.f. \cite[Example 2.9]{Svindland08}, that 
$$\xi:=\xi_\beta\Box\xi_\gamma=\xi_{\frac{\beta\gamma}{\beta+\gamma}}.$$ 
It can be shown that for any $\alpha>0$ the set of directions of $\mathcal L_0(\xi_\alpha)$ is given by $0^+\mathcal L_0(\xi_\alpha)=-L^1_+$. 
Any probability measure $\QW\approx\PW$ with bounded Radon-Nikodym derivative $Q$ with respect to $\PW$ thus satisfies Assumption \ref{assumption} and may be used as pricing measure for securities,\footnote{ The dual conjugate $\xi_\alpha^*$ of $\xi_\alpha$ is given in terms of the relative entropy of a $\QW\ll\PW$ with respect to $\PW$, which is finite whenever $\frac{d\QW}{d\PW}\in\Linfty$.} i.e. the pricing functionals are given by $\price_i=p\E_\QW[\cdot]$, $\QW\ll\PW$ as described and $p>0$ fixed. Moreover, we choose $A\in \CF$ such that $\QW(A)=\frac 1 2$ and $\CS_1=\CM=\tn{span}\{\ind_A,\ind_{A^c}\}$, and $\CS_2=\R\cdot\ind_A$. Given these specifications, $(\mathcal R_1,\mathcal R_2)$ is an agent system. 

Note that 
$\ker(\pi)=\{N_r:=r\ind_A-r\ind_{A^c}\mid r\in\R\}$. We will now characterise $\augm$ and set, for the sake of brevity, $\alpha:=\frac{\beta\gamma}{\beta+\gamma}$. Given the characterisation of $\mathscr A_+$, $X-N_r\in\mathscr A_+$ if, and only if, $\E[e^{\alpha X}\ind_A]\cdot\E[e^{\alpha X}\ind_{A^c}]\leq\frac 14$, as then, there is a solution $r\in\R$ to
\[0\geq \frac 1 \alpha\log\left(\E[e^{\alpha(X-N_r)}]\right)=\frac 1 \alpha\log\left(e^{-\alpha r}\E[e^{\alpha X}\ind_A]+e^{\alpha r}\E[e^{\alpha X}\ind_{A^c}]\right).\]
Now, for arbitrary $X\in\dom(\Lambda)=\dom(\xi_\alpha)$, we note that 
\begin{align*}\Lambda(X)&=\inf\{\pi(r\ind)\mid r\in \R,~X-r\ind\in\augm\}\\
&=\inf\left\{rp\Big|\,r\in\R,~e^{-\alpha r}\E[e^{\alpha X}\ind_A]\cdot\E[e^{\alpha X}\ind_{A^c}]\leq\frac 14\right\}\\
&=\frac{p}{\alpha}\left(\log\E[e^{\alpha X}\ind_A]+\log\E[e^{\alpha X}\ind_{A^c}]+2\log(2)\right).
\end{align*}
Hereafter, we choose a solution $r^*$ of  
\[e^{-\alpha r}\E[e^{\alpha(X-\Lambda(X))}\ind_A]+e^{\alpha r}\E[e^{\alpha(X-\Lambda(X))}\ind_{A^c}]=1,\]
e.g. 
\[r^*:=\log\left(\frac{2\E[e^{\alpha(X-\Lambda(X))}\ind_A]}{\sqrt{1-4\E[e^{\alpha(X-\Lambda(X))}\ind_A]\cdot\E[e^{\alpha(X-\Lambda(X))}\ind_{A^c}]}+1}\right).\]
Using the results from \cite[Example 2.9]{Svindland08}, $(\frac{\gamma}{\beta+\gamma}(X-\Lambda(X)\ind-N_{r^*}),\frac{\beta}{\beta+\gamma}(X-\Lambda(X)\ind-N_{r^*}))\in\acc_1\times\acc_2$. Consequently, the following is a Pareto optimal allocation of $X$:
\[\left(\frac{\gamma}{\beta+\gamma}(X-\Lambda(X)\ind-N_{r^*})+\Lambda(X)\ind+N_{r^*},\frac{\beta}{\beta+\gamma}(X-\Lambda(X)\ind-N_{r^*})\right)\]
\end{example}

\begin{example}\label{ex:lawinv2} Here, we choose the model space $\CX=\Linfty$ and illustrate the existence of Pareto optimal allocations for two agents with acceptance sets less similar than in Example \ref{ex:lawinv1}. To this end, we fix two parameters $\beta\in (0,1)$ and $\gamma>0$ and suppose that acceptability for agent 1 is based on the \textit{Average Value at Risk}, i.e. 
\[\acc_1=\{X\in L^\infty\mid \xi_1(X):=\tn{AVaR}_\beta(X)\leq 0\}=\{X\in L^\infty\mid \forall\,\QW\in\mathcal Q:~\E_\QW[X]\leq 0\},\]
where $\mathcal Q=\{\QW=Qd\PW\mid 0\leq Q\leq\frac 1{1-\beta}~\PW\tn{-a.s.},~\E[Q]=1\}$. The acceptance set of agent 2 is, as in Example \ref{ex:lawinv1}, given by an entropic risk measure, i.e. 
$$\acc_2:=\{X\in L^\infty\mid \xi_2(X):=\frac 1{\gamma}\log\left(\E[e^{\gamma X}]\right)\leq 0\}.$$
By \cite[Example 4.34 \& Theorem 4.52]{FoeSch},  
the support function of $\mathscr A_+=\acc_1+\acc_2$ is given by 
\[\sigma_{\mathscr A_+}(Q)=\sigma_{\acc_1}(Q)+\sigma_{\acc_2}(Q)=\begin{cases}\frac 1 {\gamma}H(\QW|\PW),~\tn{if }\QW:=Qd\PW\in\mathcal Q,\\
\infty\tn{ otherwise,}\end{cases}Q\in L^\infty,\]
where $H(\QW|\PW):=\E_\QW[\log(\frac{d\QW}{d\PW})]$ denotes the relative entropy of $\QW$ with respect to $\PW$. 
Suppose the security spaces $\CS_i$, $i=1,2$, are given as in Example \ref{ex:lawinv1} for some $A\in\CF$ with $\PW(A)\in (0,1-\beta)$. As pricing measure, we choose any $\QW^*\in\mathcal Q$ which satisfies 
\begin{equation}\label{eq:Q^*}\min_{\QW\in\mathcal Q}\QW(A)<\QW^*(A)<\max_{\QW\in \mathcal Q}\QW(A)=\frac{\PW(A)}{1-\beta}.\end{equation} As pricing rules we set $\price_i:=\E_{\QW^*}[\cdot]$, $i=1,2$, which results in 
$$\ker(\pi)=\tn{span}\{N:=\ind_A-r^*\ind_{A^c}\},\quad r^*=\frac{\QW^*(A)}{1-\QW^*(A)}.$$
Assumption \ref{assumption} is met because of \eqref{eq:Q^*}. Let $X\in L^\infty$ be any aggregated loss. Using \cite[Theorem 3]{FKM2015}, we obtain the dual representation
$$\Lambda(X)=\max_{\QW\in\widetilde{\mathcal Q}}\E_\QW[X]-\frac 1 {\gamma}H(\QW|\PW),$$
where $\widetilde{\mathcal Q}=\{\QW\in\mathcal Q\mid \QW(A)=\QW^*(A)\}$. We will now compute the right scaling factor $s\in\R$ such that $X-\Lambda(X)-sN\in \mathscr A_+$. This is the case if, and only if, we have for all $\QW\in\mathcal Q\backslash \widetilde{\mathcal Q}$
$$\E_\QW[X]-\frac 1{\gamma}H(\QW|\PW)-\Lambda(X)\leq s\E_\QW[N].$$
We obtain
\[s\begin{cases}\geq \sup_{\QW\in\mathcal Q\backslash\widetilde{\mathcal Q}:\, \QW(A)>\QW^*(A)}\frac{\E_\QW[X]-\frac 1{\gamma}H(\QW|\PW)+\Lambda(X)}{\E_\QW[N]}\\
\leq \inf_{\QW\in\mathcal Q\backslash\widetilde{\mathcal Q}:\, \QW(A)<\QW^*(A)}\frac{\frac 1{\gamma}H(\QW|\PW)+\Lambda(X)-\E_\QW[X]}{|\E_\QW[N]|},\end{cases}\]
and the bounds describe an \textit{a priori} non-empty interval. Choose any $s^*$ in this interval. Combining \cite[Proposition 3.2 \& Section 3.5]{JST}, we obtain that 
\[\left((X-\Lambda(X)-s^*N-\zeta)^+,(X-\Lambda(X)-s^*N)\wedge \zeta\right)\in\acc_1\times\acc_2.\]
for a suitable $\zeta\in\R$. Thus $(X_1,X_2)$ given by
$$X_1(\zeta)=(X-\Lambda(X)-s^*N-\zeta)^+-s^*r^*\ind_{A^c}+\Lambda(X)$$ and $$X_2(\zeta)=(X-\Lambda(X)-s^*N)\wedge \zeta+s^*\ind_A$$
is a Pareto optimal allocation of $X$.
\end{example}

\section{Optimal portfolio splits}\label{sec:regarbitrage}

In this section we study the existence of optimal portfolio splits. For a thorough discussion of this problem, we refer to Tsanakas \cite{Tsanakas}, although the problem we consider is rather akin to Wang \cite{Wang}. A financial institution holds a portfolio which yields the future loss $W$. In order to diversify the risk posed by $W$, it may consider dividing the portfolio into $n$ subportfolios $X_1,...,X_n\in\CX$, $X_1+...+X_n=W$, and transfer these subportfolios to, e.g., distinct legal entities such as subsidiaries which operate under potentially varying regulatory regimes. As observed by Tsanakas, for convex, but not positively homogeneous risk measures, \textit{without} market frictions like transaction costs risk can usually be reduced arbitrarily by introducing more subsidiaries, and hence, there is no incentive to stop this splitting procedure. However, since $n$ can be arbitrarily large, transaction costs should not be neglected in this setting, and we will study the problem of finding \textit{cost-optimal portfolio splits} under market frictions. 

To be more precise, we model the subsidiaries as a family $(\rho_i)_{i\in\N}$ of \textit{normalised} risk measures on a Fr\'echet lattice $(\CX,\peq,\tau)$ -- which entails $\Rhoi^*\geq 0$ for all $i\in\N$ -- such that the associated risk measurement regimes $(\mathcal R_i)_{i\in\N}$ check infinite supportability (SUP$_\infty$): as one and the same parent company splits the losses into $n$ subportfolios, assuming that, for each $n\in\N$, the set of subsidiaries $(\Rhoi)_{i\in[n]}$ forms an agent system satisfying (SUP) seems natural. Let further $\cost:\N\to[0,\infty)$ be a non-decreasing cost function. The transaction costs of introducing subsidiaries $i\in[n]$ and splitting a portfolio among them are given by $\cost(n)$. The condition $\lim_{n\to\infty}\cost(n)=\infty$ prevents infinite splitting. At last we introduce $\Lambda_n(X):=\inf_{\mbf X\in\alloc_X}\sum_{i=1}^n\Rhoi(X_i)$, $X\in\CX$, the usual risk sharing functional associated to $(\mathcal R_1,...,\mathcal R_n)$. Note that for all $X\in\CX$, $n\in\N$, and every $\mbf X\in\CX^n$ with $\sum_{i=1}^nX_i=X$, the estimate $\sum_{i=1}^n\Rhoi(X_i)=\sum_{i=1}^n\Rhoi(X_i)+\rho_{n+1}(0)\geq \Lambda_{n+1}(X)$
holds, which entails $\Lambda_n(X)\geq \Lambda_{n+1}(X)$, $n\in \N$. In this setting, optimal portfolio splits exist if each $\Lambda_n$ is exact on $\dom(\Lambda_n)$:

\begin{theorem}\label{thm:existenceoptsplit}
Suppose $(\mathcal R_i)_{i\in\N}$ is a sequence of risk measurement regimes on a Fr\'echet lattice $\CX$ which checks \tn{(SUP$_\infty$)} and results in all $\Rhoi$ being normalised. Moreover, assume that $\Lambda_n$ is exact on $\dom(\Lambda_n)$ for all $n\in\N$ and let $W\in\sum_{i=1}^m\dom(\Rhoi)$ for some $m\in\N$. Then there is $(n_*,X_1,...,X_{n_*})$, where $n_*\in\N$ and $X_1+...+X_{n_*}=W$ which is a solution of 
\begin{equation}\label{eq:optsplit}\sum_{i=1}^n\Rhoi(X_i)+\cost(n)\to\min\quad\text{subject to }X_1+...+X_n=W, \, n\in\N.\end{equation}
\end{theorem}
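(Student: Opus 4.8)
The plan is to decouple \eqref{eq:optsplit} into an inner minimisation over splits for a fixed number $n$ of subsidiaries and an outer minimisation over $n\in\N$. For fixed $n$ the inner optimal value is $v(n):=\Lambda_n(W)+\cost(n)$, and since $\Lambda_n$ is exact on $\dom(\Lambda_n)$ the inner infimum is attained as soon as $W\in\dom(\Lambda_n)$. By definition of $\Lambda_n$ the infimum in \eqref{eq:optsplit} equals $\inf_{n\in\N}v(n)$, so the whole problem reduces to showing that $v$ attains its minimum over $\N$ at some $n_\ast$ with $W\in\dom(\Lambda_{n_\ast})$.

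The crucial step is an $n$-uniform lower bound on $\Lambda_n(W)$. Fix $\phi_0\in\CX^*$ as in \tn{(SUP$_\infty$)} and set $s_i:=\sup_{Y\in\acc_i}\phi_0(Y)$, so that $S:=\sum_{i\in\N}s_i<\infty$. First I would observe that each $s_i\geq 0$: normalisation of $\rho_i$ means $\sup_{Z\in\acc_i\cap\CS_i}\price_i(Z)=0$, and since $\phi_0|_{\CS_i}=\price_i$ this supremum is dominated by $s_i$. Next, for any split $X_1+\dots+X_n=W$ and any $Z_i\in\CS_i$ with $X_i-Z_i\in\acc_i$ one has $\price_i(Z_i)=\phi_0(Z_i)\geq\phi_0(X_i)-s_i$; taking the infimum over admissible $Z_i$ gives $\rho_i(X_i)\geq\phi_0(X_i)-s_i$, and summing over $i$ yields $\sum_{i=1}^n\rho_i(X_i)\geq\phi_0(W)-\sum_{i=1}^n s_i\geq\phi_0(W)-S=:c$ (the estimate is vacuous if some $\rho_i(X_i)=\infty$). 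Passing to the infimum over splits, $\Lambda_n(W)\geq c>-\infty$ for every $n\in\N$.

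With this in hand the rest is soft. Since $\cost$ is non-decreasing and $\cost(n)\to\infty$, we get $v(n)\geq c+\cost(n)\to\infty$; and since $W\in\sum_{i=1}^m\dom(\rho_i)=\dom(\Lambda_m)$ we have $v(m)<\infty$. Hence $\{n\in\N\mid v(n)\leq v(m)\}$ is finite and non-empty, so $v$ attains its minimum at some $n_\ast\in\N$. Because $v(n_\ast)\leq v(m)<\infty$ and $\cost(n_\ast)<\infty$, necessarily $\Lambda_{n_\ast}(W)<\infty$, i.e.\ $W\in\dom(\Lambda_{n_\ast})$; exactness then produces $X_1,\dots,X_{n_\ast}$ with $\sum_{i=1}^{n_\ast}X_i=W$ and $\sum_{i=1}^{n_\ast}\rho_i(X_i)=\Lambda_{n_\ast}(W)$. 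For any competitor $(n,Y_1,\dots,Y_n)$ with $\sum_i Y_i=W$ we then have $\sum_{i=1}^n\rho_i(Y_i)+\cost(n)\geq\Lambda_n(W)+\cost(n)=v(n)\geq v(n_\ast)=\sum_{i=1}^{n_\ast}\rho_i(X_i)+\cost(n_\ast)$, so $(n_\ast,X_1,\dots,X_{n_\ast})$ solves \eqref{eq:optsplit}.

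I expect the uniform lower bound $\Lambda_n(W)\geq c$ to be the only genuine obstacle. The two points to get right are that \tn{(SUP$_\infty$)} supplies a \emph{single} functional $\phi_0$ working for all finite subsystems simultaneously with summable barrier values $s_i$, and that normalisation is precisely what forces $s_i\geq 0$, without which the partial sums $\sum_{i=1}^n s_i$ could not be controlled by $S$. It is also worth recording the already-noted monotonicity $\Lambda_n(W)\geq\Lambda_{n+1}(W)$, which in particular puts $W$ in $\dom(\Lambda_n)$ for all $n\geq m$ and underlies $v(m)<\infty$.
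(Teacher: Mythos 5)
Your proof is correct and follows essentially the same route as the paper: the uniform lower bound $\Lambda_n(W)\geq\phi_0(W)-S$ from (SUP$_\infty$) plus normalisation, coercivity of $\cost$ to reduce to finitely many $n$, and exactness of $\Lambda_{n_\ast}$ to produce the split. The only cosmetic difference is that the paper phrases the bound via the dual conjugates $\Rhoi^*(\phi_0)$, which coincide with your support values $s_i=\sup_{Y\in\acc_i}\phi_0(Y)$ precisely because $\phi_0|_{\CS_i}=\price_i$.
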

\begin{proof} Note that (SUP$_\infty$) can be rewritten as \begin{equation}\label{eq:joint}\exists\,\phi_0\in\bigcap_{i=1}^\infty\dom(\Rhoi^*):~\sum_{i=1}^\infty\Rhoi^*(\phi_0)<\infty.\end{equation}
Let $m_*:=\min\{m\in\N\mid\Lambda_m(W)<\infty\}=\min\{m\in\N\mid W\in\sum_{i=1}^m\dom(\rho_i)\}<\infty$. By \eqref{eq:joint}, we have $\Lambda_n(W)\geq\phi_0(W)-\sum_{i=1}^\infty\Rhoi^*(\phi_0)>-\infty$ for all $n\geq m_*$. Thus, $\Lambda_n(W)+\cost(n)=\infty$ whenever $n<m_*$ and 
$$\liminf_{n\to\infty}\Lambda_n(W)+\cost(n)\geq \phi_0(W)-\sum_{i=1}^\infty\Rhoi^*(\phi_0)+\lim_{n\to\infty}\cost(n)=\infty.$$
Therefore, we can find $n_*\in\N$ such that $\Lambda_{n_*}(W)+\cost(n_*)=\inf_{n\in\N}\Lambda_n(W)+\cost(n)\in\R.$
Now choose an attainable allocation $\mbf X\in \CX^{n_*}$ of $X$ such that $\Lambda_{n_*}(X)=\sum_{i=1}^{n_*}\Rhoi(X_i)$ to obtain a solution to \eqref{eq:optsplit}. 
\end{proof}

\begin{corollary}\label{cor:existenceoptsplit}
Suppose $(\mathcal R_i)_{i\in\N}$ is a sequence of risk measurement regimes on a Fr\'echet lattice $\CX$ such that all $\Rhoi$ are normalised. Then the assertion of Theorem \ref{thm:existenceoptsplit} holds under both of the following conditions:
\begin{enumerate}
\item[\tn{(1)}] The risk measures $(\rho_1,...,\rho_n)$ comply with Theorem \ref{thm:lawinv_final} for each $n\in\N$ and the pricing functionals are given by $\price_i=p\E_\QW[\cdot]|_{\CS_i}$ for a $p>0$ and a probability measure $\QW\ll \PW$ with $\sup_{Y\in\acc_i}\E_\QW[Y]\leq 0$, $i\in\N$. In particular, this is satisfied by $\QW=\PW$. 
\item[\tn{(2)}] \tn{(SUP$_\infty$)} is satisfied, and for each $n\in\N$, $(\mathcal R_1,...,\mathcal R_n)$ is a polyhedral agent system. 
\end{enumerate}
\end{corollary}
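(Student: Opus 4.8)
The plan is to reduce both statements to Theorem~\ref{thm:existenceoptsplit}, whose hypotheses are: normalisation of every $\Rhoi$ (already assumed), validity of (SUP$_\infty$) for the sequence $(\mathcal R_i)_{i\in\N}$, and exactness of every finite risk sharing functional $\Lambda_n$ on $\dom(\Lambda_n)$. So under each of the two conditions it suffices to produce the last two ingredients and then invoke Theorem~\ref{thm:existenceoptsplit}.

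Under condition~(2) this is almost immediate: (SUP$_\infty$) is assumed outright, and for the exactness of $\Lambda_n$ I fix $n\in\N$ and use that $(\mathcal R_1,...,\mathcal R_n)$ is a polyhedral agent system; Theorem~\ref{A+ker poly} then gives that $\Lambda_n$ is l.s.c.\ and that every $X\in\dom(\Lambda_n)$ admits an optimal payoff, so $\Lambda_n$ is exact at $X$ by Proposition~\ref{prop:exact1}. Since $n$ was arbitrary, Theorem~\ref{thm:existenceoptsplit} applies verbatim.

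Under condition~(1) exactness is again handed to us: for each $n$ the truncated system complies with Theorem~\ref{thm:lawinv_final} (the general-model-space version of Theorem~\ref{thm:lawinv}), which furnishes an optimal payoff $Z^X\in\CM$ for every $X\in\dom(\Lambda_n)$, and Proposition~\ref{prop:exact1} turns this into exactness of $\Lambda_n$ at $X$. The substantive point is therefore (SUP$_\infty$). Writing $\QW=Q\,d\PW$ with $Q\in\Linfty_+$ as in Assumption~\ref{assumption}, I would set $\phi_0:=p\E_\QW[\cdot]=p\E[Q\,\cdot]$; this lies in $\CX^*$ trivially in case~\textbf{(BC)}, and in case~\textbf{(UC)} because $Q\in\Linfty$ and the embedding $\CX\hookrightarrow L^1$ is continuous. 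Since $\phi_0|_{\CS_i}=\price_i$ by construction, the dual conjugate computes to $\Rhoi^*(\phi_0)=\sup_{Y\in\acc_i}\phi_0(Y)=p\,\sigma_{\acc_i}(Q)=p\sup_{Y\in\acc_i}\E_\QW[Y]\le 0$ by the standing bound, while normalisation forces $\Rhoi^*(\phi_0)\ge\phi_0(0)-\Rhoi(0)=0$; hence $\Rhoi^*(\phi_0)=0$ for every $i$. Consequently $\phi_0\in\bigcap_{i\in\N}\dom(\Rhoi^*)$ and $\sum_{i\in\N}\Rhoi^*(\phi_0)=0<\infty$, which is exactly~\eqref{eq:joint}, the reformulation of (SUP$_\infty$) used in the proof of Theorem~\ref{thm:existenceoptsplit}; that theorem then applies. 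The closing remark of the statement simply corresponds to the admissible choice $\QW=\PW$, for which $Q=1$ and Assumption~\ref{assumption}(1) holds trivially.

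The only genuinely delicate step is the verification of (SUP$_\infty$) in case~(1): one must exhibit a single positive functional simultaneously extending every $\price_i$ while keeping the series of dual conjugates summable, and the key observations are that the explicit pricing form pins $\phi_0$ down canonically and that the bound $\sigma_{\acc_i}(Q)\le0$ together with normalisation of $\Rhoi$ collapses each summand $\Rhoi^*(\phi_0)$ to $0$. The one pitfall to keep in mind is to confirm $\Norm$-continuity of $\phi_0$ on $\CX$ in the unbounded case~\textbf{(UC)}. Everything else is a direct appeal to Theorems~\ref{A+ker poly}, \ref{thm:lawinv_final} and~\ref{thm:existenceoptsplit}.
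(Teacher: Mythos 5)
Your proof is correct and follows essentially the same route as the paper: exactness of each $\Lambda_n$ comes from Theorem~\ref{A+ker poly} (resp.\ Theorem~\ref{thm:lawinv_final}) combined with Proposition~\ref{prop:exact1}, and (SUP$_\infty$) in case~(1) is verified with $\phi_0=p\E_\QW[\cdot]$. The only cosmetic difference is that you bound $\Rhoi^*(\phi_0)$ via the general identity $\Rhoi^*(\phi)=\sup_{Y\in\acc_i}\phi(Y)$ for any $\phi$ with $\phi|_{\CS_i}=\price_i$, whereas the paper derives the same inequality $\Rhoi^*(pQ)\le p\,\sigma_{\acc_i}(Q)\le 0$ by passing through the optimal payoffs supplied by Theorem~\ref{thm:lawinv_final} in the case $n=1$; both arguments then use normalisation to conclude $\Rhoi^*(\phi_0)=0$.
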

\begin{proof}
(1) Let $\QW=Qd\PW$, $Q\in\Linfty_+$, be as described in the assertion. Let $i\in\N$ be arbitrary and recall the definition of the cash-additive risk measures $\xi_i$ in the proof of Proposition \ref{prop:lawinv}. By \eqref{eq:dual:ca}, $\xi_i^*(Q)\leq 0$. 
Theorem \ref{thm:lawinv_final} in the case $n=1$ yields that each $X\in\dom(\Rhoi)$ admits an optimal payoff $Z^X\in\CS_i$, i.e. $X-Z^X\in\acc_i$ and $p\E_\QW[Z^X]=\price_i(Z^X)=\Rhoi(X)$. 
Hence, 
$$\Rhoi^*(pQ)=\sup_{X\in\dom(\Rhoi)}p\E_{\QW}[X]-\Rhoi(X)=\sup_{X\in\dom(\Rhoi)}p\E_{\QW}[X-Z^X]\leq p\xi_i^*(Q)\leq 0.$$
Conversely, as $\Rhoi$ is normalised, we have $\Rhoi^*(pQ)\geq 0$. Hence, (SUP$_\infty$) holds and $\phi_0$ in \eqref{eq:joint} may be chosen as $\phi_0=p\E_\QW[\cdot]$. As in the proof of Theorem \ref{thm:lawinv2}, $\xi_i^*(1)\leq 0$ holds for all $i\in\N$. The solvability of \eqref{eq:optsplit} under (1) follows from Theorems \ref{thm:lawinv_final} and \ref{thm:existenceoptsplit}. 

(2) By Theorem \ref{A+ker poly} $\Lambda_n$ is exact on $\dom(\Lambda_n)$ for every $n\in\N$. 
\end{proof}

\appendix
\section{Technical supplements}\label{appendixA}

\subsection{The geometry of convex sets}\label{appendix:geom} Fix a non-empty convex subset $\CC$ of a locally convex Hausdorff topological Riesz space $(\CX,\peq,\tau)$ with dual space $\CX^*$.
The \textsc{support function} of $\mathcal C$ is the functional
\[\sigma_{\mathcal C}:\CX^*\to(-\infty,\infty],\quad \phi\mapsto\sup_{Y\in\mathcal C}\phi(Y).\]
The \textsc{recession cone} of $\mathcal C$ is the set 
\[0^+\mathcal C:=\{U\in \CX\mid\forall\,Y\in\mathcal C\,\forall\,k\geq 0:~Y+kU\in \mathcal C\}.\]
A vector $U$ lies in $0^+\mathcal C$ if, and only if, $Y+U\in\CC$ holds for all $Y\in\CC$. $U$ is then called a \textsc{direction} of $\CC$.
The \textsc{lineality space} of $\mathcal C$ is the vector space $\tn{lin}(\mathcal C):=0^+\mathcal C\cap(-0^+\mathcal C)$.
In the case of an acceptance set $\acc$, monotonicity implies $\dom(\sigma_\acc)\subset\CX^*_+$. If $\mathcal C$ is closed, the Hahn-Banach Separation Theorem shows that 
\begin{center}$\mathcal C=\left\{Y\in\CX\mid \forall\, \phi\in\dom(\sigma_{\CC}):~\phi(Y)\leq\sigma_{\CC}(\phi)\right\}$.\end{center}
Combining this identity with the definition of the recession cone and the lineality space yields  
\begin{lemma}\label{lem:direction}
If $\CC\subset \CX$ is closed and convex and $\mathcal J\subset\dom(\sigma_\CC)$ is such that 
\begin{center}$\CC=\{X\in\CX\mid\forall\,\phi\in\mathcal J:~\phi(X)\leq \sigma_\CC(\phi)\}$,\end{center}
then
\[0^+\CC=\bigcap_{\phi\in\mathcal J}\mathcal L_0(\phi)=\{U\in\CX\mid~\forall\, \phi\in\mathcal J:~\phi(U)\leq 0\}\quad\textit{and}\quad\tn{lin}(\acc)=\bigcap_{\phi\in\mathcal J}\ker(\phi)\]
\end{lemma}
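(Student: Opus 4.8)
The plan is to verify the two displayed identities directly from the defining inequalities of $\CC$, using only that $\sigma_\CC(\phi)<\infty$ for every $\phi\in\mathcal J$ (since $\mathcal J\subset\dom(\sigma_\CC)$) and that $\CC$ is non-empty. No separation theorem is needed beyond the representation of $\CC$ already granted in the hypothesis.

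First I would prove $0^+\CC\subseteq\bigcap_{\phi\in\mathcal J}\mathcal L_0(\phi)$. Fix $U\in 0^+\CC$, choose some $Y\in\CC$ (possible as $\CC\neq\emptyset$), and fix $\phi\in\mathcal J$. By definition of the recession cone, $Y+kU\in\CC$ for all $k\geq 0$, so the assumed representation of $\CC$ gives $\phi(Y)+k\phi(U)=\phi(Y+kU)\leq\sigma_\CC(\phi)$ for every $k\geq 0$. Dividing by $k$ and letting $k\to\infty$ forces $\phi(U)\leq 0$, i.e.\ $U\in\mathcal L_0(\phi)$; since $\phi\in\mathcal J$ was arbitrary, the inclusion follows. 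For the reverse inclusion, assume $\phi(U)\leq 0$ for all $\phi\in\mathcal J$. Then for any $Y\in\CC$ and $k\geq 0$ and any $\phi\in\mathcal J$ we have $\phi(Y+kU)=\phi(Y)+k\phi(U)\leq\phi(Y)\leq\sigma_\CC(\phi)$, so $Y+kU\in\CC$ by the representation; hence $U\in 0^+\CC$. This establishes the first identity.

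The claim on the lineality space then drops out immediately: by definition $U\in\tn{lin}(\CC)=0^+\CC\cap(-0^+\CC)$ if and only if both $U$ and $-U$ satisfy $\phi(\cdot)\leq 0$ for all $\phi\in\mathcal J$, which is equivalent to $\phi(U)=0$ for all $\phi\in\mathcal J$, i.e.\ $U\in\bigcap_{\phi\in\mathcal J}\ker(\phi)$. I do not expect any genuine obstacle here; the only point requiring a moment's care is the limit passage $k\to\infty$ in the first step, which is legitimate precisely because $\mathcal J\subset\dom(\sigma_\CC)$ ensures the upper bound $\sigma_\CC(\phi)$ is a finite real number not depending on $k$.
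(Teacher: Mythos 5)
Your proof is correct and follows exactly the route the paper intends: the appendix derives the lemma by "combining" the dual representation of $\CC$ with the definitions of $0^+\CC$ and $\tn{lin}(\CC)$, which is precisely your two-inclusion argument, with the only delicate point (finiteness of $\sigma_\CC(\phi)$ for $\phi\in\mathcal J$, justifying the passage $k\to\infty$) correctly identified and handled.
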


Last we state a decomposition result for closed convex sets specific to finite-dimensional spaces. It follows from arguments in the proofs of \cite[Lemmas II.16.2 and II.16.3]{Barvinok}.

\begin{lemma}\label{lem:decomp}Let $\CC\subset\R^d$ be convex and closed and $\mathcal V:=\tn{lin}(\CC)^\perp$. If $\tn{ext}(\CC\cap \mathcal V)$ denotes the set of extreme points of $\CC\cap \mathcal V$ and $\tn{co}(\cdot)$ is the convex hull operator, $\CC$ can be written as 
\begin{center}$\CC=\tn{co}(\tn{ext}(\CC\cap \mathcal V))+0^+\CC.$\end{center}
\end{lemma}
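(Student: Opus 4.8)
The plan is to reduce the claim to the classical structure theorem for \emph{line-free} closed convex sets and then restore the lineality space. Set $L:=\tn{lin}(\CC)=0^+\CC\cap(-0^+\CC)$, so that $\mathcal V=L^\perp$ and $\R^d=\mathcal V\oplus L$ as an orthogonal direct sum; the case $\CC=\emptyset$ is trivial, so assume $\CC\neq\emptyset$. The crux is the following three observations, each routine once stated: (a) $\CC=(\CC\cap\mathcal V)+L$; (b) $\CC\cap\mathcal V$ is a nonempty closed convex set containing no line; (c) $0^+\CC=(0^+\CC\cap\mathcal V)+L$. Granting the classical fact that a nonempty line-free closed convex subset $D\subset\R^d$ has at least one extreme point and satisfies $D=\tn{co}(\tn{ext}(D))+0^+D$ (see \cite[Theorem 18.5]{Rockafellar}, or the arguments in \cite[Lemmas II.16.2 and II.16.3]{Barvinok}), applying it to $D=\CC\cap\mathcal V$ and combining with (a) and (c) gives
\[\CC=(\CC\cap\mathcal V)+L=\tn{co}(\tn{ext}(\CC\cap\mathcal V))+(0^+\CC\cap\mathcal V)+L=\tn{co}(\tn{ext}(\CC\cap\mathcal V))+0^+\CC,\]
as desired.

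For (a): since $L$ and $-L=L$ are both contained in $0^+\CC$, translating $\CC$ by any element of $L$ leaves it inside $\CC$, hence $\CC+L=\CC$; given $x\in\CC$, decompose $x=v+\ell$ with $v\in\mathcal V$, $\ell\in L$, and note $v=x-\ell\in\CC+L=\CC$, so $v\in\CC\cap\mathcal V$ and $x\in(\CC\cap\mathcal V)+L$; the reverse inclusion is immediate. For (b): $\mathcal V$ is a subspace, so $0^+\mathcal V=\mathcal V$, and the standard identity $0^+(A\cap B)=0^+A\cap 0^+B$ for closed convex sets meeting in a nonempty set yields $0^+(\CC\cap\mathcal V)=0^+\CC\cap\mathcal V$; therefore $\tn{lin}(\CC\cap\mathcal V)=(0^+\CC\cap(-0^+\CC))\cap\mathcal V=L\cap L^\perp=\{0\}$. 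For (c): ``$\subseteq$'' holds because $0^+\CC$ is a convex cone containing both $0^+\CC\cap\mathcal V$ and $L$, while ``$\supseteq$'' follows by writing $u\in0^+\CC$ as $u=v+\ell$ with $v\in\mathcal V$, $\ell\in L$, and observing $v=u-\ell\in0^+\CC$ (as $-\ell\in L\subset0^+\CC$ and $0^+\CC$ is closed under addition), hence $v\in0^+\CC\cap\mathcal V$.

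I do not anticipate a genuine obstacle: the only non-elementary input is the classical line-free decomposition cited above --- in particular the existence of an extreme point --- and everything else is bookkeeping with recession cones and the splitting $\R^d=\mathcal V\oplus L$. The single point warranting care is (b)--(c), namely the identities $0^+(\CC\cap\mathcal V)=0^+\CC\cap\mathcal V$ and $(0^+\CC\cap\mathcal V)+L=0^+\CC$; these hold precisely because $\mathcal V$ is a linear subspace and $0^+\CC$ is a convex cone already containing $L$, and they should be verified explicitly rather than taken for granted.
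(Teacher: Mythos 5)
Your proof is correct and is essentially the argument the paper delegates to \cite[Lemmas II.16.2 and II.16.3]{Barvinok}: split off the lineality space via the orthogonal decomposition $\R^d=\mathcal V\oplus\tn{lin}(\CC)$, apply the classical line-free decomposition to $\CC\cap\mathcal V$, and reassemble using $0^+(\CC\cap\mathcal V)=0^+\CC\cap\mathcal V$ and $(0^+\CC\cap\mathcal V)+\tn{lin}(\CC)=0^+\CC$. The bookkeeping steps (a)--(c) are all verified correctly, so there is nothing to add.
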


\subsection{Infimal convolution}\label{sec:infimal:convolution}

Let $(\CX,\peq)$ be a Riesz space and suppose that functions $g_i:\CX\to(-\infty,\infty]$, $i\in[n]$, are given. The \textsc{infimal convolution} or \textsc{epi-sum} of $g_1,..., g_n$ is the function
\begin{center}$\Box_{i=1}^ng_i:\CX\to[-\infty,\infty],\quad X\mapsto\inf\left\{\sum_{i=1}^n g_i(X_i)\mid X_1,..., X_n\in \CX,\,\sum_{i=1}^nX_i=X\right\}$.\end{center}
The convolution is said to be \textsc{exact} at $X\in\CX$ if there is $X_1,..., X_n\in \CX$ with $\sum_{i=1}^nX_i=X$ such that 
\[\sum_{i=1}^ng_i(X_i)=(\Box_{i=1}^ng_i)(X).\]

\begin{lemma}\label{lem:properties}
Suppose $\CX_i\subset \CX$, $i\in [n]$, are ideals in a Riesz space $(\CX,\peq)$ such that $\CX=\sum_{i=1}^n\CX_i$. If all $g_i:\CX\to(-\infty,\infty]$ are convex, then   $\Box_{i=1}^ng_i$ is convex. If $g_i$ is monotone on $\CX_i$ with respect to $\peq$ for all $i\in[n]$, i.e. $X,Y\in\CX_i,~X\peq Y$ implies $g_i(X)\leq g_i(Y)$, and $g_i|_{\CX\backslash\CX_i}\equiv\infty$, then $\Box_{i=1}^ng_i$ is monotone on $\CX$. 
\end{lemma}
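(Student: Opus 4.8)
The plan is to read both claims straight off the definition of the infimal convolution $\Box_{i=1}^ng_i$, reducing each to the corresponding property of the individual $g_i$. I will call a tuple $(X_1,\dots,X_n)\in\CX^n$ with $X_1+\dots+X_n=X$ a \emph{representation} of $X$. Note that, since each $g_i$ is $(-\infty,\infty]$-valued and (in the monotone case) identically $\infty$ off $\CX_i$, only representations lying in $\prod_{i=1}^n\CX_i$ can yield a finite value of $(\Box_{i=1}^ng_i)(X)$, so the effective infimum is really taken over $\prod_{i=1}^n\CX_i$; in particular any representation attaining a finite sum has $i$-th entry in $\CX_i$.

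For convexity I would fix $X,Y\in\CX$ and $\lambda\in[0,1]$ and reduce to showing $(\Box_{i=1}^ng_i)(\lambda X+(1-\lambda)Y)\le\lambda\mu+(1-\lambda)\nu$ for arbitrary real numbers $\mu>(\Box_{i=1}^ng_i)(X)$ and $\nu>(\Box_{i=1}^ng_i)(Y)$ (this reduction correctly encodes extended-real convexity, the cases with value $+\infty$ being vacuous and hence trivial). Given such $\mu,\nu$, pick representations $\mbf X$ of $X$ and $\mbf Y$ of $Y$ with $\sum_{i=1}^ng_i(X_i)\le\mu$ and $\sum_{i=1}^ng_i(Y_i)\le\nu$; then $\lambda\mbf X+(1-\lambda)\mbf Y$ represents $\lambda X+(1-\lambda)Y$ and convexity of each $g_i$ gives $\sum_{i=1}^ng_i(\lambda X_i+(1-\lambda)Y_i)\le\lambda\mu+(1-\lambda)\nu$. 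Taking the infimum over admissible $\mu,\nu$ concludes this part; no ideal or Riesz structure is needed here.

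For monotonicity I would fix $X\peq Y$ and assume $(\Box_{i=1}^ng_i)(Y)<\infty$ (otherwise there is nothing to prove). Pick a real $\mu>(\Box_{i=1}^ng_i)(Y)$ and a representation $(Y_1,\dots,Y_n)$ with $\sum_{i=1}^ng_i(Y_i)\le\mu$, which by the opening remark forces $Y_i\in\CX_i$ for all $i$. The crux is to convert $\mbf Y$ into a representation of $X$ that is dominated by $\mbf Y$ coordinatewise and stays inside the ideals: since $\CX=\sum_{i=1}^n\CX_i$, write $Y-X=\sum_{i=1}^nF_i$ with $F_i\in\CX_i$, observe $0\peq Y-X\peq\sum_{i=1}^n|F_i|$ with $|F_i|\in\CX_i$ (ideal), and invoke the Riesz Decomposition Property \cite[Section 8.5]{Ali} to obtain $Y-X=\sum_{i=1}^nE_i$ with $0\peq E_i\peq|F_i|$, hence $E_i\in\CX_i$. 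Then $X_i:=Y_i-E_i\in\CX_i$ satisfies $\sum_{i=1}^nX_i=X$ and $X_i\peq Y_i$, so monotonicity of each $g_i$ on $\CX_i$ gives $(\Box_{i=1}^ng_i)(X)\le\sum_{i=1}^ng_i(X_i)\le\sum_{i=1}^ng_i(Y_i)\le\mu$; letting $\mu\downarrow(\Box_{i=1}^ng_i)(Y)$ finishes the argument.

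I expect the only genuinely non-routine step to be this coordinatewise domination in the monotonicity part: from a decomposition of the larger element $Y$ one must split the positive difference $Y-X$ through the ideals $\CX_i$ so as to lower each component $Y_i$ within its own $\CX_i$. This is precisely where both the ideal property of the $\CX_i$ and the Riesz Decomposition Property enter, and it runs parallel to the argument already used in the proof of Proposition~\ref{prop:operational}(2); everything else is standard bookkeeping for infimal convolutions.
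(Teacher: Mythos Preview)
Your proposal is correct and follows essentially the same route as the paper's proof: both skip the routine convexity argument, and for monotonicity both dominate $0\peq Y-X$ by a sum $\sum_i|F_i|$ with $F_i\in\CX_i$, invoke the Riesz Decomposition Property to split $Y-X$ into nonnegative pieces $E_i\peq|F_i|$ lying in the ideals, and then lower each $Y_i$ to $Y_i-E_i$. The only cosmetic difference is that the paper obtains the $F_i$ as $Y_i-X_i$ from an auxiliary allocation $\mbf X\in\prod_i\CX_i$ of $X$, whereas you decompose $Y-X$ directly using $\CX=\sum_i\CX_i$; the substance is identical.
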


\begin{proof}
We only prove monotonicity. Let $X,Y\in\CX$, $X\peq Y$, and let $\mbf X,\mbf Y\in\prod_{i=1}^n\CX_i$ with $\sum_{i=1}^nX_i=X$ and $\sum_{i=1}^nY_i=Y$. We thus have 
$0\peq Y-X=|Y-X|\peq \sum_{i=1}^n|Y_i-X_i|$. By the Riesz space property of $\CX$ and the Riesz Decomposition Property (c.f. \cite[Section 8.5]{Ali}), there is a vector $\mbf Z\in(\CX_+)^n$ such that $Y-X=\sum_{i=1}^nZ_i$ and such that $Z_i=|Z_i|\peq|Y_i-X_i|$, $i\in[n]$. $\CX_i$ being an ideal yields that in fact $\mbf Z\in\prod_{i=1}^n\CX_i$. By monotonicity of $g_i$ on $\CX_i$, $i\in[n]$, we obtain
\[(\Box_{i=1}^ng_i)(X)\leq\sum_{i=1}^n g_i(Y_i-Z_i)\leq \sum_{i=1}^ng_i(Y_i).\]
As $(\Box_{i=1}^ng_i)(Y)=\inf\{\sum_{i=1}^ng_i(Y_i)\mid \mbf Y\in\prod_{i=1}^n\CX_i\}$ by the assumption $g_i|_{\CX\backslash\CX_i}\equiv\infty$, infimising over suitable $\mbf Y$ on the right-hand side proves the assertion.
\end{proof}

Note that the risk sharing functional satisfies $\Lambda=\Box_{i=1}^ng_i$, where $g_i(X)=\Rhoi(X)$ if $X\in \CX_i$ and $g_i(X)=\infty$ otherwise, $X\in\CX$. These functions $g_i$ inherit convexity on $\CX$ and monotonicity on $\CX_i$ from $\Rhoi$.

\begin{lemma}\label{lem:sumdual}
Given a topological Riesz space $(\CX,\peq,\tau)$ and proper functions $g_i:\CX\to(-\infty,\infty]$, $i\in[n]$, the following identities hold:
\begin{center}$(\Box_{i=1}^ng_i)^*=\sum_{i=1}^ng_i^*$\quad and\quad 
$\dom((\Box_{i=1}^ng_i)^*)=\bigcap_{i=1}^n\dom(g_i^*).$\end{center}
\end{lemma}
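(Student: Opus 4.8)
The plan is to prove the identity $(\Box_{i=1}^n g_i)^*=\sum_{i=1}^n g_i^*$ by a direct computation unwinding the definitions of the Fenchel conjugate and of the infimal convolution, and then to deduce the statement on effective domains as an immediate corollary. The only structural input is the properness of each $g_i$; beyond the fact that elements of $\CX^*$ are linear, the topology on $\CX$ plays no role (it merely fixes which functionals $\phi$ are admitted).

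First I would fix $\phi\in\CX^*$ and compute, using $-\inf(\cdot)=\sup(-\cdot)$ in $[-\infty,\infty]$,
\[(\Box_{i=1}^n g_i)^*(\phi)=\sup_{X\in\CX}\Big(\phi(X)-\inf\Big\{\textstyle\sum_{i=1}^n g_i(X_i)\ \Big|\ \mbf X\in\CX^n,\ \sum_{i=1}^n X_i=X\Big\}\Big)=\sup_{X\in\CX}\ \sup_{\sum_i X_i=X}\Big(\phi(X)-\sum_{i=1}^n g_i(X_i)\Big).\]
The iterated supremum collapses to a single supremum over all tuples $\mbf X\in\CX^n$, with $X:=\sum_i X_i$ on the constraint set; linearity of $\phi$ then gives
\[(\Box_{i=1}^n g_i)^*(\phi)=\sup_{\mbf X\in\CX^n}\sum_{i=1}^n\big(\phi(X_i)-g_i(X_i)\big).\]
Since the $i$-th summand depends only on the $i$-th coordinate of $\mbf X$, the supremum over $\CX^n$ separates into the sum of the coordinatewise suprema, which is exactly $\sum_{i=1}^n g_i^*(\phi)$.

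The step where care is needed --- and the closest thing to an obstacle --- is this last separation: the identity $\sup_{\mbf X}\sum_i h_i(X_i)=\sum_i\sup_{X_i}h_i(X_i)$ is valid in $[-\infty,\infty]$ provided the right-hand side is not an indeterminate sum $(+\infty)+(-\infty)$. Properness rules this out: choosing any $X_i^0$ with $g_i(X_i^0)<\infty$ gives $g_i^*(\phi)\ge\phi(X_i^0)-g_i(X_i^0)>-\infty$, so each term of $\sum_{i=1}^n g_i^*(\phi)$ lies in $(-\infty,\infty]$ and the sum is unambiguous; the same estimate shows both sides equal $+\infty$ in the degenerate case where $\Box_{i=1}^n g_i$ takes the value $-\infty$ somewhere, so the formula holds without exception. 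Finally, $\phi\in\dom((\Box_{i=1}^n g_i)^*)$ iff $\sum_{i=1}^n g_i^*(\phi)<\infty$, and since this finite sum has no $-\infty$ term, that is equivalent to $g_i^*(\phi)<\infty$ for every $i\in[n]$, i.e.\ to $\phi\in\bigcap_{i=1}^n\dom(g_i^*)$. The whole argument is the classical Fenchel-duality computation for epi-sums, so no deeper difficulty is expected.
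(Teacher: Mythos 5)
Your proof is correct: the paper states Lemma \ref{lem:sumdual} without proof, treating it as the classical Fenchel-conjugate formula for epi-sums, and your computation is exactly that standard argument. You also correctly identify and resolve the one genuine subtlety --- that properness of each $g_i$ excludes the indeterminate form $(+\infty)+(-\infty)$ when separating the supremum of the sum into the sum of coordinatewise suprema, and that it makes the domain identity follow since no term $g_i^*(\phi)$ can equal $-\infty$.
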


\subsection{Correspondences}\label{appendixC}

Given two non-empty sets $A$ and $B$, a map $\Gamma: A\to 2^B$ mapping elements of $A$ to subsets of $B$ is called a \textsc{correspondence} and will be denoted by $\Gamma: A\two B$.  Assume now that $(\CX,\tau)$ and $(\mathcal Y,\sigma)$ are topological spaces, and let $\Gamma:\CX\two\mathcal Y$ be a correspondence.

A continuous function $\Psi:\CX\to\mathcal Y$ is a \textsc{continuous selection} for $\Gamma$ if $\Psi(x)\in\Gamma(x)$ holds for all $x\in\CX$. 

If $(\CX,\sigma)$ is first countable, $\Gamma$ is \textsc{upper hemicontinuous} at $x\in\CX$ if,  whenever $(x_k)_{k\in\N}$ is a sequence $\sigma$-convergent to $x$ and $(y_k)_{k\in\N}\subset \CY$ is such that $y_k\in\Gamma(x_k)$, $k\in\N$, there is a limit point $y\in\Gamma(x)$ of $(y_k)_{k\in\N}$. If both topological spaces are first countable, $\Gamma$ is \textsc{lower hemicontinuous} at $x\in\CX$ if, whenever $(x_k)_{k\in\N}$ is a sequence $\sigma$-convergent to $x$ and $y\in\Gamma(x)$, there is a subsequence $(k_\lambda)_{\lambda\in\N}$ and $y_\lambda\in\Gamma(x_{k_\lambda})$, $\lambda\in\N$, such that $y_{\lambda}\to y$ with respect to $\tau$ as $\lambda\to\infty$.\footnote{ Our definitions of lower and upper hemicontinuity are special cases of the more general notions suited to our purposes, c.f. \cite[Theorems 17.20 \& 17.21]{Ali}.} An example of a lower hemicontinuous correspondence relevant for our investigations is the security allocation map 
$\alloc^s_{\cdot}:\CM\ni Z\mapsto\alloc_Z\cap\prod_{i=1}^n\CS_i.$

\begin{lemma}\label{lem:contsel}
The correspondence $\alloc_\cdot^s$ is lower hemicontinuous on the global security market $\CM$ and admits a continuous selection $\Psi:\mathcal M\to\prod_{i=1}^n\CS_i$ with respect to any norm on $\CM$.
\end{lemma}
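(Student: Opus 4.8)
The plan is to reduce everything to elementary linear algebra by exploiting that $\CM$ is finite-dimensional. Indeed, each security space $\CS_i$ is by definition a finite-dimensional subspace, so $\CM=\CS_1+\dots+\CS_n$ and the product $\prod_{i=1}^n\CS_i$ are finite-dimensional; in particular all norms on them are equivalent, which disposes of the clause ``with respect to any norm''. Consider the surjective linear summation operator $T\colon\prod_{i=1}^n\CS_i\to\CM$, $\mbf Z\mapsto Z_1+\dots+Z_n$, and note that $\alloc_Z^s=T^{-1}(\{Z\})$ for every $Z\in\CM$.

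First I would construct a linear right inverse of $T$. Pick any algebraic complement $\mathcal V$ of $\ker(T)$ inside $\prod_{i=1}^n\CS_i$; then $T|_{\mathcal V}\colon\mathcal V\to\CM$ is a linear bijection, and I set $\Psi:=(T|_{\mathcal V})^{-1}\colon\CM\to\prod_{i=1}^n\CS_i$, viewed as a map into $\prod_{i=1}^n\CS_i$ via the inclusion $\mathcal V\hookrightarrow\prod_{i=1}^n\CS_i$. Being linear on a finite-dimensional space, $\Psi$ is continuous for any choice of (equivalent) norms, and $T\Psi=\mathrm{id}_{\CM}$ yields $\Psi(Z)\in T^{-1}(\{Z\})=\alloc_Z^s$; hence $\Psi$ is a continuous (indeed linear) selection of $\alloc_\cdot^s$.

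For lower hemicontinuity I would use the sequential criterion from Appendix \ref{appendixC}, which applies since $\CM$ and $\prod_{i=1}^n\CS_i$ are metrizable, hence first countable. Let $Z^k\to Z$ in $\CM$ and let $\mbf Z\in\alloc_Z^s$, i.e.\ $T\mbf Z=Z$. Put $\mbf Z^k:=\mbf Z+\Psi(Z^k-Z)\in\prod_{i=1}^n\CS_i$. Then $T\mbf Z^k=T\mbf Z+(Z^k-Z)=Z^k$, so $\mbf Z^k\in\alloc_{Z^k}^s$, while continuity of $\Psi$ gives $\mbf Z^k\to\mbf Z+\Psi(0)=\mbf Z$. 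Thus the whole sequence works, without even passing to a subsequence, and $\alloc_\cdot^s$ is lower hemicontinuous.

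There is essentially no obstacle here: once one observes the finite-dimensionality of $\CM$, the statement collapses to the fact that a surjective linear map between finite-dimensional spaces admits a linear right inverse, and that translating the constant fibre $\alloc_Z^s$ along this right inverse produces the required approximating points. (A more abstract alternative would be to invoke that $T$ is an open map, being a surjective continuous linear operator between Banach spaces, and that the fibre correspondence of an open continuous surjection is lower hemicontinuous; but the explicit construction above is shorter and simultaneously delivers the continuous selection.)
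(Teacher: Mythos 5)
Your proof is correct and takes essentially the same route as the paper: both arguments construct a linear (hence continuous) right inverse of the summation operator $T\colon\prod_{i=1}^n\CS_i\to\CM$ — the paper does this via a Gram--Schmidt-style orthonormal basis adapted to the chain $\CS_1,\dots,\CS_n$, while you take an abstract complement of $\ker(T)$ — and then lower hemicontinuity follows by translating a given fibre element along this right inverse, exactly as in your $\mbf Z^k=\mbf Z+\Psi(Z^k-Z)$ step (which the paper leaves as ``follows immediately'').
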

\begin{proof}
Let $\langle\cdot,\cdot\rangle$ be an inner product on $\mathcal M$. Set $\CS_0:=\{0\}$. We claim that  there are natural numbers $0=m_0<m_1\leq ...\leq m_n$ and $Z_1,...,Z_{m_{n}}\in\bigcup_{i=1}^n\CS_i$ such that for all $i\in[n]$, it holds that $\{Z_{m_{i-1}+1},...,Z_{m_i}\}$ is an orthonormal basis of $\left\{X\in\CS_i\mid X\perp \tn{span}\{Z_1,...,Z_{m_{i-1}}\}\right\}$. 
Note that every $Z\in\mathcal M$ can be expressed as $Z=\sum_{i=1}^{m_n}\langle Z_i,Z\rangle Z_i$, hence the mapping $\Psi: Z\mapsto\alloc_Z^s$ defined by 
\begin{center}$\Psi(Z)_i:=\sum_{i=m_{i-1}+1}^{m_i}\langle Z_i,Z\rangle Z_i,\quad i\in[n],$\end{center}
is a selection of $\alloc_\cdot^s$ and continuous with respect to any norm on $\mathcal M$. Lower hemicontinuity follows immediately.
\end{proof}


\begin{thebibliography}{100}
{\scriptsize
\bibitem{Fatou}Acciaio, B. (2009), Short Note on Inf-Convolution Preserving the Fatou Property. \textit{Annals of Finance}, 5:281-287.
\bibitem{Acciaio}Acciaio, B., and G. Svindland (2009), Optimal Risk Sharing with Different Reference Probabilities. \textit{Insurance: Mathematics and Economics}, 44(3):426-433.
\bibitem{Ali}Aliprantis, C. and K. Border (2006), \textit{Infinite Dimensional Analysis -- A Hitchhiker's Guide}. 3rd edition, Springer.
\bibitem{Arrow}Arrow, K. J. (1963), Uncertainty and the welfare economics of medical care. \textit{American Economic Review}, 5:941-973.
\bibitem{Artzner} Artzner, P., F. Delbaen, J.-M. Eber, and D. Heath (1999), Coherent Measures of Risk. \textit{Mathematical Finance}, 9(3):203-228.
\bibitem{Baes}Baes, M., P. Koch-Medina, and C. Munari (2017), Existence, uniqueness, and stability of optimal portfolios of eligible assets. Preprint, arXiv:1702.01936. 
\bibitem{ElKaroui}Barrieu, P., and N. El Karoui (2005), Inf-convolution of risk measures and optimal risk transfer. \textit{Finance and Stochastics}, 9(2):269-298.
\bibitem{Barvinok}Barvinok, A. (2012), \textit{A Course in Convexity}. Springer.
\bibitem{Borch} Borch, K. (1962), Equilibrium in a reinsurance market. \textit{Econometrica}, 30:424-444.
\bibitem{Madan}Carr, P., H. Geman, and D. Madan (2001), Pricing and hedging in incomplete markets. \textit{Journal of Financial Economics} 62:131-167.
\bibitem{StrongFatou}Chen, S., N. Gao, and F. Xanthos (2018), The strong Fatou property of risk measures. \textit{Dependence Modeling}, to appear.
\bibitem{Cheridito}Cheridito, P., M. Kupper, and L. Tangpi (2017), Duality formulas for robust pricing and hedging in discrete time. \textit{SIAM Journal on Financial Mathematics}, 8(1):738-765.
\bibitem{Orlicz1} Cheridito, P., and T. Li (2009), Risk measures on Orlicz hearts. \textit{Mathematical Finance} 19(2):189-214.
\bibitem{Delbaen}Delbaen, F. (2002), Coherent Risk Measures on General Probability Spaces. In: \textit{Advances in Finance and Stochastics: Essays in Honor of Dieter Sondermann}, pp. 1-38, Springer.
\bibitem{Embrechts} Embrechts, P., H. Liu, and R. Wang (2017), Quantile-based risk sharing. \textit{Operations Research} 66(4):936-949.
\bibitem{eketem} Ekeland, I. and T\'emam, R. (1999), \textit{Convex Analysis and Variational Problems}, SIAM. 
\bibitem{ElKarouiRavanelli}El Karoui, N., and C. Ravanelli (2009), Cash sub-additive risk measures under interest rate ambiguity. \textit{Mathematical Finance}, 19:561-590.
\bibitem{FKM2013} Farkas, E. W., P. Koch Medina, and C. Munari (2013), Beyond cash-additive risk measures: When changing the num\'eraire fails. \textit{Finance and Stochastics}, 18(1):145-173.
\bibitem{FKM2014} Farkas, E. W., P. Koch Medina, and C. Munari (2014), Capital requirements with defaultable securities. \textit{Insurance: Mathematics and Economics}, 55:58-67.
\bibitem{FKM2015} Farkas, E. W., P. Koch Medina, and C. Munari (2015), Measuring risk with multiple eligible assets. \textit{Mathematics and Financial Economics}, 9(1):3-27.
\bibitem{FilKup}Filipovi\'c, D., and M. Kupper (2008), Equilibrium Prices for Monetary Utility Functions. \textit{International Journal of Theoretical and Applied Finance}, 11:325-343.
\bibitem{Svindland08}Filipovi\'c, D., and G. Svindland (2008), Optimal capital and risk allocations for law- and cash-invariant convex functions. \textit{Finance and Stochastics}, 12(3):423-439.
\bibitem{FoeSch} F\"ollmer, H., and A. Schied (2011), \textit{Stochastic Finance: An Introduction in Discrete Time}. 3rd edition, De Gruyter.
\bibitem{Gao}Gao, N., D. Leung, C. Munari, and F. Xanthos (2018), Fatou property, representations, and extensions of law-invariant risk measures on general Orlicz spaces. \textit{Finance and Stochastics}, 22(2):395-415.
\bibitem{HLP}Hardy, G. H., J. E. Littlewood, and G. P\'olya (1952), \textit{Inequalities}. 2nd edition, Cambridge University Press.
\bibitem{Husain}Husain, T. (1965), \textit{The Open Mapping and Closed Graph Theorems in Topological Vector Spaces}. Vieweg.
\bibitem{JST}Jouini, E., W. Schachermayer, and N. Touzi (2008), Optimal Risk Sharing for Law Invariant Monetary Utility Functions. \textit{Mathematical Finance}, 18:269-292.
\bibitem{Liebrich} Liebrich, F.-B., and G. Svindland (2017), Model Spaces for Risk Measures. \textit{Insurance: Mathematics and Economics}, 77:150-165. 
\bibitem{Munari} Munari, C. (2015), \textit{Measuring risk beyond the cash-additive paradigm}. PhD thesis, ETH
Z\"urich.
\bibitem{Pichler}Pichler, A. (2013), The Natural Banach Space for Version Independent Risk Measures. \textit{Insurance: Mathematics and Economics}, 53:405-415.
\bibitem{Ravanelli}Ravanelli, C., and G. Svindland (2014), Comonotone Pareto optimal allocations for law invariant robust utilities on $L^1$. \textit{Finance and Stochastics}, 18:249-269.
\bibitem{Rockafellar} Rockafellar, R.T. (1970), \textit{Convex Analysis}. Princeton University Press.
\bibitem{Rudin}Rudin, M. E. (1969), A new proof that metric spaces are paracompact. \textit{Proceedings of the American Mathematical Society} 20:603.
\bibitem{continuity}Svindland, G. (2010), Continuity properties of law-invariant (quasi-)convex risk functions on $L^\infty$. \textit{Mathematics and Financial Economics}, 3(1):39-43.
\bibitem{Subgradients}Svindland, G. (2010), Subgradients of law-invariant convex risk measures on $L^1$. 
\textit{Statistics \& Decisions}, 27(2):169-199.
\bibitem{Tsanakas}Tsanakas, A. (2009), To split or not to split: capital allocation with convex risk measures. \textit{Insurance: Mathematics and Economics}, 44(2):268-277.
\bibitem{Wang}Wang, R. (2016), Regulatory arbitrage of risk measures. \textit{Quantitative Finance}, 16(3):337-347. 
\bibitem{Wilson}Wilson, R. (1968), The theory of syndicates. \textit{Econometrica}, 36:119-132.
\bibitem{Dieudonne} Z\u{a}linsecu, C. (2002), \textit{Convex Analysis in General Vector Spaces}. World Scientific. 
\bibitem{Zheng} Zheng, X. Y. (2009), Pareto Solutions of Polyhedral-valued Vector Optimization Problems in Banach Spaces. \textit{Set-Valued Analysis}, 17:389-408.}
\end{thebibliography}
\end{document}